\newcommand{\appref}[1]{\hyperref[#1]{{Appendix~\ref*{#1}}}}
\newcommand{\be}{\begin{eqnarray} \begin{aligned}}
\newcommand{\ee}{\end{aligned} \end{eqnarray} }
\newcommand{\benn}{\begin{eqnarray*} \begin{aligned}}
\newcommand{\eenn}{\end{aligned} \end{eqnarray*}}
\newcommand{\cancel}[1]{} 
\newcommand*{\cB}{\mathcal{B}}
\newcommand*{\cC}{\mathcal{C}}
\newcommand*{\cE}{\mathcal{E}}
\newcommand*{\cH}{\mathcal{H}}
\newcommand*{\cL}{\mathcal{L}}
\newcommand*{\cN}{\mathcal{N}}
\newcommand*{\cQ}{\mathcal{Q}}
\newcommand*{\cR}{\mathcal{R}}
\newcommand*{\cP}{\mathcal{P}}
\newcommand*{\cS}{\mathcal{S}}
\newcommand*{\cU}{\mathcal{U}}
\newcommand*{\cT}{\mathcal{T}}
\newcommand*{\cV}{\mathcal{V}}
\newcommand*{\cX}{\mathcal{X}}
\newcommand*{\cY}{\mathcal{Y}}
\newcommand{\bc}{\begin{center}}
\newcommand{\ec}{\end{center}}
\newcommand{\id}{\mathbb{I}}
\newcommand{\e}{\mathrm{e}}
\newtheorem{theorem}{Theorem}[section]
\newtheorem{lemma}[theorem]{Lemma}
\newtheorem{corollary}[theorem]{Corollary}
\newcommand{\argmax}{\mathop{\mathrm{argmax}}\nolimits}
\def\id{\mathbb{I}}
\def\01{\{0,1\}}
\newcommand{\ket}[1]{|#1\rangle}
\newcommand{\proj}[1]{|#1\rangle\langle#1|}
\newcommand{\comment}[1]{}
\tikzset{
    operation/.style={
        draw,
        rectangle,
        minimum height=0.8cm
    },
    idealdecoder/.style={
        draw,
        isosceles triangle,
        isosceles triangle apex angle=90,
        shape border rotate=0
    },
    semicirc/.style args={#1,#2}{
    	semicircle,
    	minimum width=#1,
    	draw,
    	anchor=arc end,
    	rotate=#2
    },
    circ/.style={
    	circle,
    	minimum width=0.8cm,
    	draw
    },
    phasemod/.style={shape=rectangle,minimum size=3pt,thick,draw},
    arrowsmod/.style={thick,color=white},
}
\newcommand{\cwmod}{\ifthenelse{\the\pgfmatrixcurrentcolumn>1}{\arrow[arrowsmod,yshift=0.05cm]{l}\arrow[arrowsmod,yshift=-0.05cm]{l}}{}}
\newcommand{\vcwmod}[1]{
	\edef\start{\the\pgfmatrixcurrentrow-\the\pgfmatrixcurrentcolumn}
	\edef\end{\the\numexpr#1+\pgfmatrixcurrentrow\relax-\the\pgfmatrixcurrentcolumn}
	\expandafter\expandafter\expandafter\vcwexplicit\expandafter\expandafter\expandafter{\expandafter\start\expandafter}\expandafter{\end}
}
\newcommand{\cwbendmod}[1]{
	\vcw{#1}\cw
	\edef\cell{\the\pgfmatrixcurrentrow-\the\pgfmatrixcurrentcolumn}
	\expandafter\pgfutil@g@addto@macro\expandafter\tikzcd@atendlabels\expandafter{%
		\expandafter\latephasemod@end\expandafter{\cell}
	}
}
\newcommand{\cwbendmodmod}[1]{
	\vcw{#1}\cwmod
	\edef\cell{\the\pgfmatrixcurrentrow-\the\pgfmatrixcurrentcolumn}
	\expandafter\pgfutil@g@addto@macro\expandafter\tikzcd@atendlabels\expandafter{%
		\expandafter\latephasemod@end\expandafter{\cell}
	}
}
\newcommand{\latephasemod@end}[1]{
	\node [phasemod,inner sep=1.5pt] at (\tikzcdmatrixname-#1) {};
}
\newcommand*{\gkp}{\mathsf{GKP}}
\newcommand*{\pA}{\Pi_A}
\newcommand*{\pB}{\Pi_B}
\title{Oscillator-to-oscillator codes do not have a threshold}
\begin{document}
\author{Lisa H\"anggli}
 \author{Robert K\"onig}
 \affil{Zentrum Mathematik, Technical University of Munich, Germany}
\maketitle

\begin{abstract}
It is known that continuous variable quantum information cannot be protected against naturally occurring noise using Gaussian states and operations only. Noh et al.~(PRL 125:080503, 2020) proposed bosonic oscillator-to-oscillator codes 
 relying on  non-Gaussian resource states as an alternative, and showed that these encodings can lead to a reduction of the effective error strength at the logical level as measured by the variance of the classical  displacement noise channel. An oscillator-to-oscillator code  embeds $K$~logical bosonic modes (in an arbitrary state) into $N$~physical modes by means of a Gaussian $N$-mode unitary and  $N-K$~auxiliary one-mode Gottesman-Kitaev-Preskill-states.

Here we ask if -- in analogy to qubit error-correcting codes -- there are families of oscillator-to-oscillator codes with the following threshold property: They allow to convert physical displacement noise with variance below some threshold value to logical  noise with variance upper bounded by any (arbitrary) constant. We find that this is not the case if 
encoding unitaries involving a constant amount of squeezing and maximum likelihood error decoding  are used. We show a general lower bound on the logical error probability which is only a function of the amount of squeezing and independent of the number of modes. As a consequence,  any physically implementable family of oscillator-to-oscillator codes 
 combined with  maximum likelihood error decoding does not admit a threshold. 
\end{abstract}

\section{Introduction}
The theory of fault-tolerance builds on the fact that information can be protected by introducing redudancy combined with suitable recovery procedures. A prime example is the classical  repetition code encoding one logical bit into three physical bits according to
\begin{align}
0\mapsto 000\qquad\textrm{ and }\qquad 1\mapsto 111\ .
\end{align}  Assuming that each physical bit undergoes independent bit flip noise with probability~$\epsilon$, this encoding improves protection as long as $\epsilon<1/2$: The probability of erroneous correction is of order 
\begin{align}
\epsilon_L=O(\epsilon^2)\ .
\end{align}
Higher-order error suppression of the form~$\epsilon_L=O(\epsilon^N)$ can be achieved by the use of error-correcting codes with larger distance, obtained e.g., by concatenation. Such error supression mechanisms are a cornerstone of John von Neumann's pioneering theory of fault-tolerant classical computation presented in a paper titled ``Probabilistic Logics and the Synthesis of Reliable Organisms from Unreliable Components''~\cite{vonNeumann56}. For present-day efforts to synthesize  reliable quantum computers it is imperative to identify similar error suppression methods for quantum information in its various forms. Great strides have been made in this direction. For encoding qubits into qubits, the theory of stabilizer codes is highly developed. On the other hand, when encoding qubits into bosonic modes,  Gottesman-Kitaev-Preskill (GKP) codes~\cite{gkp01}, cat codes~\cite{crochaneetal99,leghtasetal13}, and binomial codes~\cite{albertetal16}
are prominent candidates for experimental realizations. 

More recently, Noh, Girvin, and Jiang~\cite{NohGirvinJiang20arX,NohGirvinJiang20} proposed new encodings of some number~$K$ of bosonic modes into $N>K$ bosonic modes along with suitable recovery maps. Such oscillator-to-oscillator codes can indeed achieve error suppression, as exemplified by the so-called  GKP two-mode squeezing code for which $(N,K)=(2,1)$: In~\cite{NohGirvinJiang20}, it is shown that physical-level phase space displacement noise (so-called classical noise) of standard deviation~$\sigma$ is reduced to  displacement noise with standard deviation 
\begin{align}
\sigma_L=O(\sigma^2\cdot polylog(1/\sigma))\label{eq:errorsuppressionosctoosc}
\end{align} at the logical level provided that the initial noise strength~$\sigma$ is below some constant.  Furthermore, higher-level suppression  can be achieved using $N>2$ physical modes and so-called GKP-squeezed repetition codes, as argued in~\cite{NohGirvinJiang20arX}. This achievement is based on a non-trivial use of non-Gaussian resources: It is well-known that  purely Gaussian fault-tolerance operations cannot provide such error suppression~\cite{niseketalnogoQerror09,vuillotetal}, since even Gaussian errors (such as the classical noise channel) cannot be corrected by means of Gaussian operations. The constructions introduced by Noh et al.~use GKP states as a non-Gaussian resource to circumvent these no-go results.

In the schemes considered in~\cite{NohGirvinJiang20arX,NohGirvinJiang20},  a $K$-mode bosonic state~$\ket{\Psi}$ is encoded with the map 
 \begin{align}
 \ket{\Psi}\mapsto U^{(N)}(\ket{\Psi}\otimes \ket{\gkp}^{\otimes N-K})\label{eq:GKPencodingmap}
 \end{align}
 by means of an $N$-mode Gaussian unitary~$U^{(N)}$ and $N-K$~copies of the canonical one-mode GKP state~$\ket{\gkp}$. The class of these codes will be referred to as oscillator-to-oscillator codes of GKP type here, or simply oscillator-to-oscillator codes. For the GKP two-mode squeezing code, the unitary~$U^{(2)}$ is a two-mode squeezing operation, whereas for the GKP-squeezed repetition code, a recursively defined unitary~$U^{(N)}_{rep}$ is used.

 Here we investigate to what extent the error suppression achieved by oscillator-to-oscillator codes can be exploited towards establishing a fault-tolerance threshold theorem for quantum computation with bosonic modes. 
Roughly, a threshold theorem amounts to the claim that for sufficiently weak noise (i.e., noise strength below some threshold value), errors can be recovered from with any desired accuracy at the cost of introducing a sufficient amount of redundancy. In order to 
formalize and motivate this problem, it is useful to review the case of qubits: Here  a rigorous threshold theorem for fault-tolerant quantum computation has been established in  a series of pioneering works~\cite{shor96faulttolerance,knilllaflammezurek96arX,knilllaflammezurek98resilientqc,knilllaflammezurek98resilientqc2,kitaev97,aharonov99}.   

\paragraph{Fault-tolerance threshold with qubit error correcting codes.}
For simplicity, let us  focus on the question of fault-tolerant memories.   To be concrete, let us assume that each physical qubit is  affected by a probabilistic Pauli noise channel~$\cN:\cB(\mathbb{C}^2)\rightarrow\cB(\mathbb{C}^2)$. The strength~$\delta$ of the noise can be measured e.g., by the diamond norm distance~$\delta=\|\cN-\mathsf{id}_{\cB(\mathbb{C}^2)}\|_{\diamond}$ of~$\cN$ to the single-qubit identity channel~$\mathsf{id}_{\cB(\mathbb{C}^2)}$. A quantum code encoding $K$~logical qubits into $N$~physical qubits is a $2^K$-dimensional subspace~$\cC_N\subset(\mathbb{C}^2)^{\otimes N}$. It can be understood as the image of an isometric encoding map from~$\mathbb{C}^2$ to $(\mathbb{C}^2)^{\otimes N}$ which takes the form~\eqref{eq:GKPencodingmap} for a qubit state $\ket{\Psi}$, an $N$-qubit encoding unitary~$U^{(N)}$, and  $\ket{\gkp}$ replaced by the computational basis state~$\ket{0}$. An associated recovery operation is a quantum channel~$\cR_N:\cB\left((\mathbb{C}^2)^{\otimes N}\right)\rightarrow\cB\left((\mathbb{C}^2)^{\otimes N}\right)$, typically composed of syndrome measurement and a  subsequent (conditional) unitary correction operation. 
We say that the pair $(\cC_N,\cR_N)$ recovers with accuracy~$\epsilon>0$ from noise of strength~$\delta$ if  for any $1$-qubit noise channel $\cN$ with strength~$\delta$, the composition~$\cR_N\circ\cN^{\otimes N}$ takes the set of states supported on~$\cC_N$ to itself and satisfies $\|\left.\cR_N\circ\cN^{\otimes N}\right|_{\cB(\cC_N)}-\mathsf{id}_{\cB(\cC_N)}\|_{\diamond}\leq \epsilon$.

With these definitions, the concept of a threshold can be formulated as follows: An (infinite) family $\{(\cC_N,\cR_N)\}_{N\in \Gamma}$ of codes (where $\cC_N\subset (\mathbb{C}^2)^{\otimes N}$) with associated recovery maps has a threshold if there is some value $\delta_0>0$ such that for any error strength~$\delta<\delta_0$, and any accuracy~$\epsilon>0$, there is some $N=N(\delta,\epsilon)\in\Gamma$ such that $(\cC_N,\cR_N)$ recovers with accuracy at least~$\epsilon$ from noise of strength~$\delta$. This property formalizes the notion that essentially ideal quantum memories can be emulated using a sufficient number of noisy qubits, provided that these are not too noisy (i.e., have a noise strength below some threshold). We note that in contrast to typical settings in   fault-tolerant quantum computing, where any physical operation is assumed to be imperfect, here  we consider a simplified notion of quantum memories by assuming that the recovery operation can be implemented perfectly. In fact, this assumption strenghtens the no-go result we find.

One of the crowning achievements of the theory of quantum error correction was to show that quantum code families for qubits exhibiting a threshold exist. In fact, not only do they exist, but we have explicit constructions at hand. Indeed, all that is needed is a family of  quantum error-correcting codes with a code distance scaling extensively with the number~$N$ of physical qubits. Such a code achieves error suppression which is exponential in~$N$. By choosing $N$~to be sufficiently large, this means  that any (sufficiently weak) preexisting physical noise can be converted to  noise below any desired strength at the logical level. Code families with macroscopic distance can be obtained by a variety of construction techniques.

To deploy such codes in the real world,  both the threshold value~$\delta_0$ and the actual dependence of $N=N(\delta,\epsilon)$ on the parameters~$\delta$ and $\epsilon$ are of key interest: They determine the required resources. For typical codes and decoders considered in the literature, $N$ scales as $N(\delta,\epsilon)=poly(\log 1/\delta,\log 1/\epsilon)$. 
The parameter~$N$ also determines the required amount of computational resources: For most code families (in particular stabilizer codes), the encoding unitary~$U^{(N)}$ involves $poly(N)$ elementary gates (e.g., one- and two-qubit gates). Similarly, for practical relevance, we typically are interested in recovery operations~$\cR_N$ that require~$poly(N)$ measurements and $poly(N)$-time classical computation. In summary, we have qubit codes that allow for efficient encoding and recovery/decoding as measured in terms of basic operations (one- and two-qubit gates, one-qubit measurements, and classical arithmetic). Most importantly, the nature of these operations does not change with increasing~$N$. In this sense, quantum fault-tolerance with qubit codes can  be considered strictly as a way of adding redudancy: Simply using more of the same (noisy) resources yields the desired effect.

\paragraph{No fault-tolerance threshold for oscillator-to-oscillator codes.} 
Motivated by the favorable properties of qubit quantum error-correcting codes, we ask if an analogous fault-tolerance threshold statement holds for oscillator-to-oscillator codes. A candidate formulation is immediately obtained by substituting single-qubit noise by a bosonic noise channel $\cN:\cB(L^2(\mathbb{R}))\rightarrow\cB(L^2(\mathbb{R}))$, which we will assume to be a classical noise channel with variance~$\sigma^2$. The question is then whether there is a family of subspaces~$\cC_N\subset L^2(\mathbb{R})^{\otimes N}$ that are images of encoding maps as in~\eqref{eq:GKPencodingmap}, each defined by a Gaussian unitary~$U^{(N)}$ on $N$~modes, together with suitable recovery operations~$\cR_N:\cB(L^2(\mathbb{R})^{\otimes N})\rightarrow\cB(L^2(\mathbb{R})^{\otimes N})$, such that the following holds: Provided that $\sigma<\sigma_0$ for some constant threshold~$\sigma_0$, there is some $N=N(\sigma,\sigma')$ for any $\sigma'>0$ such that the composition~$\cR_N\circ \cN^{\otimes N}$ is given by displacement noise of variance less than~$\sigma'^2$ when restricted to the code space~$\cC_N\cong L^2(\mathbb{R})^{\otimes K}$. In view of  error suppression properties such as~\eqref{eq:errorsuppressionosctoosc} achieved by certain oscillator-to-oscillator codes, such a fault-tolerance property may appear to be potentially feasible using these codes. A corresponding result would then suggest that the idea that ``more is better'' when it comes to fault-tolerance extends to the setting of oscillator-to-oscillator codes.

We show here that this is not the case if maximum likelihood error decoding is used: no family of oscillator-to-oscillator codes with physically meaningful encoding unitaries exhibits a threshold in the above sense. The key difference to the finite-dimensional setting is that the feasibility of implementing a bosonic unitary~$U$ does not depend on measures such as  its gate complexity only: In addition, the amount of squeezing it introduces is a key quantity for experimental purposes. This quantity, expressed by a ``squeezing measure'' $\mathsf{sq}(U)$ we introduce in Section~\ref{sec:quantifyingsqueezing}, should realistically be bounded by a constant.

For a two-mode squeezing unitary~$U^{(2)}$ as used in the GKP two-mode squeezing code, the squeezing measure~$\mathsf{sq}(U^{(2)})$ is in one-to-one correspondence with the so-called gain~$g(U^{(2)})$ more commonly used in physics.
In~\cite{NohGirvinJiang20}, the authors argue that  the error suppression~\eqref{eq:errorsuppressionosctoosc} results for a gain which scales as~$g^*\sim O((\sigma^2\log(1/\sigma))^{-1})+1/2$. The fact that this diverges as $\sigma\rightarrow 0$  gives a first indication that at least using the GKP two-mode squeezing code only (and possibly some kind of concatenation) may be problematic when trying to show a threshold. On the other hand, for the $N$-mode GKP-squeezed repetition code proposed in~\cite{NohGirvinJiang20arX}, the amount of squeezing~$\mathsf{sq}(U^{(N)}_{rep})$ of the encoding unitary~$U^{(N)}_{rep}$ diverges as $N\rightarrow\infty$. Thus this code family is not a candidate for giving a threshold theorem if we restrict our attention to constant-squeezing encoding unitaries.

Our no-go-result is not restricted to the $N$-mode repetition code only, however. Instead, we  show that for {\em any} family of $N$-mode Gaussian unitaries $\{U^{(N)}\}_{N}$ with the property that~$\mathsf{sq}(U^{(N)})$ is bounded by a constant independent of~$N$, the associated family of oscillator-to-oscillator codes does not exhibit a threshold. In particular, simply using more modes does not solve the fault-tolerance threshold issue unless increasingly stronger squeezing is used.

As indicated above, this result is established for the maximum likelihood error decoder based on GKP-type syndrome information. The strategy of this decoder is to identify the most likely error yielding the observed syndrome, and to subsequently correct for this error. It performs at least as well as the decoder used in~\cite{NohGirvinJiang20arX,NohGirvinJiang20} -- the latter is a heuristic approximation of maximum likelihood error decoding. While maximum likelihood error decoding is the natural choice when aiming to identify the physical error applied to the system, 
our work does not immediately provide insight on the performance of other decoding strategies.
 In particular, it does not exclude the possibility that maximum likelihood (error coset) decoding may outperform the decoder considered here. Such a decoder proceeds by identifying the most likely coset of errors (modulo the stabilizer group). This involves coset probabilities that typically cannot be evaluated in closed form. 
 
 \subsubsection*{Outline}
  
The remainder of the paper is organized as follows: We first introduce oscillator-to-oscillator codes and the physical setup considered here in  detail in Section~\ref{sec:oscillatortooscillatorcodes}. In Section~\ref{sec:problemreformulation}, we reformulate the problem of analysing
the ability of oscillator-to-oscillator codes to recover from errors as a purely classical estimation problem. In the same section, we also present our no-go result for oscillator-to-oscillator codes. The underlying  technical results concerning the classical estimation problem are established  in Section~\ref{sec:classicalanalysis}.

\section{Oscillator-to-oscillator codes}\label{sec:oscillatortooscillatorcodes}
In this section we introduce oscillator-to-oscillator codes and the necessary corresponding background. In particular, we explain how squeezing in Gaussian unitaries can be quantified, and specify the noise channel and  recovery map  considered. We also introduce the figure of merit for the proof of our no-go theorem for oscillator-to-oscillator codes: the decoding success probability.
    
\subsection{Quantifying squeezing in $N$-mode Gaussian unitaries}\label{sec:quantifyingsqueezing}
  An $N$-mode  bosonic system is described by $N$~pairs $(Q_j,P_j)$, $j=1,\ldots,N$, of position and momentum operators associated with the quadratures of the electromagnetic field modes. The canonical commutation relations can be compactly described by gathering these mode operators in a tuple~$R\coloneqq (Q_1,P_1,\dots, Q_N,P_N)$: They take the form
  \begin{align}
    [R_j,R_k]=iJ_{jk}\id\qquad\textrm{ where }\qquad
J=\bigoplus_{j=1}^N\begin{pmatrix}
0&1\\-1&0
\end{pmatrix}\ .
  \end{align}
In its most general meaning, the term Gaussian unitary refers to a unitary generated by a Hamiltonian which is at most quadratic in the mode operators. Here we use the term Gaussian unitary exclusively for those generated by purely quadratic Hamiltonians, and treat linear contributions and the corresponding unitaries separately in Section~\ref{subsec:displacements}. 
The action of a Gaussian unitary~$U$ is completely determined by an element $S\in  Sp(2N,\mathbb{R})$ of the symplectic linear group $ Sp(2N,\mathbb{R})\coloneqq \left\{S\in\mathrm{Mat}_{2N\times2N}(\mathbb{R})\ |\ S^TJS=J\right\}$: When conjugated by the unitary $U=U_S$, the  quadrature operators transform linearly as 
  \begin{align}
U_SR_jU_S^{\dagger}=\sum_{k=1}^{2N}S_{j,k}R_k\ ,\quad j=1,\dots,2N\ .   \label{eq:quadraticoperations}
  \end{align}

  A special role is played by the harmonic oscillator Hamiltonian $H_0=\sum_{j=1}^N (Q_j^2+P_j^2)$, which determines the energy of a state and generates rotations in phase space. A unitary~$U$ is called passive if it commutes with~$H_0$, i.e., if it preserves the energy. Passive Gaussian unitaries are particularly easy to implement: every such unitary is a composition of phaseshifters and beamsplitters, see~\cite{reckzeilingeretal94}. The latter belong to the set of operations referred to as passive linear optics, and are typically readily available. The set of passive Gaussian unitaries is in one-to-one correspondence (via the map $S\mapsto U_S$) with the group~$K(N)= Sp(2N,\mathbb{R})\cap O(2N,\mathbb{R})$ of orthogonal symplectic matrices (here $O(2N,\mathbb{R})=\left\{O\in\mathrm{Mat}_{2N\times2N}(\mathbb{R})\ |\ O^TO=I\right\}$), as the orthogonality constraint ensures that~$H_0$ is left invariant under conjugation.

  In contrast to passive Gaussian unitaries, active ones are significantly more challenging to implement in general. Such evolutions involve squeezing, i.e., a process where e.g., photon pairs are created or annihilated. Realizing such dynamics requires elements from non-linear quantum optics such as birefringent materials. An example is the one-mode squeezing unitary~$U_z$ associated with the symplectic matrix $\mathsf{diag}(z,1/z)$, $z\in \mathbb{R}\backslash\{1\}$. It reduces the quantum noise associated with one quadrature while increasing that of the other. This is particularly challenging to implement for large~$z$: With present technology, only squeezing of roughly~$10 dB$ (corresponding to $z\sim \sqrt{10}$) is feasible~\cite{tendbsqueezing}.

  In fact, one-mode squeezing unitaries together with phaseshifters and beamsplitters generate the set of all Gaussian unitaries according to a certain normal form for symplectic matrices: For any $S\in Sp(2N)$ there are $O_1,O_2\in K(N)$ such that
  \begin{align}
S&=O_1 ZO_2\ ,\quad\textrm{ with }\quad Z=\mathsf{diag}(z_1,1/z_1,\ldots,z_n,1/z_n) \quad \textrm{ where } \quad z_1,\ldots,z_n\in(0,\infty)\ , \label{eq:eulerdecomposition}
  \end{align}
  according to the Euler decomposition (see e.g.,~\cite{arvindtutta}).  In particular, the associated Gaussian unitary~$U_S$ can be implemented by realizing the unitaries corresponding to $O_1$ and $O_2$ (using passive linear optics only), and by applying the one-mode-squeezing unitary $U_{z_j}$ to every mode $j=1,\ldots,N$. 

  Note that for any passive Gaussian unitary $U=U_S$, we clearly have $z_j=1$ for all $j=1,\ldots,N$ in the decomposition~\eqref{eq:eulerdecomposition}. Moreover, we can also use Eq.~\eqref{eq:eulerdecomposition} more generally (going beyond passive unitaries) to quantify the amount of squeezing of any Gaussian unitary $U_S$ by the squeezing measure
  \begin{align}
\mathsf{sq}(U_S)\coloneqq \max_{j=1,\ldots,N} \{z_j,1/z_j\}\ .
  \end{align}
  For later use, we note that Eq.~\eqref{eq:eulerdecomposition} immediately implies that the matrix $S^TS$ is positive definite and
  \begin{align}
\mathsf{sq}(U_S)&=\sqrt{\lambda_{\max}\left(S^TS\right)}\ ,\label{eq:squeezingmeasureU}
  \end{align}
  where $\lambda_{\max}(M)$ denotes the maximal eigenvalue of a positive definite matrix~$M$.   The quantity~$\mathsf{sq}(U)$ describes the maximal amount of squeezing required for any single mode when implementing the unitary~$U$.

  \subsection{Displacement noise\label{subsec:displacements}}
  Hamiltonians that are linear combinations of the mode operators generate phase space displacements.
  The corresponding unitaries are parameterized by  vectors~$\xi\in\mathbb{R}^{2N}$: For $\xi\in\mathbb{R}^{2N}$, we denote by $D(\xi)$ the (unitary) displacement operator 
\begin{align}
D(\xi)\coloneqq \e^{i\xi^TJR}\ .
\end{align}
It transforms the mode operators as
\begin{align}
D(\xi) R_j D(\xi)^\dagger &=R_j+\xi_j\id\qquad\textrm{ for }\qquad j=1,\ldots,2N\ .
\end{align}
The displacement operators satisfy the Weyl relations
\begin{align}
D(\xi) D(\eta) &=e^{-\frac{i}{2}\xi^T J\eta}D(\xi+\eta)\qquad\textrm{ for }\qquad \xi,\eta\in\mathbb{R}^{2N}\ , \label{eq:weylrelations}
\end{align}
and thus
\begin{align}
D(\xi) D(\eta)&=e^{-i\xi^T J\eta} D(\eta)D(\xi)\qquad\textrm{ for }\qquad \xi,\eta\in\mathbb{R}^{2N}\ .\label{eq:commutationdisplac}
  \end{align} 

An often-studied single-mode error model in linear optics is the Gaussian classical noise channel $\cN_{\sigma^2}:\cB(L^2(\mathbb{R}))\rightarrow\cB(L^2(\mathbb{R}))$ which randomly displaces a state in phase space according to a centered normal distribution with variance~$\sigma^2$. It is given by
\begin{align}
\cN_{\sigma^2}(\rho)&= \frac{1}{2\pi \sigma^2 }\cdot \int_{\mathbb{R}^{2}} e^{-\frac{\|\xi\|^2}{2\sigma^2}}D(\xi)\rho D(\xi)^\dagger d^{2}\xi\ .\label{eq:cNrho}
\end{align}
The channel~\eqref{eq:cNrho} is referred to as classical noise since we can think of it as first drawing a classical random variable~$Z\sim\cN(0,\sigma^2 I_2)$ according to a centered normal distribution, and then displacing by~$Z$.

In the following, we will assume that the physical-level noise on each mode is given by $\cN_{\sigma^2}$. In other words, for an $N$-mode system, the corresponding noise map is~$\cN_{\sigma^2}^{\otimes N}$.

  \subsection{Error recovery by modular measurements in the GKP state\label{subsec:singlegkp}}
  Gottesman-Kitaev-Preskill codes~\cite{gkp01} are designed to protect against displacement errors. To define oscillator-to-oscillator codes, we need a particular instance of this code family. In fact, we only need a certain bosonic stabilizer state   (spanning a $1$-dimensional subspace) instead of a stabilizer code.  For a single bosonic mode, the canonical GKP state~$\ket{\gkp}$ is the simultaneous $+1$-eigenstate of the two commuting displacement operators
  \begin{align}\label{eq:defstabilizersgkp}
S_Q\coloneqq e^{i\sqrt{2\pi}Q}\qquad\textrm{ and }\qquad S_P\coloneqq e^{i\sqrt{2\pi}P}\ .
\end{align}
  (The state~$\ket{\gkp}$ is uniquely defined up to a global phase which is irrelevant here.) We remark  that~$\ket{\gkp}$ is not an element of the Hilbert space~$L^2(\mathbb{R})$ but a distribution (see~\cite{bouzouina1996} for a rigorous treatment). In practice, $\ket{\gkp}$ needs to be replaced by an approximate GKP state with a finite average photon number. 
  
   It is instructive to consider how an unknown displacement error applied to the state~$\ket{\gkp}$ can be recovered from. Here we follow the same procedure as for a proper stabilizer code: A measurement of the stabilizer operators -- with outcome called the syndrome -- is followed by the (conditional) application of a unitary correction operation.

  \subsubsection{Syndrome measurement and errors} The fact that the operators~\eqref{eq:defstabilizersgkp} stabilize the state~$\ket{\gkp}$ means that $\ket{\gkp}$ has definite values of the position and momentum modulo~$\sqrt{2\pi}$. More precisely, measuring the stabilizer operators amounts to a joint measurement of the pair of ``modular'' operators
  \begin{align}
    (Q\mod \sqrt{2\pi},P\mod \sqrt{2\pi})\ ,
  \end{align}
  (which, unlike $Q$ and $P$ themselves, commute). This joint measurement can be realized by consuming two canonical GKP states and using Gaussian operations and quadrature measurements (cf.\ Fig.~\ref{fig:modulomeasurement}), see~\cite{gkp01}. When measuring the state~$\ket{\gkp}$, the measurement yields $(0,0)\in [-\sqrt{\pi/2},\sqrt{\pi/2})$.

Suppose an ideal GKP state is corrupted by an (unknown) displacement: We are given the state~$D(\xi)\ket{\gkp}$ for an unknown displacement (error) vector~$\xi\in\mathbb{R}^2$. Applying the modular measurement to the state~$D(\xi)\ket{\gkp}$ yields the pair $s=s(\xi)=(s_1,s_2)\in [-\sqrt{\pi/2},\sqrt{\pi/2})^2$ of syndromes  where
\begin{align}
  \begin{matrix}
s_1&=&\xi_1\mod \sqrt{2\pi}\ ,\\
s_2&=&\xi_2\mod \sqrt{2\pi}\ .
\end{matrix}\label{eq:syndromese}
\end{align}
Indeed, it follows  from the commutation relations~\eqref{eq:commutationdisplac} and the fact that the operators~\eqref{eq:defstabilizersgkp} stabilize~$\ket{\gkp}$ that $D(\xi)\ket{\gkp}$ is an eigenstate of $S_Q$ with eigenvalue~$e^{i\sqrt{2\pi}\xi_1}$ and of $S_P$ with eigenvalue~$e^{i\sqrt{2\pi}\xi_2}$. This implies~\eqref{eq:syndromese}.

\DeclareExpandableDocumentCommand{\phasenw}{O{}m}{|[phase,#1,label={[phase label,#1]#2}]| {} } 

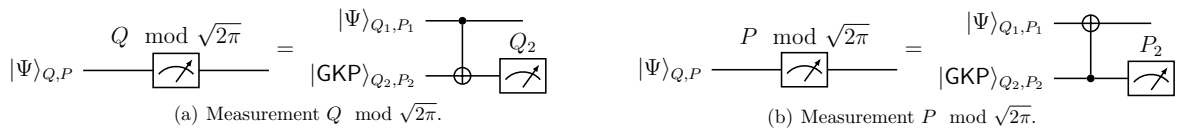
\begin{figure}[h]
\scalebox{0.73}{
\subfigure[Measurement $Q\mod \sqrt{2\pi}$.\label{fig:modulomeasurementq}]{\begin{quantikz}[row sep={1cm,between origins}]
\lstick{$\ket{\Psi}_{Q,P}$} &\meter{$Q \mod \sqrt{2\pi}$}&\qw
\end{quantikz}
=\begin{quantikz}[row sep={1cm,between origins}]
\lstick{$\ket{\Psi}_{Q_1,P_1}$} &\ctrl{1}&\qw \\
\lstick{$\ket{\gkp}_{Q_2,P_2}$} &\targ{} &\meter{$Q_2$}
\end{quantikz}\qquad\quad}\\ \\
\subfigure[Measurement $P\mod \sqrt{2\pi}$.\label{fig:modulomeasurementp}]{\begin{quantikz}[row sep={1cm,between origins}]
\lstick{$\ket{\Psi}_{Q,P}$} &\meter{$P \mod \sqrt{2\pi}$}&\qw
\end{quantikz}
=\begin{quantikz}[row sep={1cm,between origins}]
\lstick{$\ket{\Psi}_{Q_1,P_1}$} &\targ{}&\qw \\
\lstick{$\ket{\gkp}_{Q_2,P_2}$} &\ctrl{-1} &\meter{$P_2$}
\end{quantikz} }
}
\caption{Measurement of the quadrature operators modulo $\sqrt{2\pi}$. The left hand sides of the equalities in Figures~\ref{fig:modulomeasurementq} and~\ref{fig:modulomeasurementp} describe the measurement of $Q$ and $P$ modulo $\sqrt{2\pi}$ of a mode in the state $\ket{\Psi}\equiv\ket{\Psi}_{Q,P}$ respectively. This is realized by the circuits on the right hand side of the corresponding equality: In Figure~\ref{fig:modulomeasurementq}, the input state with associated quadratures $Q_1$, $P_1$ is coupled to an auxiliary mode in the state $\ket{\gkp}$ with associated quadratures $Q_2$, $P_2$ via the $\mathsf{SUM}_{1,2}$-gate ($=e^{-iQ_1P_2}$) depicted by the controlled-$\oplus$. This transforms the quadratures as $Q_1\mapsto Q_1$, $P_1\mapsto P_1-P_2$, $Q_2\mapsto Q_2+Q_1$, and $P_2\mapsto P_2$. As the second mode is initially in the state $\ket{\gkp}$ we have $Q_2+Q_1\mod \sqrt{2\pi}=Q_1\mod \sqrt{2\pi}$, and thus the output of the measurement of the $Q$-quadrature of the second mode modulo $\sqrt{2\pi}$ gives us the desired quantity. The measurement circuit for the $P$-quadrature follows the same strategy.}\label{fig:modulomeasurement}
\end{figure}

\subsubsection{Error recovery: maximum likelihood error decoding} When trying to use the syndrome information~\eqref{eq:syndromese} to figure out what the error~$\xi$ was (in order to then recover by applying the inverse displacement), one should take into account the prior distribution over errors. For example, if we are dealing with the classical noise channel~\eqref{eq:cNrho}, that is, if we are in fact measuring the noise-corrupted state~$\cN_{\sigma^2}(\proj{\gkp})$, then   the error~$\xi\in\mathbb{R}^2$ is distributed according to~$\cN(0,\sigma^2 I_2)$. The maximum likelihood error decoding problem then seeks to find the most likely error~$\hat{\xi}(s)$ consistent with the syndrome~$s=(s_1,s_2)$, i.e., satisfying $\hat{\xi}_1\mod\sqrt{2\pi}=s_1$ and $\hat{\xi}_2\mod\sqrt{2\pi}=s_2$. The corresponding recovery success probability~$\Pr_\xi[\hat{\xi}(s(\xi))=\xi]$ is well-studied: It is known as the informed unwrapping problem of modulo reduced Gaussian vectors (see Section~\ref{sec:informedunwrappingmodulo}). The recovery success probability directly gives the probability that the unitary correction operation~$D(\hat{\xi}(s(\xi))^{-1}$ applied after the error~$D(\xi)$ gives the identity and therefore the action of the latter is reversed.

The recovery procedure described here is analogous as the one used with GKP-codes encoding a finite-dimensional subspace. We note that in the special case considered here, where $1$-dimensional~$\mathbb{C}\ket{\gkp}$ 
is being protected, there is a simpler recovery strategy than the one described above: just inverting the displacement described by the syndrome vector does the job. This is due to the fact that we are considering a $1$-dimensional space: reverting the syndrome displacement results in a net displacement belonging to the stabilizer group and thus leaves the state~$\ket{\gkp}$ invariant. 

\subsection{Oscillator-to-oscillator codes}\label{subsec:oscillatortooscillatorcodes}
Here we briefly review the definition of oscillator-to-oscillator codes introduced by  Noh, Girvin and Jiang in~\cite{NohGirvinJiang20arX,NohGirvinJiang20}. An oscillator-to-oscillator code encodes $K$~bosonic modes into~$N$ bosonic modes by means of a Gaussian $N$-mode unitary~$U^{(N)}$ according to the map~\eqref{eq:GKPencodingmap}.
The code is the image of this map and is isomorphic to~$L^2(\mathbb{R})^{\otimes K}$. Any  code state $\ket{\bar{\Psi}}\coloneqq U^{(N)}(\ket{\Psi}\otimes\ket{\gkp}^{\otimes N-K})$ is clearly stabilized by the operators
\begin{align}
  U^{(N)}S_{Q_j}(U^{(N)})^\dagger\qquad\textrm{ and }\qquad  U^{(N)}S_{P_j}(U^{(N)})^\dagger\qquad\textrm{ where }\qquad j=K+1,\ldots,N\ , \label{eq:stabilizergeneratorsgkpstabilizer}
\end{align}
and where $S_{Q_j}\coloneqq e^{i\sqrt{2\pi}Q_j}$ and $S_{P_j}\coloneqq e^{i\sqrt{2\pi}P_j}$. We note that the term GKP-stabilizer code is sometimes used for these codes, although this is somewhat ambiguous since the original GKP code by Gottesman, Kitaev, and Preskill~\cite{gkp01} is itself a kind of stabilizer code. We will stick to the term oscillator-to-oscillator code here.

In the following, we  assume that $U^{(N)}=U_S$ is the Gaussian operation associated with a symplectic matrix $S\in Sp(2N,\mathbb{R})$, cf.~\eqref{eq:quadraticoperations}.

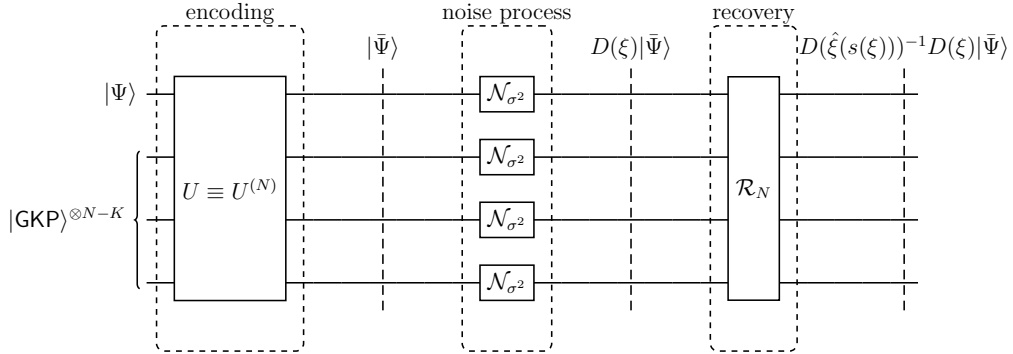
\begin{figure}[h]
\centering
\scalebox{0.73}{
  \begin{quantikz}
  \lstick{$\ket{\Psi}$}  & \gate[wires=4]{U\equiv U^{(N)}}\gategroup[4,steps=1,style={dashed,rounded corners,inner xsep=5pt,inner ysep=22pt},background]{{encoding}}& \qw & \qw &\qw \slice[style=black]{$\ket{\bar{\Psi}}$} & \qw & \qw & \qw &\gate{\cN_{\sigma^2}} \gategroup[4,steps=1,style={dashed,rounded corners,inner xsep=5pt,inner ysep=22pt},background]{{noise process}}& \qw & \qw & \qw \slice[style=black]{$D(\xi)\ket{\bar{\Psi}}$} &\qw & \qw & \qw &\gate[wires=4]{\cR_N}\gategroup[4,steps=1,style={dashed,rounded corners,inner xsep=5pt,inner ysep=22pt},background]{{recovery}}& \qw & \qw & \qw &\qw \slice[style=black]{$D(\hat{\xi}(s(\xi)))^{-1}D(\xi)\ket{\bar{\Psi}}$} &\qw  \\
  \lstick[3]{$\ket{\gkp}^{\otimes N-K}$}   & &\qw & \qw & \qw & \qw & \qw & \qw &\gate{\cN_{\sigma^2}}&\qw & \qw & \qw &\qw & \qw & \qw & &\qw & \qw & \qw & \qw &\qw \\
 & &\qw & \qw & \qw &\qw & \qw & \qw &\gate{\cN_{\sigma^2}} &\qw & \qw & \qw &\qw & \qw & \qw & & \qw & \qw & \qw & \qw &\qw\\
  & &\qw & \qw & \qw &\qw & \qw & \qw &\gate{\cN_{\sigma^2}} &\qw & \qw & \qw  &\qw & \qw & \qw & & \qw & \qw & \qw & \qw &\qw
   \end{quantikz}}
  \caption{Error correction circuit for oscillator-to-oscillator codes. The encoding consists of the application of the $N$-mode unitary $U^{(N)}$ to the $K$-mode input state $\ket{\Psi}$ and $N-K$ modes in the canonical GKP state $\ket{\gkp}$, yielding the encoded state $\ket{\bar{\Psi}}$. In the subsequent noise process, the Gaussian classical noise channel $\cN_{\sigma^2}$, which applies a random displacement according to a centered normal distribution with variance $\sigma^2$, is applied to each mode. This process can equivalently be described as  drawing a $2N$-dimensional real vector $\xi\sim\cN(0,\sigma^2 I_{2N})$  according to the  centered normal distribution with variance~$\sigma^2$, and  subsequently applying the displacement~$D(\xi)$ to the encoded state~$\ket{\bar{\Psi}}$. The last step, i.e., the recovery, is described in more detail in Fig.~\ref{fig:decoding}.}\label{fig:eccircuit}
\end{figure}

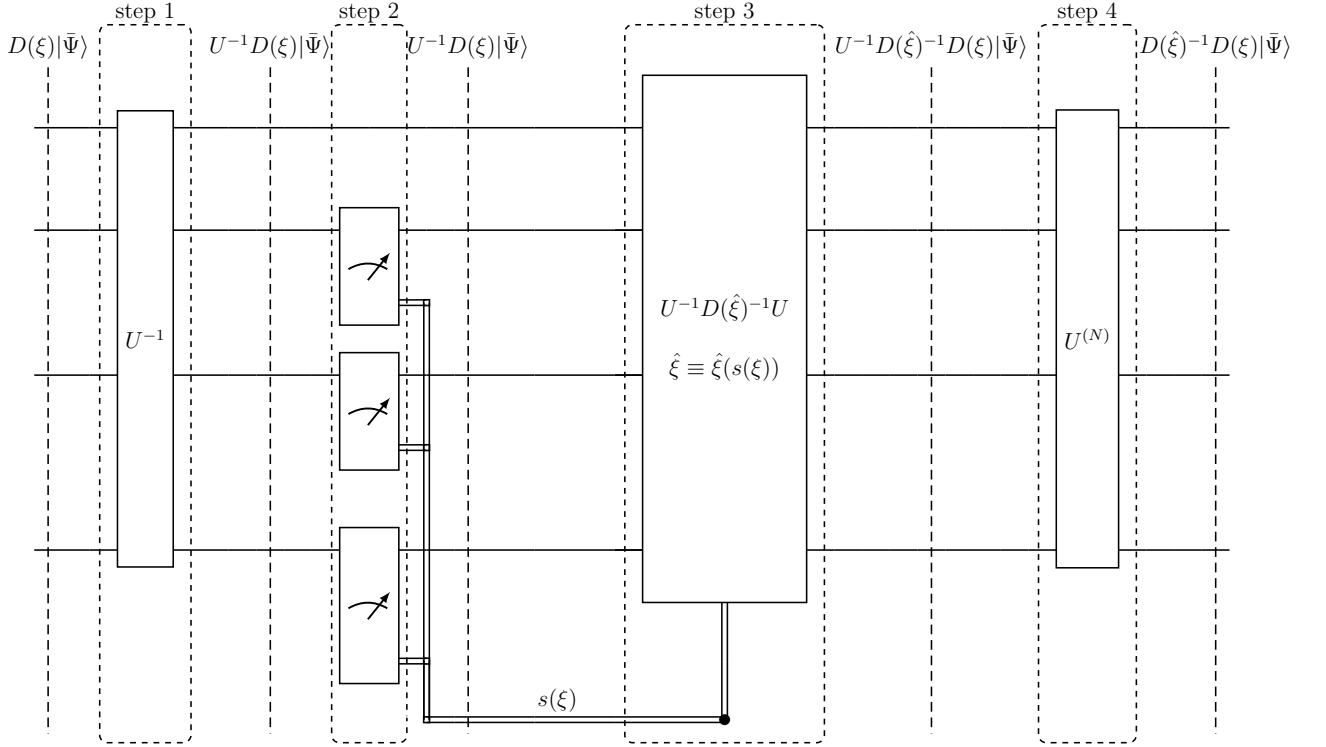
\begin{figure}[h]
\centering
\scalebox{0.73}{
\begin{quantikz}
\slice[style=black]{$D(\xi)\ket{\bar{\Psi}}$} 
&\qw &\qw & \gate[6,nwires={3,5}]{U^{-1}}\gategroup[7,steps=1,style={dashed,rounded corners,inner xsep=5pt,inner ysep=22pt},background]{{step $1$}}
&\qw &\qw &\slice[style=black]{$U^{-1}D(\xi)\ket{\bar{\Psi}}$}\qw&\qw &\qw  
&\qw \gategroup[7,steps=1,style={dashed,rounded corners,inner xsep=0pt,inner ysep=22pt},background]{{step $2$}} 
& \qw &\qw \slice[style=black]{$U^{-1}D(\xi)\ket{\bar{\Psi}}$}&\qw &\qw 
&\gate[6,nwires={3,5}]{\begin{array}{c} U^{-1} D(\hat{\xi})^{-1} U \\ \\ \hat{\xi}\equiv \hat{\xi}(s(\xi)) \end{array}}\gategroup[7,steps=1,style={dashed,rounded corners,inner xsep=5pt,inner ysep=22pt},background]{{step $3$}} 
& \qw &\qw &\qw & \qw \slice[style=black]{$U^{-1}D(\hat{\xi})^{-1}D(\xi)\ket{\bar{\Psi}}$} &\qw &\qw &\qw &\qw 
&\gate[6,nwires={3,5}]{U^{(N)}}\gategroup[7,steps=1,style={dashed,rounded corners,inner xsep=5pt,inner ysep=22pt},background]{{step $4$}}
& \qw &\qw &\qw \slice[style=black]{$D(\hat{\xi})^{-1}D(\xi)\ket{\bar{\Psi}}$}   &\qw \\
&\qw &\qw & &\qw &\qw &\qw &\qw &\qw  &\gate[2,nwires={2}]{\begin{tikzpicture}
\draw[->,line width=1.0pt,-latex] (0.5,0) -- (0.9, 0.5);
\coordinate (G) at (0.15,0.2);
\coordinate (R) at (0.85,0.2);
\path [bend right,line, line width=1.0pt] (R) edge (G);
\end{tikzpicture}} &\qw & \qw &\qw &\qw &\qw & 			 	\qw &\qw  & \qw &\qw &\qw &\qw &\qw &\qw & & \qw & \qw &\qw &\qw\\
& & & & & & & &  & &\cwbendmod{2} &  & & & & 			 	 &  &  & & & & & & &  &  & &\\
&\qw &\qw & &\qw &\qw &\qw &\qw &\qw  &\gate[2,nwires={2}]{\begin{tikzpicture}
\draw[->,line width=1.0pt,-latex] (0.5,0) -- (0.9, 0.5);
\coordinate (G) at (0.15,0.2);
\coordinate (R) at (0.85,0.2);
\path [bend right,line, line width=1.0pt] (R) edge (G);
\end{tikzpicture}}  &\qw  & \qw &\qw &\qw &\qw & 			 	\qw &\qw  & \qw &\qw &\qw &\qw &\qw &\qw & & \qw & \qw &\qw &\qw\\
& & & & & & & &  & &\cwbendmod{2} &  & & & & 			 	 &  &  & & & & & & &  &  & &\\
&\qw &\qw & &\qw &\qw &\qw &\qw &\qw  &\gate[2,nwires={2}]{\begin{tikzpicture}
\draw[->,line width=1.0pt,-latex] (0.5,0) -- (0.9, 0.5);
\coordinate (G) at (0.15,0.2);
\coordinate (R) at (0.85,0.2);
\path [bend right,line, line width=1.0pt] (R) edge (G);
\end{tikzpicture}}  &\qw & \qw &\qw &\qw &\qw & 			 	\qw &\qw  & \qw &\qw &\qw &\qw &\qw &\qw & & \qw & \qw &\qw &\qw\\
& & & & & & & &  & &\cwbendmod{1} &  &\begin{array}{c} \\ \\ \qquad \vspace{-0.4cm}s(\xi) \end{array}   & & \vcw{-1} &			 	 &  &  & & & & & & &  &  & &\\
&    &    & &    &    &    &    &    &	 &  \cwbendmodmod{-1}&\cw &\cw & \cw&\cwbend{-1} &    &    &     &    &    &    &    &    & &     &     &    &
   \end{quantikz}}
  \caption{Recovery step of the error correction circuit for oscillator-to-oscillator codes (cf.\ Fig.~\ref{fig:eccircuit}). The input is the corrupted state $D(\xi)\ket{\bar{\Psi}}$ after the noise process. In step $1$ the encoding is reversed, resulting in the state $(U^{(N)})^{-1}D(\xi)\ket{\bar{\Psi}}$. In step $2$, the syndrome $s(\xi)$ is measured, i.e., for every mode $j=K+1,\dots,N$ the quadrature operators $Q_j=R_{2j-1}$, $P_j=R_{2j}$ are measured modulo $\sqrt{2\pi}$ (cf.\ Fig.~\ref{fig:modulomeasurement}) yielding the entries $s_{2j-1-2K}$, $s_{2j-2K}$ of the $2(N-K)$-dimensional vector~$s$ respectively.
   The measurement result is a deterministic function~$s=s(\xi)$ of the error~$\xi$ (see Eq.~\eqref{eq:sjdetermin}). In particular, the measurement does not change the state. The classical syndrome measurement outcome is then used to compute the correction operation $(U^{(N)})^{-1}D(\hat{\xi}(s(\xi)))^{-1}U^{(N)}$, which is applied in step $3$ and yields the state $(U^{(N)})^{-1}D(\hat{\xi}(s(\xi)))^{-1}D(\xi)\ket{\bar{\Psi}}$. Finally, in step $4$, the corrected state is encoded again. The output is the state $D(\hat{\xi}(s(\xi)))^{-1}D(\xi)\ket{\overline{\Psi}}$.}\label{fig:decoding}
\end{figure}

\subsubsection{Syndrome measurement and errors}
A (simultaneous) measurement of the family of commuting operators~\eqref{eq:stabilizergeneratorsgkpstabilizer}
is equivalent to the joint measurement of the commuting set of the $2(N-K)$ modular operators
\begin{align}
\left(\sum_{k=1}^{2N} S_{j,k} R_k\right)\mod \sqrt{2\pi}\qquad\qquad\textrm{ where }\qquad j=2K+1,\ldots,2N\ ,\label{eq:modularoperatorstobemeasured}
\end{align}
according to Eq.~\eqref{eq:quadraticoperations}. Measuring a corrupted code state~$D(\xi)\ket{\bar{\Psi}}$, where $\xi\in\mathbb{R}^{2N}$, yields the syndrome $s=(s_1,\ldots,s_{2(N-K)})\in [-\sqrt{\pi/2},\sqrt{\pi/2})^{2(N-K)}$ with 
\begin{align}
s_j &= \left(\sum_{k=1}^{2N} S_{2K+j,k} \xi_k\right)\mod \sqrt{2\pi}\qquad\qquad\textrm{ for }\qquad j=1,\ldots,2(N-K)\ .\label{eq:sjdetermin}
\end{align}
(This again follows from the commutation relations~\eqref{eq:commutationdisplac}.) We will write $s=s(\xi)$ to emphasize that the syndrome is a function of the displacement vector~$\xi$. 
This joint measurement  can for example be implemented by applying $(U^{(N)})^{-1}$ to the state to be measured
using the measurement circuits mentioned in Section~\ref{subsec:singlegkp} to
measure the one-mode modular operators $(S_{Q_j},S_{P_j})$ for $j=K+1,\ldots,N$, and applying~$U^{(N)}$ again to the post-measurement state (cf.\ Fig.~\ref{fig:decoding}). 

\subsubsection{Error recovery and logical error}
Recovery from  a displacement error~$D(\xi)$ proceeds as in any stabilizer code by first extracting the syndrome~$s\in [-\sqrt{\pi/2},\sqrt{\pi/2})^{2(N-K)}$ by measurement, and subsequently applying a unitary correction operation to the post-measurement state. In the case under consideration, a recovery strategy is specified by an estimator (function)
\begin{align}
  \begin{matrix}
    \hat{\xi}\colon &[-\sqrt{\pi/2},\sqrt{\pi/2})^{2(N-K)}&\rightarrow &\mathbb{R}^{2N}\\
    & s & \mapsto & \hat{\xi}(s)
    \end{matrix}\label{eq:estimatorfct}
  \end{align}
for the actual error~$\xi$, based on the syndrome~$s$. When observing the syndrome~$s$,  the correction operation~$D(\hat{\xi}(s))^{-1}$ is applied. The function~$\hat{\xi}$  is typically chosen such that the syndrome for the error $D(\hat{\xi}(s))$ is identical to~$s$ (i.e., $s(\hat{\xi}(s))=s$), to ensure that this operation returns the state to the code space.

The resulting state when starting from the corrupted state $D(\xi)\ket{\bar{\Psi}}$ is thus 
\begin{align}
D(\hat{\xi}(s(\xi)))^{-1}D(\xi)\ket{\bar{\Psi}}\ .
\end{align}
Up to an irrelevant global phase (cf.~\eqref{eq:weylrelations}), the resulting effect on the physical modes is  a displacement by the vector~
\begin{align}
  \eta=\eta(\xi)\coloneqq -\hat{\xi}(s(\xi))+\xi\in\mathbb{R}^{2N}\ .\label{eq:etaxidef}
  \end{align}

\paragraph{Maximum likelihood error decoding.} We will assume that  we are dealing with (independent and identical) classical noise with variance~$\sigma^2$ on each mode, see Eq.~\eqref{eq:cNrho}. That is, the noise-corrupted encoded state is $\cN_{\sigma^2}^{\otimes N}(\proj{\bar{\Psi}})$. In this case, the displacement error vector~$\xi\in\mathbb{R}^{2N}$ has a centered normal prior distribution, i.e.,~$\xi\sim \cN(0,\sigma^2 I_{2N})$. Maximum likelihood error decoding then amounts to choosing the error that is most likely given the observed syndrome~$s\in [-\sqrt{\pi/2},\sqrt{\pi/2})^{2(N-K)}$. In other words, the estimator function is given by
\begin{align}
\hat{\xi}^{\mathsf{ML}}(s)&\coloneqq \argmax_{\xi\in\mathbb{R}^{2N}} f_{Z|s(Z)=s}(\xi)\ ,\label{eq:argmaxexpr}
\end{align}
where $f_{Z|s(Z)=s}$ is the conditional probability density function when~$Z\sim \cN(0,\sigma^2 I_{2N})$ is conditioned on~$s(Z)=s$.  
Here and below, ties in expressions such as~\eqref{eq:argmaxexpr} when taking $\argmax$ are broken arbitrarily. By Bayes' rule, this definition is equivalent to
\begin{align}
\hat{\xi}^{\mathsf{ML}}(s)&=\argmax_{\xi\in \{z\in\mathbb{R}^{2N}\ |\ s(z)=s\}} f_Z(\xi)\ .\label{eq:maximumlikelihoodximl}
  \end{align}

\paragraph{Logical error and figure of merit.} Recall that the overall effect of a displacement error and subsequent correction is given by a displacement vector~$\eta$ as in~\eqref{eq:etaxidef}. To see what the effect on the logical information is, let us assume that $U^{(N)}=U_S$ is given by the symplectic matrix~$S$. Observe that for $x=S\eta$ we have
\begin{align}
  D(\eta)\ket{\overline{\Psi}}&= U_S D(x)\left(\ket{\Psi}\otimes \ket{\gkp}^{\otimes N-K}\right)\\
  &=U_S\left(D(x_A)\ket{\Psi}\otimes D(x_B)\ket{\gkp}^{\otimes N-K}\right)\ ,
\end{align}
where we used the fact that $D(\eta)U_S=U_SD(S\eta)$, and where we write
$x=(x_A,x_B)\in\mathbb{R}^{2K}\times\mathbb{R}^{2(N-K)}\cong\mathbb{R}^{2N}$.
Under the assumption that $D(\hat{\xi}(s))$ causes syndrome~$s$, the overall displacement~$D(\eta)$ maps the code space to itself, and thus $D(x_B)$ stabilizes~$\ket{\gkp}^{\otimes N-K}$. In particular, we conclude that
\begin{align}
  D(\eta)U_S\left(\ket{\Psi}\otimes\ket{\gkp}^{\otimes N-K}\right)&= U_S\left(D(x_A)\ket{\Psi}\otimes \ket{\gkp}^{\otimes N-K}\right)\ ,
  \end{align}
which  shows that the logical error is a displacement by~$x_A$, i.e., by the vector~$x_A\in\mathbb{R}^{2K}$ consisting of the first $2K$~components of~$S(-\hat{\xi}(s(\xi))+\xi)$.

Note that $x_A=x_A(\xi)$ is a deterministic function of the displacement error vector~$\xi$ depending on the choice of the estimator function~\eqref{eq:estimatorfct}. In the case where~$\xi\sim\cN(0,\sigma^2 I_{2N})$, $x_A$ is a random variable supported on~$\mathbb{R}^{2K}$. To quantify error suppression, we may use e.g., the variance of this random variable (as in~\cite{NohGirvinJiang20arX,NohGirvinJiang20} for example). Here we use a slightly different measure: for any~$\epsilon>0$, we set
\begin{align}
P_{\textrm{succ}}(\epsilon)&\coloneqq \Pr_{\xi\sim\cN(0,\sigma^2 I_{2N})}\left[\|x_A(\xi)\|\leq \epsilon\right]\ ,\label{eq:psuccepsdef}
  \end{align}
where $\|y\|\coloneqq \left(\sum_{j=1}^{2K}|y_j|^2\right)^{1/2}$ is the Euclidean norm of $y\in\mathbb{R}^{2K}$. In other words, we are interested in the probability that the resulting logical error after recovery belongs to an $\epsilon$-ball in~$\mathbb{R}^{2K}$ centered at the origin. We call $P_{\textrm{succ}}(\epsilon)$ the decoding success probability. 

\section{Analysis of maximum likelihood error recovery for\\ oscillator-to-oscillator codes}\label{sec:problemreformulation}
In this section, we reformulate the recovery success probability~\eqref{eq:psuccepsdef} with respect to maximum likelihood error decoding of an oscillator-to-oscillator code subject to Gaussian classical noise in terms of a purely classical estimation problem in Section~\ref{sec:reformulationrec}. 
 In Section~\ref{sec:uppbnd}, we then summarize our main upper bound on this quantity. 

\subsection{Reformulating the recovery success probability\label{sec:reformulationrec}}
Our goal is to establish an upper bound on the quantity~\eqref{eq:psuccepsdef}
when maximum likelihood estimation of the (physical) error is used. To this end, we first reformulate this quantity in terms of an estimation problem of purely classical information-theoretic nature.

This reformulation is guided by the measurement procedure for
the operators~\eqref{eq:modularoperatorstobemeasured}.  Suppose the code state~$\ket{\overline{\Psi}}$ undergoes a displacement~$D(\xi)$. Using that  $U_S^\dagger D(\xi)=D(S\xi)U_S^\dagger$, we have (similarly as before)
\begin{align}
U_S^\dagger D(\xi)\ket{\overline{\Psi}}&=D(x)(\ket{\Psi}\otimes\ket{\gkp}^{\otimes N-K})\qquad\textrm{ where }\qquad x=S\xi\in\mathbb{R}^{2N}\ .
\end{align}
Writing $x=(x_A,x_B)\in\mathbb{R}^{2K}\times\mathbb{R}^{2(N-K)}$, we have  $D(x)=D(x_A)\otimes D(x_B)$, and thus measurement of GKP-state stabilizers $S_{Q_j}, S_{P_j}$ on the modes~$j\in \{K+1,\ldots,N\}$ provides the syndrome
\begin{align}
s&= x_B\mod \sqrt{2\pi}\in [-\sqrt{\pi/2},\sqrt{\pi/2})^{2(N-K)}\ ,
\end{align}
where the modulo-operation is applied to each entry of $x_B$.

If the original error is distributed according to~$\xi\sim\cN(0,\sigma^2I_{2N})$, then $x\sim \cN(0,\sigma^2 (S^TS)^{-1})$ by definition of~$x$.  Furthermore,
if $\hat{\xi}^{\mathsf{ML}}$ is the maximum likelihood estimator for~$\xi$ (based on the syndrome~$s$ (cf.~\eqref{eq:maximumlikelihoodximl})), then $\hat{x}^{\mathsf{ML}}=S\hat{\xi}^{\mathsf{ML}}$ is the maximum likelihood estimator for~$x$. It can be written as
\begin{align}
  \hat{x}^{\mathsf{ML}}(s)&=\argmax_{x\in \{x\in\mathbb{R}^{2N}\ |\ x_B\mod \sqrt{2\pi}=s\}} f_X(x)\ ,
\end{align}
where $f_X$ is the distribution function of the random variable~$X\sim \cN(0,\sigma^2 (S^TS)^{-1})$. In particular, the residual logical error after error correction is given by the first $2K$~components of the vector~$-\hat{x}^{\mathsf{ML}}(s)+x$. We have thus arrived at the following reformulation:
Let $\Pi_A:\mathbb{R}^{2N}\rightarrow\mathbb{R}^{2K}$ be the projection map taking a vector~$x\in\mathbb{R}^{2N}$ to a vector with only its first $2K$~components.
Similarly, let $\Pi_B:\mathbb{R}^{2N}\rightarrow\mathbb{R}^{2(N-K)}$ denote the projection map onto the last $2(N-K)$~components. Then the decoding success probability~\eqref{eq:psuccepsdef} (when using maximum likelihood error decoding) can be written as 
\begin{align}
  P_{\textrm{succ}}(\epsilon)&=\Pr_{X\sim\cN(0,\sigma^2 (S^TS)^{-1})}\left[\left\|
\Pi_A(X)-\Pi_A\left(\hat{x}^{\mathsf{ML}}(\Pi_B X \mod \sqrt{2\pi})\right)\right\|\leq \epsilon\right]\ .
  \end{align}
In other words, $P_{\textrm{succ}}(\epsilon)$ can be understood
as the probability of $\epsilon$-approximately estimating the first $2K$~components of a random Gaussian vector~$X$, given the remaining $2(N-K)$~modulo-$\sqrt{2\pi}$-reduced components.

\subsection{An upper bound on the decoding success probability\label{sec:uppbnd}}
In Section~\ref{sec:classicalanalysis} we will analyze this estimation problem more generally. We  consider a random vector~$X\in\cN(0,\Sigma)$ on $\mathbb{R}^n$ drawn according to a $n$-variate centered normal distribution with covariance matrix $\Sigma$, and study the probability of $\epsilon$-correctly estimating the first $k$~components of~$X$, given the modulo-$\Delta$-reduced values of the remaining~$n-k$ components (using maximum likelihood estimation of the entire vector). One of our main results (Corollary~\ref{cor:upperboundpsepsilon}) concerning this problem is the upper bound
\begin{align}
P_{\textrm{succ}}(\epsilon)&\leq \mu_Z\left(\cB_{\epsilon/\sqrt{\lambda_{\min}(\Sigma)}}(0)\right)\ ,
\end{align}
where $\mu_Z$ is the probability measure of a centered normal Gaussian vector~$Z\sim\cN(0,I_k)$, and where $\cB_\delta(0)\subset\mathbb{R}^k$ is the closed $\delta$-ball with respect to the Euclidean norm.
For an oscillator-to-oscillator code encoding $K$ logical modes into $N$~physical modes, we have $(n,k)=(2N,2K)$ and \begin{align}\label{eq:specificationDS}
(\Delta,\Sigma)=(\sqrt{2\pi},\sigma^2(S^TS)^{-1})\ .
\end{align}
Since $1/\lambda_{\min}(\Sigma)=\lambda_{\max}(\Sigma^{-1})$ we obtain (cf.~\eqref{eq:squeezingmeasureU}) the following:
\begin{theorem}
  Let~$\epsilon>0$. The decoding success probability~$P_{\textrm{succ}}(\epsilon)$ when encoding $K$ into~$N$ modes using an oscillator-to-oscillator code with a Gaussian encoding unitary~$U$ satisfies
  \begin{align}
P_{\textrm{succ}}(\epsilon)&\leq \mu_Z\left(\cB_{\epsilon\cdot \mathsf{sq}(U)/\sigma}(0)\right)\ ,
  \end{align}
  where $\mu_Z$ is the probability measure of a centered normal Gaussian vector~$Z\sim\cN(0,I_{2K})$.
  \end{theorem}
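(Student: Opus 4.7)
The plan is to reduce this statement to a direct application of the general classical estimation bound of Corollary~\ref{cor:upperboundpsepsilon}, together with the reformulation derived in Section~\ref{sec:reformulationrec} and the spectral characterization of $\mathsf{sq}(U)$ from Eq.~\eqref{eq:squeezingmeasureU}. In other words, essentially all the work is already packaged into the classical estimation result; what remains is a bookkeeping step matching the parameters $(n,k,\Delta,\Sigma)$ of the classical problem to the quantum setting.

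First I would invoke the reformulation in Section~\ref{sec:reformulationrec}, which shows that for an oscillator-to-oscillator code with Gaussian encoding unitary $U=U_S$ and physical-level classical noise of variance $\sigma^2$, the decoding success probability $P_{\textrm{succ}}(\epsilon)$ equals the probability of $\epsilon$-approximately estimating the first $2K$ components of a Gaussian vector $X\sim \cN(0,\sigma^2 (S^TS)^{-1})$ from the modulo-$\sqrt{2\pi}$-reduced values of its remaining $2(N-K)$ components by maximum likelihood estimation. Applying Corollary~\ref{cor:upperboundpsepsilon} with the specification~\eqref{eq:specificationDS}, namely $(n,k)=(2N,2K)$, $\Delta=\sqrt{2\pi}$, $\Sigma=\sigma^2(S^TS)^{-1}$, immediately yields
\begin{align}
P_{\textrm{succ}}(\epsilon)&\leq \mu_Z\left(\cB_{\epsilon/\sqrt{\lambda_{\min}(\Sigma)}}(0)\right),
\end{align}
where $\mu_Z$ is the measure of $Z\sim\cN(0,I_{2K})$.

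Next I would relate $\lambda_{\min}(\Sigma)$ to the squeezing measure. Since $S^TS$ is positive definite (as noted in the excerpt just after Eq.~\eqref{eq:eulerdecomposition}), the eigenvalues of $(S^TS)^{-1}$ are the reciprocals of those of $S^TS$, so $\lambda_{\min}(\Sigma)=\sigma^2/\lambda_{\max}(S^TS)$. Combining this with the identity $\mathsf{sq}(U)=\sqrt{\lambda_{\max}(S^TS)}$ from Eq.~\eqref{eq:squeezingmeasureU} gives $\sqrt{\lambda_{\min}(\Sigma)}=\sigma/\mathsf{sq}(U)$, hence $\epsilon/\sqrt{\lambda_{\min}(\Sigma)}=\epsilon\cdot\mathsf{sq}(U)/\sigma$, which is the radius appearing in the claimed bound.

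The only genuine obstacle here lies outside the theorem itself: it is the proof of Corollary~\ref{cor:upperboundpsepsilon}, i.e., the classical estimation bound in Section~\ref{sec:classicalanalysis}. Given that result as a black box, the theorem above is a one-line substitution, and I expect the proof in the paper to be correspondingly short.
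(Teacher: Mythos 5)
Your proof is correct and matches the paper's argument essentially step for step: invoke the reformulation of $P_{\textrm{succ}}(\epsilon)$ as a partial informed unwrapping problem, apply Corollary~\ref{cor:upperboundpsepsilon} with $(n,k,\Delta,\Sigma)=(2N,2K,\sqrt{2\pi},\sigma^2(S^TS)^{-1})$, and translate $\lambda_{\min}(\Sigma)$ into $\sigma^2/\mathsf{sq}(U)^2$ via Eq.~\eqref{eq:squeezingmeasureU}. The paper indeed treats this theorem as an immediate corollary of the classical bound, exactly as you anticipated.
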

  
Since the norm of a $2K$-dimensional centered normal vector is close to $\sqrt{2K}$ with high probability (see e.g.,~\cite[Theorem~3.1.1]{vershynin2018}) this implies that the amount of squeezing~$\mathsf{sq}(U)$ has to grow roughly as~$\sqrt{2K}\sigma/\epsilon$ in the decoding error~$\epsilon$ in order to achieve a success probability close to~$1$. This establishes our no-go result: with a bounded amount of squeezing,  an arbitrarily small decoding error cannot be achieved.

\section{Unwrapping Modulo Reduced Gaussian Vectors}\label{sec:classicalanalysis}
In this section, we derive our main result concerning a classical estimation problem. We refer to it as partial unwrapping of modulo reduced Gaussian vectors.

In Section~\ref{sec:informedunwrappingmodulo}, we first review the informed unwrapping problem of modulo reduced Gaussian vectors, which has been studied before. In Section~\ref{sec:partialinformedunwrapping}, we then introduce the partial unwrapping problem. In Section~\ref{sec:upperboundsoverview}, we give a high-level overview of how we obtain an upper bound on the corresponding figure of merit, the decoding success probability. 

We then derive our technical results: 
In Section~\ref{sec:decsuccvoronoi}, we show how the decoding success probability can be written as the probability mass of a countable union of displaced and thickened degenerate Voronoi cells under a centered unit-variance Gaussian normal distribution. Here the relevant displacements are given by a lattice.
 In Section~\ref{subsec:cp:propdegvoronoi}, we describe a polytope that is an outer bound on the degenerate Voronoi cell.
A parametrization of this polytope
(exploiting the fact that it is lower-dimensional) is described in Section~\ref{sec:parametrizationPLambda}. 
In Section~\ref{sec:translationparametrization}, we show how to parametrize the union of translates of this polytope (up to zero measure sets). In Section~\ref{sec:displacedthick}, we show how to integrate over thickened versions of these polytopes, and give a general upper bound on the probability mass of the union of thickened and displaced polytopes. Finally, in Section~\ref{sec:upperboundunwrapp}, we combine these results to obtain a general upper bound on the decoding success probability for the partial unwrapping problem.

\subsection{Informed unwrapping\label{sec:informedunwrappingmodulo}} 
Let $n>0$ be an integer, $\Delta>0$ a real number and $\Sigma\in\mathsf{Mat}_{n\times n}(\mathbb{R})$ the covariance matrix of a normal distribution on~$\mathbb{R}^n$. We note that such matrices are positive definite and symmetric, a fact we will use below. 
Let $X=(X_1,\ldots,X_n)\sim\cN(0,\Sigma)$ be a random $n$-dimensional vector drawn according to the centered normal distribution with covariance matrix~$\Sigma$. Let $X^*$ be the $n$-dimensional vector in~$[-\Delta/2,\Delta/2)^{n}$ whose $j$-th entry $X_j^*$ is obtained by reducing the coordinate $X_j$ modulo~$\Delta$.

The {\em informed unwrapping problem of modulo reduced Gaussian vectors} asks to reconstruct~$X$ from~$X^*$. This problem arises naturally in signal processing since the modulo operation applied to $x\in\mathbb{R}$ amounts to discarding the most significant bits in the binary representation of~$x$. The task is thus to reconstruct the Gaussian vector from the least significant bits of its coordinates.  The expression {\em informed} refers to the fact that the covariance matrix~$\Sigma$ is known to the decoder. 

The MAP (for maximum a posteriori likelihood) estimator $\hat{x}^{\mathsf{MAP}}:[-\Delta/2,\Delta/2)^{n}\rightarrow\mathbb{R}^n$ for this problem is the function that maximizes the decoding probability~$\Pr\left[\hat{x}(X^*)=X\right]$, i.e., 
\begin{align}
\hat{x}^{\mathsf{MAP}}(z)&=\argmax_{y\in\mathbb{R}^n:y^*=z}f_\Sigma(y)\ ,\quad z\in [-\Delta/2,\Delta/2)^{n}\ , \label{eq:mapestimator}
\end{align}
where 
\begin{align}\label{eq:probdensfctsigma}
f_\Sigma(x)=\frac{1}{(2\pi)^{n/2}(\det \Sigma)^{1/2}} e^{-\frac{1}{2}x^T\Sigma^{-1}x}
\end{align}
is the probability density function of $X$.  It can be shown (see~\cite[Section III]{romanov2019blind}) that the corresponding success probability is related to the probability that the random variable $Z=\Sigma^{-1/2}X\sim \cN(0,I_n)$ belongs to the Voronoi region 
\begin{align}
\cV:=\{z\in\mathbb{R}^n\ |\ \|z\|\leq \|z-\Delta\Sigma^{-1/2}  b\|\textrm{ for all }b\in\mathbb{Z}^n\}\ 
\end{align}
of the lattice~$\Delta\Sigma^{-1/2}\mathbb{Z}^n$, i.e.,
\begin{align}
\Pr\left[\hat{x}^{\mathsf{MAP}}(X^*)=X\right]&=\Pr\left[Z\in\cV\right]\ .\label{eq:mapdecodingsuccess}
\end{align}
Here $\|x\|:=\sqrt{\sum_{j=1}^n x_j^2}$ denotes the Euclidean norm of $x\in\mathbb{R}^n$. With~\eqref{eq:mapdecodingsuccess}, the decoding error probability can be related to the quantity~$\Delta/(\det \Sigma)^{n/2}$, see~\cite{ordentlicherezinteger,outageprob,modulobasedarchitecture}.

\subsection{Partial informed unwrapping: statement of the problem\label{sec:partialinformedunwrapping}}

Here we consider a variant of the above problem which we refer to as partial informed unwrapping of modulo reduced Gaussian vectors. In this variant, one is given $n-k$~modulo reduced entries of a random vector $X=(X_1,\ldots,X_n)\sim\cN(0,\Sigma)$ (where $0<k<n$). The task is to reconstruct the $k$~remaining coordinates of~$X$.

To describe this in more detail, let us write~$n$-dimensional vectors as $x=(x_1,\ldots,x_n)=\binom{x_A}{x_B}\in\mathbb{R}^k\times\mathbb{R}^{n-k}$ where $x_A=(x_1,\ldots,x_k)$ and $x_B=(x_{k+1},\ldots,x_{n})$. It will also be useful to introduce corresponding projection maps
\begin{align}
\begin{matrix}
\Pi_A:&\mathbb{R}^n&\rightarrow &\mathbb{R}^k\\
& \binom{x_A}{x_B}&\mapsto & x_A
\end{matrix}\qquad\textrm{ and }\qquad \begin{matrix}
\Pi_B:&\mathbb{R}^n&\rightarrow &\mathbb{R}^{n-k}\\
& \binom{x_A}{x_B}&\mapsto & x_B\ .
\end{matrix}
\end{align}
For a realization $x=\binom{x_A}{x_B}\in\mathbb{R}^n$ of $X$, let 
\begin{align}\label{eq:xBstardefinition}
x_B^*\coloneqq \left((\Pi_B x)\mod \Delta\right)\in [-\Delta/2,\Delta/2)^{n-k}\ ,
\end{align}
where the modulo operation is applied entrywise.

Let $X=\binom{X_A}{X_B}\sim\cN(0,\Sigma)$. The partial informed unwrapping problem
asks to reconstruct~$X_A$ from $X^*_B$. For a given estimator $h:[-\Delta/2,\Delta/2)^{n-k}\rightarrow \mathbb{R}^k$ and an error tolerance~$\epsilon>0$, we call 
\begin{align}
p^h_{S}(\epsilon)&=\Pr\left[\|h(X^*_B)-X_A\|\leq\epsilon \right]
\end{align}
the decoding success probability of~$h$. A natural choice of decoder for this problem relies on the estimator (analogous to Eq.~\eqref{eq:mapestimator}) 
 $\hat{x}^{\mathsf{MAP}}:[-\Delta/2,\Delta/2)^{n-k}\rightarrow\mathbb{R}^n$ 
 for $x$ defined by 
\begin{align}
\hat{x}^{\mathsf{MAP}}(z)&=\argmax_{y\in\mathbb{R}^n:y_B^*=z}f_\Sigma(y)\label{eq:mapestimatortwo}
\end{align}
and uses the concatenation~$h^{\mathsf{MAP}}:= \Pi_A\circ \hat{x}^{\mathsf{MAP}}$ of this map with the projection map~$\Pi_A$. We call this the MAP-decoder. In the following, we will study the associated decoding success probability denoted by~$p_S(\epsilon)$.

\subsection{Upper bounds on partial informed unwrapping\label{sec:upperboundsoverview}} 
Here we give a high-level overview of our derivation of an upper bound for the decoding success probability $p_S(\epsilon)$ in Sections~\ref{sec:decsuccvoronoi}-\ref{sec:upperboundunwrapp}. 

To find the upper bound,
we first establish a connection to the probability mass (under the centered standard normal distribution) of certain subsets of~$\mathbb{R}^n$ analogous to Eq.~\eqref{eq:mapdecodingsuccess}. To state this, let
\begin{align}
\cV_{\Sigma^{-1/2}}:=\left\{z\in\mathbb{R}^n\ \left|\ \|z\|\leq \left\|z-\Sigma^{-1/2} \binom{h}{\Delta b}\right\| \textrm{ for all }(h,b)\in\mathbb{R}^k\times\mathbb{Z}^{n-k}\right.\right\}\  \label{eq:voronoicelldef}
\end{align}
denote the Voronoi cell of the (degenerate) lattice $\Sigma^{-1/2}(\mathbb{R}^k\times \Delta\mathbb{Z}^{n-k})$.
Let
\begin{align}
\cV_{\Sigma^{-1/2}}(\epsilon)\coloneqq \cV_{\Sigma^{-1/2}}+\Sigma^{-1/2} \left(\cB_\epsilon(0)\times\{0\}^{n-k}\right)\ .\label{eq:epsilonthickening}
\end{align}
be an $\epsilon$-``thickening'' of $\cV_{\Sigma^{-1/2}}$. In this expression, $\cX+\cY:=\{x+y\ |\ x\in\cX,y\in\cY\}$ is the Minkowski sum of two subsets $\cX,\cY\subset\mathbb{R}^n$, and 
\begin{align}
\cB_\epsilon(y):=\{h\in\mathbb{R}^k\ |\ \|h-y\|\leq \epsilon\}\
\end{align}
is the closed $\epsilon$-ball around $y\in\mathbb{R}^k$. The relevant subset 
is a union of translates of the set~$\cV_{\Sigma^{-1/2}}(\epsilon)$ of the form
\begin{align}
\Sigma^{-1/2}\binom{0}{\Delta b}+\cV_{\Sigma^{-1/2}}(\epsilon)\qquad\textrm{ with }\qquad b\in\mathbb{Z}^{n-k}\ .
\end{align}
 That is, we have the following: The decoding success probability~$p_S(\epsilon)$ is given by
\begin{align}
p_S(\epsilon)&=\Pr\left[Z\in \Sigma^{-1/2}(\{0\}^k\times \Delta\mathbb{Z}^{n-k})+\cV_{\Sigma^{-1/2}}(\epsilon)\right]\qquad\textrm{ where }\qquad Z\sim\cN(0,I_n)\ . \label{eq:psepsilonexact}
\end{align}
We will prove~\eqref{eq:psepsilonexact} below, see Theorem~\ref{thm:mainvoronoireformulation}.

To proceed, we find an outer bound on the set~$\Sigma^{-1/2}(\{0\}^k\times \Delta\mathbb{Z}^{n-k})+\cV_{\Sigma^{-1/2}}(\epsilon)$. This allows us to obtain an upper bound on the quantity~\eqref{eq:psepsilonexact} which depends on the marginal distribution of $X_B=\Pi_B X$, and the conditional distribution of $X_A=\Pi_A X$ given that $X_B=x_B$. It is of the form
\begin{align}
p_S(\epsilon)&\leq \sum_{b\in\mathbb{Z}^{n-k}} \int_{\cR_b} f_{X_B}(\xi)\cdot\Pr\left[Y\in \cB_\epsilon(c(\xi,\Delta b))\right] d\xi\ ,\label{eq:psepsilonupperboundgeneral}
\end{align}
where $\{\cR_b\}_{b\in\mathbb{Z}^{n-k}}$ is a collection of measurable subsets of~$\mathbb{R}^{n-k}$
 such that $\sum_{b\in\mathbb{Z}^{n-k}} \int_{\cR_b} f_{X_B}(\xi)d\xi\leq 1$, the ball centers~$c(\xi,\Delta b)\in\mathbb{R}^{k}$ are located at
\begin{align}
c(\xi,\Delta b)&\coloneqq \Gamma_1\xi+\Gamma_2 \Delta b\ 
\end{align}
for certain matrices $\Gamma_1,\Gamma_2$ depending on~$\Sigma$ (see Theorem~\ref{thm:mainintegral}), and $Y\sim\cN(0,\Sigma_*)$ is a normal distributed random variable on~$\mathbb{R}^k$ whose covariance matrix is the Schur complement
\begin{align}
\Sigma_*&=\Sigma_{AA}-\Sigma_{AB}(\Sigma_{BB})^{-1}\Sigma_{BA}\ .
\end{align}
Expression~\eqref{eq:psepsilonupperboundgeneral}  depends on both~$\Sigma$ and~$\Delta$ in general. With Anderson's inequality~\cite{anderson55}, we obtain an upper bound of the form
\begin{align}
p_S(\epsilon)&\leq \Pr\left[Z\in\cB_{\epsilon/\sqrt{\lambda_{\min}(\Sigma)}}(0)\right]\qquad\textrm{ where }\qquad Z\in\cN(0,I_k)\ , \label{eq:upperboundinequalitymain}
\end{align}
with $\lambda_{\min}(\Sigma)$ denoting the smallest eigenvalue of~$\Sigma$, see Corollary~\ref{cor:upperboundpsepsilon}. This appears rather crude, especially as it does not depend on~$\Delta$. For our purposes, however, this bound turns out to be sufficient.

\subsection{Decoding success probability and displaced Voronoi cells\label{sec:decsuccvoronoi}}
Let $X\sim\cN(0,\Sigma)$ be as before and consider the function~$\hat{x}^{\mathsf{MAP}}:[-\Delta/2,\Delta/2)^{n-k}\rightarrow\mathbb{R}^n$ defined by~\eqref{eq:mapestimatortwo}. Let $u\in [-\Delta/2,\Delta/2)^{n-k}$. 
Since every $y\in\mathbb{R}^n$ with $y_B^*=u$ is of the form $y=\binom{h}{u+\Delta b}$
for some $(h,b)\in \mathbb{R}^k\times \mathbb{Z}^{n-k}$, we can write
\begin{align}
\hat{x}^{\mathsf{MAP}}(u)&=
\binom{h(u)}{u+\Delta b(u)}\qquad\textrm{ where }\qquad 
\binom{h(u)}{b(u)}:=\arg\max_{\binom{h}{b}\in\mathbb{R}^k\times\mathbb{Z}^{n-k}}
f_\Sigma\left(\begin{pmatrix}
h\\
u+\Delta b
\end{pmatrix}\right)\ .\label{eq:defmaximumlikelihood}
\end{align}
The decoder $h^{\mathsf{MAP}}:=\Pi_A\circ \hat{x}^{\mathsf{MAP}}$ is then given by the function $h$ and has success probability 
\begin{align}
p_S(\epsilon)&=\Pr\left[X\in \cR(\epsilon)\right]\qquad\textrm{ with }\qquad \cR(\epsilon):=\left\{x\in \mathbb{R}^n\ \big|\ \|\Pi_Ax-h^{}(x_B^*)\|\leq \epsilon\right\}\ .
\end{align}
The set $\cR(\epsilon)$ of realizations~$x$ of the random vector~$X$ which lead to decoding success  can be expressed as follows.

\begin{lemma}\label{lem:voronoiset} Let $\epsilon>0$. Then 
$x\in\cR(\epsilon)$ if and only if $z\coloneqq \Sigma^{-1/2}x$ satisfies the following. There is some $c\in \cB_\epsilon(0)\times \Delta\mathbb{Z}^{n-k}$ such that
\begin{align}
\|z-\Sigma^{-1/2}c\|\leq \left\|z-\Sigma^{-1/2}\binom{h}{\Delta b}\right\|\qquad\textrm{ for all }\qquad (h,b)\in\mathbb{R}^k\times\mathbb{Z}^{n-k}\ . \label{eq:zcondtition}
\end{align}
\end{lemma}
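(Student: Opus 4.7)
The approach is a direct reformulation: after applying the whitening transformation $z = \Sigma^{-1/2} x$, the MAP optimization becomes a nearest-point problem for the degenerate lattice $\Sigma^{-1/2}(\mathbb{R}^k \times \Delta\mathbb{Z}^{n-k})$, and the decoding-success condition $\|x_A - h(x_B^*)\|\leq\epsilon$ translates into the requirement that this nearest point have its first $k$ components of norm at most~$\epsilon$. The only genuine content is a careful affine change of variables; the rest is bookkeeping.

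First I would observe that since $f_\Sigma(y) \propto \exp(-\tfrac{1}{2}\|\Sigma^{-1/2}y\|^2)$, the defining equation~\eqref{eq:defmaximumlikelihood} is equivalent to
\begin{align}
\binom{h(u)}{b(u)} = \arg\min_{(h,b)\in\mathbb{R}^k\times\mathbb{Z}^{n-k}}\left\|\Sigma^{-1/2}\binom{h}{u+\Delta b}\right\|.
\end{align}
Given $x = \binom{x_A}{x_B}\in\mathbb{R}^n$, let $b_0 \in \mathbb{Z}^{n-k}$ be the unique integer vector satisfying $x_B = x_B^* + \Delta b_0$. Substituting $u = x_B^*$ and performing the affine change of variables $(h',b') := (x_A - h,\, b_0 - b)$, the objective becomes
\begin{align}
\left\|\Sigma^{-1/2}\binom{h}{x_B^*+\Delta b}\right\| = \left\|\Sigma^{-1/2}\left(x - \binom{h'}{\Delta b'}\right)\right\| = \left\|z - \Sigma^{-1/2}\binom{h'}{\Delta b'}\right\|.
\end{align}
Hence the pair $(h(x_B^*),b(x_B^*))$ is the argmin of the original problem if and only if
\begin{align}
c^* \;:=\; \binom{x_A - h(x_B^*)}{\Delta\bigl(b_0 - b(x_B^*)\bigr)}
\end{align}
is a minimizer of $\|z - \Sigma^{-1/2} c'\|$ over $c' \in \mathbb{R}^k \times \Delta\mathbb{Z}^{n-k}$, which is precisely condition~\eqref{eq:zcondtition} with $c = c^*$.

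For the forward implication, if $x \in \cR(\epsilon)$ then $\|\Pi_A c^*\| = \|x_A - h(x_B^*)\| \leq \epsilon$, so $c^* \in \cB_\epsilon(0) \times \Delta\mathbb{Z}^{n-k}$ witnesses the desired condition. Conversely, if some $c \in \cB_\epsilon(0)\times\Delta\mathbb{Z}^{n-k}$ satisfies~\eqref{eq:zcondtition}, then $c$ is a minimizer and so is $c^*$; invoking the tie-breaking convention stated after~\eqref{eq:argmaxexpr}, we may assume the argmin is chosen so that $c^* = c$, yielding $\|x_A - h(x_B^*)\| = \|\Pi_A c\| \leq \epsilon$. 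I expect the only mildly subtle point to be the handling of ties in the argmin, but since the set of $x$ for which multiple minimizers exist has Lebesgue (hence Gaussian) measure zero, the characterization is unaffected on a full-measure set, which is all that is needed for the probability computations in the sequel.
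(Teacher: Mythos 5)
Your proof is correct and follows essentially the same route as the paper: both convert the MAP maximization into a nearest-point problem after whitening by $\Sigma^{-1/2}$, recenter via an affine change of variables (your $(h',b')=(x_A-h,\,b_0-b)$ versus the paper's $h_0'=\Pi_A x+h_0$ with the corresponding shift of $b$), and then read off that the $\cB_\epsilon(0)\times\Delta\mathbb{Z}^{n-k}$ membership of the recentered minimizer is exactly the success condition. You are slightly more explicit than the paper about the tie-breaking caveat in the converse direction; the paper relies on the same measure-zero argument implicitly.
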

\begin{proof}
Suppose that~$x\in\cR(\epsilon)$ and $x_B^*$ is defined by~\eqref{eq:xBstardefinition}. 
By the definitions~\eqref{eq:defmaximumlikelihood}, \eqref{eq:probdensfctsigma}, we have 
\begin{align}
\begin{pmatrix}
h(x_B^*)\\
b(x_B^*)
\end{pmatrix}&=\arg\max_{\binom{h}{b}\in\mathbb{R}^k\times\mathbb{Z}^{n-k}}
f_\Sigma\left(\begin{pmatrix}
h\\
x_B^*+\Delta b
\end{pmatrix}\right)\\
&=\arg\min_{\binom{h}{b}\in\mathbb{R}^k\times\mathbb{Z}^{n-k}}
 \begin{pmatrix}
h\\
x_B^*+\Delta b
\end{pmatrix}^T \Sigma^{-1}\begin{pmatrix}
h\\
x_B^*+\Delta b
\end{pmatrix}\\
&=\arg\min_{\binom{h}{b}\in\mathbb{R}^k\times\mathbb{Z}^{n-k}}
 \left\|
 \Sigma^{-1/2}
 \begin{pmatrix}
h\\
x_B^*+\Delta b
\end{pmatrix}\right\|\ ,\label{eq:minexpr}
\end{align}
where the last equation follows by symmetry of the covariance matrix $\Sigma$.
Eq.~\eqref{eq:minexpr} and the definitions of~$\cR(\epsilon)$ and $h(x_B^*)$ imply that 
$x\in\cR(\epsilon)$ if and only if there is some $h_0'\in\cB_\epsilon(\Pi_Ax)$ and $b_0'\in\mathbb{Z}^{n-k}$ such that
\begin{align}
\left\|
 \Sigma^{-1/2}
 \begin{pmatrix}
h_0'\\
x_B^*+\Delta b_0'
\end{pmatrix}\right\| & \leq \left\|
 \Sigma^{-1/2}
 \begin{pmatrix}
h'\\
x_B^*+\Delta b'
\end{pmatrix}\right\|\qquad\textrm{ for all }(h',b')\in\mathbb{R}^k\times\mathbb{Z}^{n-k}\ .
\end{align}
By definition of $x_B^*$ as the modulo-$\Delta$ reduced vector~$\Pi_Bx$, we have the following: for  every $b'\in\mathbb{Z}^{n-k}$, there is some $b\in\mathbb{Z}^{n-k}$ such that $x_B^*+\Delta b'=\Pi_Bx+\Delta b$. We can apply this substitution to both~$b_0'$ and $b'$, yielding the existence of some $h_0'\in\cB_\epsilon(\Pi_Ax)$ and $b_0\in\mathbb{Z}^{n-k}$ such that 
\begin{align}
\left\|
 \Sigma^{-1/2}
 \begin{pmatrix}
h_0'\\
\Pi_Bx+\Delta b_0
\end{pmatrix}\right\| & \leq \left\|
 \Sigma^{-1/2}
 \begin{pmatrix}
h'\\
\Pi_Bx+\Delta b
\end{pmatrix}\right\|\qquad\textrm{ for all }(h',b)\in\mathbb{R}^k\times\mathbb{Z}^{n-k}\ .
\end{align}
Let us write $h_0'=\Pi_Ax+h_0$ with $h_0\in\cB_\epsilon(0)$ and similarly $h'=\Pi_Ax+h$. We then conclude that there exists $h_0\in\cB_\epsilon(0)$ and $b_0\in \mathbb{Z}^{n-k}$ such that
\begin{align}
\left\|
 \Sigma^{-1/2}
 \begin{pmatrix}
\Pi_Ax+h_0\\
\Pi_Bx+\Delta b_0
\end{pmatrix}\right\| & \leq \left\|
 \Sigma^{-1/2}
 \begin{pmatrix}
\Pi_Ax+h\\
\Pi_Bx+\Delta b
\end{pmatrix}\right\|\qquad\textrm{ for all }(h,b)\in\mathbb{R}^k\times\mathbb{Z}^{n-k}\ ,
\end{align}
or equivalently
\begin{align}
\left\|z+\Sigma^{-1/2}\binom{h_0}{\Delta b_0}\right\| \leq  \left\|z+\Sigma^{1/2}\binom{h}{\Delta b}\right\|\qquad\textrm{ for all }(h,b)\in\mathbb{R}^k\times\mathbb{Z}^{n-k}\ .
\end{align}
The claim follows because $\binom{h_0}{\Delta b_0}\in \cB_\epsilon(0)\times \Delta\mathbb{Z}^{n-k}$ by definition, and thus $-\binom{h_0}{\Delta b_0}\in\cB_\epsilon(0)\times \Delta\mathbb{Z}^{n-k}$ since this set is symmetric around the origin. 
\end{proof}

\begin{theorem}\label{thm:mainvoronoireformulation}
Let $\cV_{\Sigma^{-1/2}}$ be the Voronoi cell
of the (degenerate) lattice $\Sigma^{-1/2}(\mathbb{R}^k\times\Delta\mathbb{Z}^{n-k})$ defined by~\eqref{eq:voronoicelldef}. Let $\epsilon>0$. 
Then
\begin{align}
p_S(\epsilon)&=\Pr\left[Z\in \Sigma^{-1/2}(\cB_\epsilon(0)\times \Delta\mathbb{Z}^{n-k})+\cV_{\Sigma^{-1/2}}\right]\ ,
\end{align}
where $Z\sim\cN(0,I_n)$.
\end{theorem}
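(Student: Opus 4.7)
The plan is to use Lemma \ref{lem:voronoiset}, which already characterizes membership in $\cR(\epsilon)$ in the variable $z = \Sigma^{-1/2}x$, and then translate the inequality appearing there into a Voronoi-cell membership statement followed by a change of variables in the probability.

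First, I would unpack the condition given by Lemma \ref{lem:voronoiset}. For fixed $c \in \cB_\epsilon(0)\times \Delta\mathbb{Z}^{n-k}$, the inequality
\begin{align}
\left\|z-\Sigma^{-1/2}c\right\| \leq \left\|z - \Sigma^{-1/2}\binom{h}{\Delta b}\right\| \qquad \text{for all } (h,b)\in\mathbb{R}^k\times\mathbb{Z}^{n-k}
\end{align}
is equivalent, via the substitution $v := z-\Sigma^{-1/2}c$ and the observation that $\binom{h}{\Delta b}-c$ ranges over $\mathbb{R}^k\times \Delta\mathbb{Z}^{n-k}$ as $(h,b)$ does (using that $\cB_\epsilon(0)\subset \mathbb{R}^k$ and that $\Delta\mathbb{Z}^{n-k}$ is a group), to the statement $v \in \cV_{\Sigma^{-1/2}}$ from the definition \eqref{eq:voronoicelldef}. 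So the existential in Lemma \ref{lem:voronoiset} rewrites as: there exists $c \in \cB_\epsilon(0)\times \Delta\mathbb{Z}^{n-k}$ with $z \in \Sigma^{-1/2}c + \cV_{\Sigma^{-1/2}}$. Taking the union over such $c$, this gives the set identity
\begin{align}
\Sigma^{-1/2}\,\cR(\epsilon) \;=\; \Sigma^{-1/2}\bigl(\cB_\epsilon(0)\times \Delta\mathbb{Z}^{n-k}\bigr) + \cV_{\Sigma^{-1/2}}.
\end{align}

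The remaining step is a change of variables. Since $X\sim\cN(0,\Sigma)$ and $\Sigma^{1/2}$ is symmetric, the vector $Z := \Sigma^{-1/2}X$ satisfies $Z\sim\cN(0,I_n)$. Therefore
\begin{align}
p_S(\epsilon) \;=\; \Pr\bigl[X\in \cR(\epsilon)\bigr] \;=\; \Pr\bigl[Z \in \Sigma^{-1/2}\cR(\epsilon)\bigr],
\end{align}
and substituting the set identity above yields the claim.

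There is no real obstacle here; all the nontrivial content, namely converting the argmax definition of the MAP decoder \eqref{eq:defmaximumlikelihood} into a distance comparison involving the lattice $\Sigma^{-1/2}(\mathbb{R}^k\times \Delta\mathbb{Z}^{n-k})$, has already been absorbed into Lemma \ref{lem:voronoiset}. The only minor subtlety to double-check is the symmetry argument that lets $c$ be replaced by $-c$ in the union (used implicitly in Lemma \ref{lem:voronoiset}), so that the admissible offsets form exactly $\cB_\epsilon(0)\times \Delta\mathbb{Z}^{n-k}$ rather than its reflection; this is immediate because both $\cB_\epsilon(0)$ and $\Delta\mathbb{Z}^{n-k}$ are symmetric about the origin.
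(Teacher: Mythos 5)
Your proposal is correct and follows essentially the same route as the paper: both invoke Lemma~\ref{lem:voronoiset}, both observe that for $c=\binom{h_0}{\Delta b_0}$ the offsets $\binom{h}{\Delta b}-c$ range over $\mathbb{R}^k\times\Delta\mathbb{Z}^{n-k}$ (so that condition~\eqref{eq:zcondtition} with the substitution $v=z-\Sigma^{-1/2}c$ becomes exactly $v\in\cV_{\Sigma^{-1/2}}$), and both finish by taking the union over $c$ to get the Minkowski-sum identity $\Sigma^{-1/2}\cR(\epsilon)=\Sigma^{-1/2}(\cB_\epsilon(0)\times\Delta\mathbb{Z}^{n-k})+\cV_{\Sigma^{-1/2}}$ followed by the change of variables $Z=\Sigma^{-1/2}X\sim\cN(0,I_n)$. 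Your write-up in fact spells out the translation-invariance step slightly more explicitly than the paper does, but the content and structure are the same.
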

\begin{proof}
 Lemma~\ref{lem:voronoiset} states that
$z\in \Sigma^{-1/2}\cR(\epsilon)$ if and only if there is some $c\in\cB_\epsilon(0)\times\Delta\mathbb{Z}^{n-k}$ such that~\eqref{eq:zcondtition} holds. But since the inequality~\eqref{eq:zcondtition} holds for all $(h,b)\in\mathbb{R}^{k}\times\mathbb{Z}^{n-k}$ and $c$ is of the form $c=\binom{h_0}{\Delta b_0}$ with $(h_0,b_0)\in\mathbb{R}^k\times\mathbb{Z}^{n-k}$ by definition,  condition~\eqref{eq:zcondtition} is equivalent to
\begin{align}
\|z-\Sigma^{-1/2}c\|\leq \left\|(z-\Sigma^{-1/2}c)-\Sigma^{-1/2}\binom{h}{\Delta b}\right\|\qquad\textrm{ for all }\qquad (h,b)\in\mathbb{R}^k\times\mathbb{Z}^{n-k}\ .
\end{align}
This shows that $z\in\Sigma^{-1/2}\cR(\epsilon)$ if and only if $z-\Sigma^{-1/2}c\in\cV_{\Sigma^{-1/2}}$ for some $c\in \cB_\epsilon(0)\times\Delta\mathbb{Z}^{n-k}$, i.e.,
\begin{align}
\Sigma^{-1/2}\cR(\epsilon)&=\cV_{\Sigma^{-1/2}}+\Sigma^{-1/2}(\cB_\epsilon(0)\times\Delta\mathbb{Z}^{n-k})\ .\label{eq:transformedsetofrecoverablevectors}
\end{align}
Since $Z\coloneqq \Sigma^{-1/2}X\sim \cN(0,I_n)$ and $p_s(\epsilon)=\Pr\left[X\in\cR(\epsilon)\right]$, this implies the claim.
\end{proof}
\begin{figure}
\centering
\subfigure[Illustration of an uncountable number of half-spaces defining~$\cV_\Lambda$.
Every pair $(h,b)\in\mathbb{R}^k\times \mathbb{Z}^{n-k}$ defines a half-space~$\cH_{(h,b)}$ (cf.~\eqref{eq:hhb}). Half-spaces associated to pairs $(h,0)$ with $b=0$ are shown in red.\label{subfigacountable}]{\includegraphics[width=8cm]{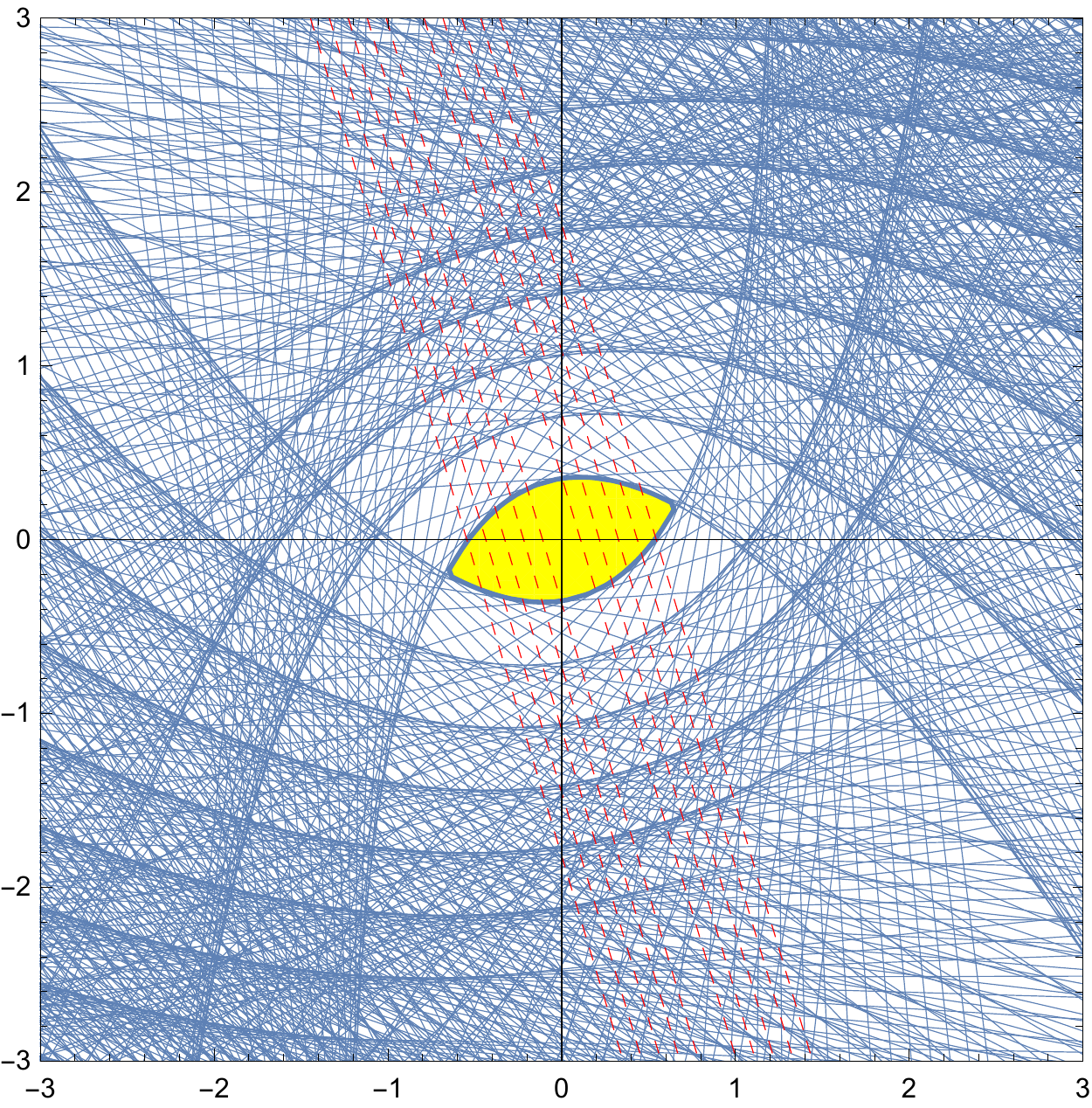}}\quad
\subfigure[A countable number of surfaces defined by quadratic constraints.
 Every $b\in\mathbb{Z}^{n-k}\backslash \{0\}$ defines a set~$\cE_b$ bounded by a surface defined by a quadratic form (cf.~\eqref{eq:cebcdef}).  The set~$\cE_0$ is a hyperplane (red).\label{secfigbcount}]{\includegraphics[width=8cm]{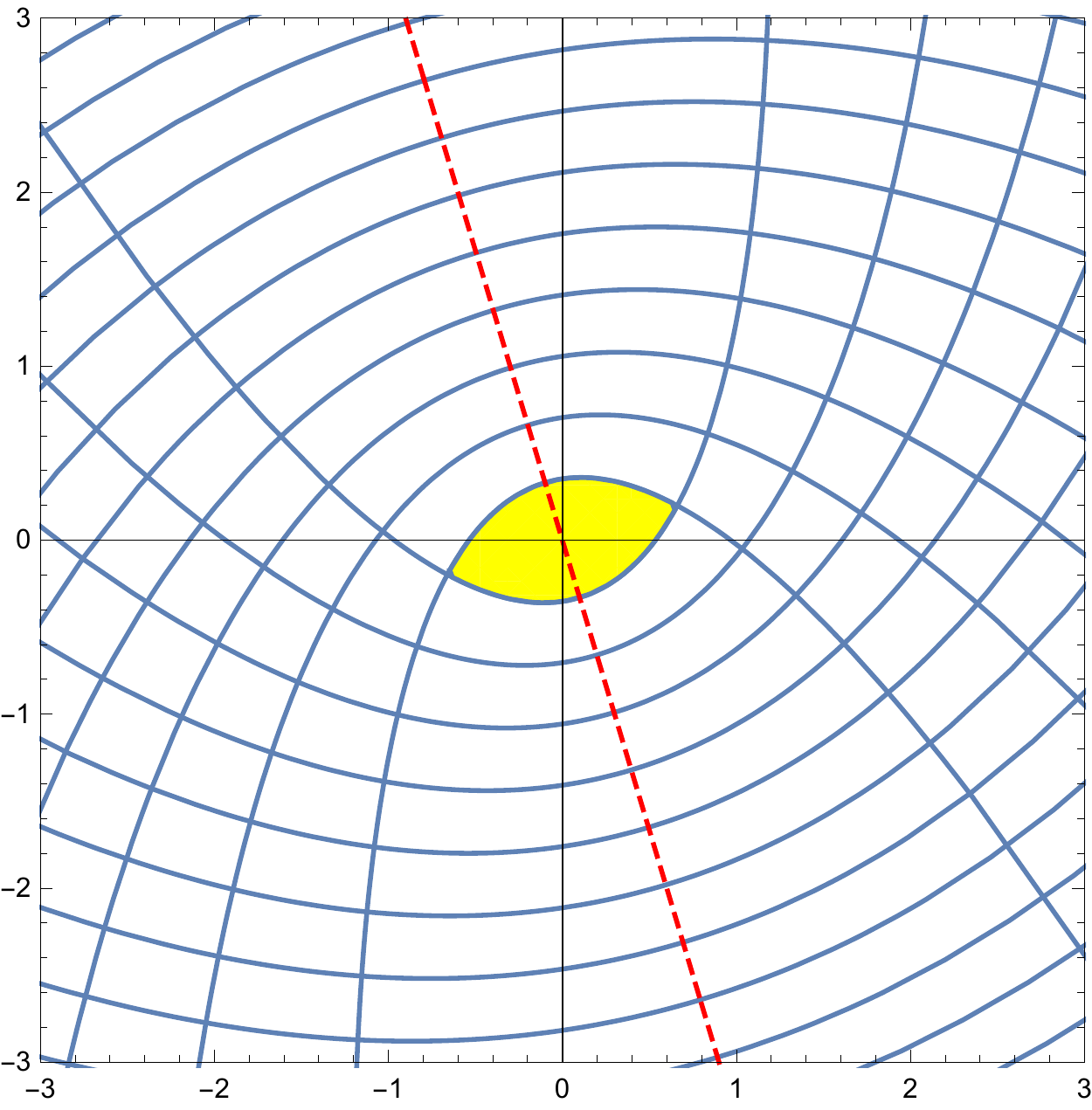}}
\caption{Different interpretations for the degenerate Voronoi cell~$\cV_\Lambda$. The set~$\cV_\Lambda$ is the intersection of the yellow lemon-shaped region and the red line in Fig.~\ref{secfigbcount}. }
\end{figure}

\subsection{A polytope containing the (transformed) Voronoi cell}\label{subsec:cp:propdegvoronoi}

To derive an upper bound on $p_S(\epsilon)$, we need detailed information about the Voronoi cell~$\cV_{\Sigma^{-1/2}}$ and the union of its thickened displaced versions (cf.\ \eqref{eq:transformedsetofrecoverablevectors}). For brevity, we use the shorthand $\Lambda\coloneqq \Sigma^{-1/2}$ in the following.  Recall that this matrix is positive definite and symmetric. The set of interest is
\begin{align}
\cV_{\Lambda}\coloneqq \left\{z\in\mathbb{R}^n\ \left|\ \|z\|\leq \left\|z-\Lambda \binom{h}{\Delta b}\right\| \textrm{ for all }(h,b)\in\mathbb{R}^k\times\mathbb{Z}^{n-k}\right.\right\}\ .\label{eq:vLambdainterest}
\end{align}
This is the Voronoi cell of the degenerate lattice $\Lambda(\mathbb{R}^k\times \Delta\mathbb{Z}^{n-k})$. It can be thought of in several different ways: on the one hand, it is the intersection  of an uncountable number of half-spaces
\begin{align}
\cH_{(h,b)}\coloneqq \left\{z\in\mathbb{R}^n\ \left|\ \left\langle \Lambda\binom{h}{\Delta b},z\right\rangle \leq \frac{1}{2} \left\|\Lambda \binom{h}{\Delta b}\right\|^2\right.\right\}\qquad\textrm{ with }(h,b)\in\mathbb{R}^{k}\times\mathbb{Z}^{n-k}\ .\label{eq:hhb}
\end{align}
This is illustrated in Fig.~\ref{subfigacountable}. 

Alternatively, the set $\cV_{\Lambda}$ can be understood as the intersection of a countable number of sets of the form
\begin{align}
\cE_{b}\coloneqq \left\{z\in\mathbb{R}^{n}\ \left|\ \left\langle\Lambda\binom{h}{\Delta b},z\right\rangle \leq \frac{1}{2} \left\|\Lambda \binom{h}{\Delta b}\right\|^2\textrm{ for all }h\in\mathbb{R}^k\right.\right\}\qquad\textrm{ with } b\in\mathbb{Z}^{n-k}\ , \label{eq:cebcdef}
\end{align}
 see Fig.~\ref{secfigbcount} for an illustration. 
Each set $\cE_b$ with $b\neq 0$ is of the form $\cE_b:=\{z\in\mathbb{R}^{n}\ |\ q_b(z)\geq 0\}$ for a quadratic form $q_b:\mathbb{R}^n\rightarrow\mathbb{R}$.  The explicit form of $q_b$ is given in Lemma~\ref{lem:quadraticformb} in Appendix~\ref{app:quadraticform}. The set $\cE_0$ is specified by linear equality constraints, see Lemma~\ref{lem:usefulpolytope} below.

Our goal here is not to provide a full characterization of this set. Instead, we establish a few necessary conditions for $v\in\mathbb{R}^n$ to belong to this set. To state these conditions, let $\langle x,y\rangle=\sum_{j=1}^n x_jy_j$ denote the standard inner product on~$\mathbb{R}^n$, and let $\{e_j\}_{j=1}^n$ be the canonical orthonormal basis of~$\mathbb{R}^n$.

Let $\hat{\Lambda}\in\mathsf{Mat}_{k\times (n-k)}(\mathbb{R})$ be an arbitrary matrix (to be chosen later).
Consider the lattice $\cL$ generated by the matrix 
\begin{align}
L(\Lambda,\hat{\Lambda},\Delta):=\Lambda\cdot \begin{pmatrix}
I & \Delta\hat{\Lambda}\\
0 & \Delta I
\end{pmatrix}\ ,\label{eq:Llambdadef}
\end{align}
i.e., $\cL=L\mathbb{Z}^{n}$. (Note that $L$ has non-zero determinant, whence $\cL$ is well-defined.)
Let $\cV(\cL)$ denote the Voronoi cell of $\cL$. 
By definition, $v\in\cV(\cL)$ if and only if 
\begin{align}
|\langle x, v\rangle|\leq \frac{\|x\|^2}{2}\qquad\textrm{ for every }x\in\cL\ . \label{eq:latticevoronoidef} 
\end{align}
The following shows that the Voronoi cell~$\cV(\cL)$ of~$\cL$ contains the set~$\cV_\Lambda$ of interest.

\begin{lemma}\label{lem:usefulpolytope}
Let $v\in \cV_\Lambda$. Then
\begin{align}
\langle \Lambda e_j,v\rangle &=0\qquad\textrm{ for }j=1,\ldots,k\label{eq:voronoicoeff}
\end{align}
and
\begin{align}
\cV_\Lambda\subset \cV(\cL)\ .\label{eq:voronoicellinclusion}
\end{align}
\end{lemma}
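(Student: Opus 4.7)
My plan is to work directly from the defining inequality of $\cV_\Lambda$ in~\eqref{eq:vLambdainterest}. For any $v \in \cV_\Lambda$ and any $(h,b) \in \mathbb{R}^k \times \mathbb{Z}^{n-k}$, squaring $\|v\| \leq \|v - \Lambda\binom{h}{\Delta b}\|$ and simplifying gives $\langle v, \Lambda\binom{h}{\Delta b}\rangle \leq \tfrac{1}{2}\|\Lambda\binom{h}{\Delta b}\|^2$. Since $\mathbb{R}^k \times \mathbb{Z}^{n-k}$ is closed under negation, substituting $(h,b) \to (-h,-b)$ yields the two-sided bound
\begin{align}
\left|\left\langle v,\Lambda\binom{h}{\Delta b}\right\rangle\right| \leq \tfrac{1}{2}\left\|\Lambda\binom{h}{\Delta b}\right\|^2 \qquad\textrm{for all }(h,b) \in \mathbb{R}^k \times \mathbb{Z}^{n-k}. \label{eq:mykeybd}
\end{align}

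For statement~\eqref{eq:voronoicoeff}, I exploit the degeneracy of the lattice in the continuous $\mathbb{R}^k$ direction. Fixing $b=0$ and replacing $h$ by $th$ for arbitrary $t \in \mathbb{R}$ in~\eqref{eq:mykeybd} gives $|t|\,|\langle v, \Lambda\binom{h}{0}\rangle| \leq \tfrac{t^2}{2}\|\Lambda\binom{h}{0}\|^2$. Dividing by $|t|$ and letting $t \to 0^+$ forces $\langle v, \Lambda\binom{h}{0}\rangle = 0$ for every $h \in \mathbb{R}^k$. Specializing to $h = e_j$ (viewed as the $j$-th standard basis vector of $\mathbb{R}^k$, embedded in $\mathbb{R}^n$ by padding with zeros) yields $\langle \Lambda e_j, v\rangle = 0$ for $j=1,\ldots,k$, which is~\eqref{eq:voronoicoeff}.

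For the inclusion~\eqref{eq:voronoicellinclusion}, I show that every vector of $\cL = L\mathbb{Z}^n$ already appears among the vectors $\Lambda\binom{h}{\Delta b}$ with $(h,b) \in \mathbb{R}^k \times \mathbb{Z}^{n-k}$ that are constrained in~\eqref{eq:mykeybd}. Indeed, writing an arbitrary $m \in \mathbb{Z}^n$ as $m = \binom{m_1}{m_2}$ with $m_1 \in \mathbb{Z}^k$ and $m_2 \in \mathbb{Z}^{n-k}$, the definition~\eqref{eq:Llambdadef} of $L$ gives
\begin{align}
Lm = \Lambda \begin{pmatrix} I & \Delta\hat{\Lambda}\\ 0 & \Delta I\end{pmatrix}\binom{m_1}{m_2} = \Lambda \binom{m_1 + \Delta\hat{\Lambda} m_2}{\Delta m_2},
\end{align}
which is of the form $\Lambda\binom{h}{\Delta b}$ with $h = m_1 + \Delta\hat{\Lambda} m_2 \in \mathbb{R}^k$ and $b = m_2 \in \mathbb{Z}^{n-k}$. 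Therefore $\cL \subset \Lambda(\mathbb{R}^k \times \Delta\mathbb{Z}^{n-k})$, and~\eqref{eq:mykeybd} applied to each $x = Lm \in \cL$ gives exactly the defining condition~\eqref{eq:latticevoronoidef} of $\cV(\cL)$, establishing $v \in \cV(\cL)$.

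I do not expect a genuine obstacle: the only subtle step is recognizing that the continuous $\mathbb{R}^k$ slot in the degenerate lattice forces linear (as opposed to half-space) equality constraints, which is handled by the scaling trick above. The containment~\eqref{eq:voronoicellinclusion} is then automatic once one observes that the columns of $L$ are a specific $\mathbb{Z}$-spanning choice inside the degenerate lattice.
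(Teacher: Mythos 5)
Your proof is correct and follows essentially the same approach as the paper's: specializing the defining inequality of $\cV_\Lambda$ to $b=0$ with arbitrary scalar multiples of $e_j$ to force the equality constraints, and observing that $\cL\subset\Lambda(\mathbb{R}^k\times\Delta\mathbb{Z}^{n-k})$ to obtain the inclusion. The only cosmetic difference is that you first rewrite the Voronoi condition as the explicit two-sided half-space bound and then pass to the limit $t\to 0$, whereas the paper argues directly from $\|v\|\leq\|v-\Lambda\delta e_j\|$ for all $\delta\in\mathbb{R}$; both are the same idea.
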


\begin{proof}
Let $v\in\cV_\Lambda$ and $j=1,\ldots,k$. We have 
\begin{align}
\|v\|&\leq \|v-\Lambda \delta e_j\|\qquad\textrm{ for all }\delta\in\mathbb{R}\ 
\end{align}
by choosing $\binom{h}{\Delta b}=\binom{h}{0}=\delta e_j$ accordingly in Eq.~\eqref{eq:vLambdainterest}. 
This implies~\eqref{eq:voronoicoeff}. 

Similarly, choosing $\binom{h}{\Delta b}=\binom{a+\hat{\Lambda} \Delta b}{\Delta b}$ with $(a,b)\in\mathbb{Z}^k\times\mathbb{Z}^{n-k}$ yields the condition
\begin{align}
\|v\|\leq \left\|v-\Lambda\binom{a+\hat{\Lambda}\Delta b}{\Delta b}\right\|\qquad\textrm{ for all }(a,b)\in\mathbb{Z}^n\ 
\end{align}
for any $v\in\cV_\Lambda$. 
This is equivalent to
\begin{align}
\|v\|\leq \left\|v-x\right\|\qquad\textrm{ for all }x\in\cL\ ,
\end{align}
and thus~\eqref{eq:voronoicellinclusion} follows. 
\end{proof}

Let $\cP_\Lambda$ be the set of $v\in\cV(\cL)$ which satisfy~\eqref{eq:voronoicoeff}.

\begin{figure}
\centering
\includegraphics[width=8cm]{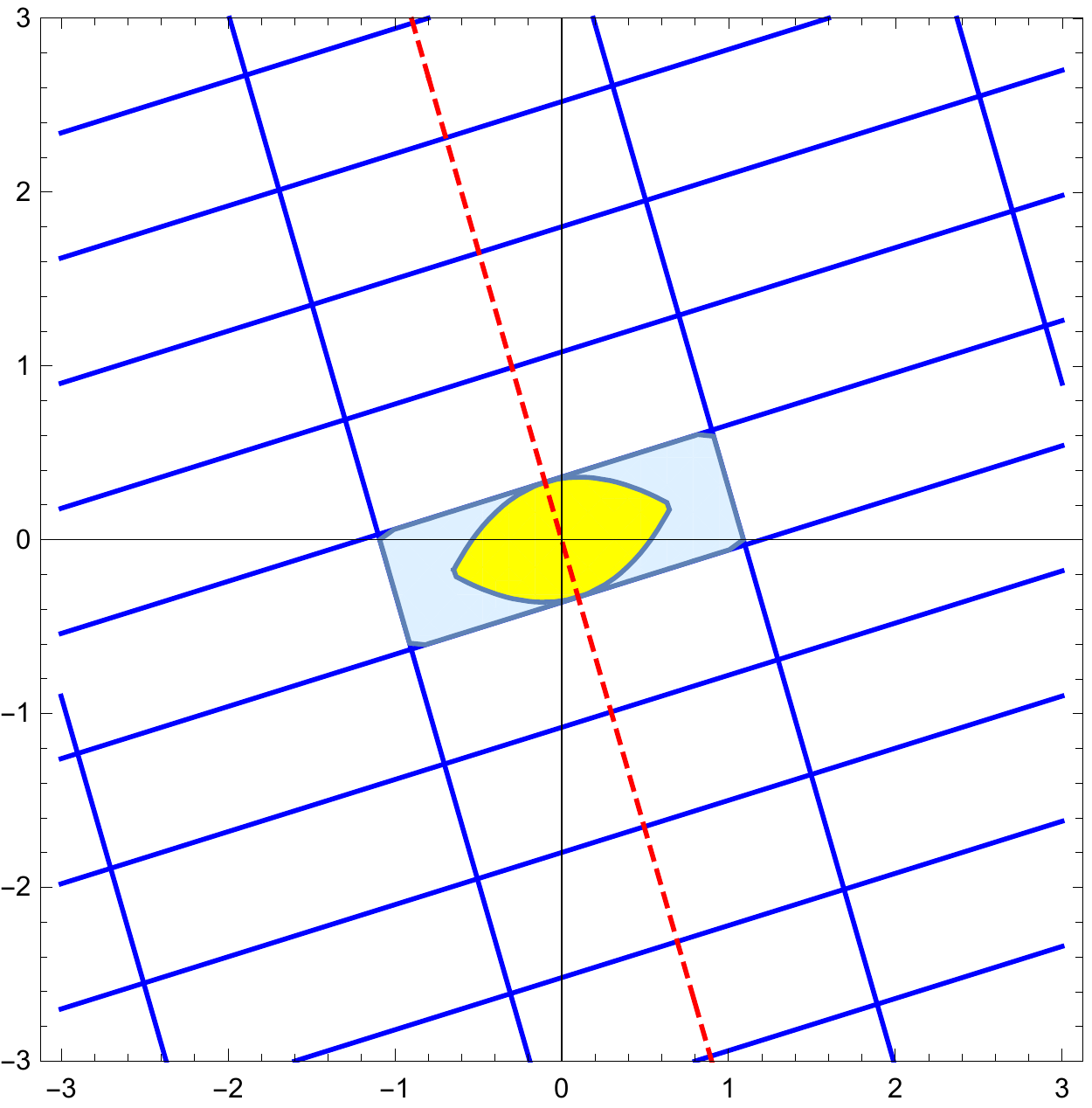}
\caption{Translates of the Voronoi cell of the lattice~$L\mathbb{Z}^{n}$, the degenerate Voronoi cell~$\cV_\Lambda$ (intersection of the yellow lemon-shaped region with the red straight line), and the polytope~$\cP_\Lambda$ (intersection of the rectangular blue region with the red straight line). }
\end{figure}

\begin{lemma}\label{lem:polytope}
$\cP_\Lambda$ is an $(n-k)$-dimensional polytope.
\end{lemma}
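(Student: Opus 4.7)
The plan is to exhibit $\cP_\Lambda$ as the intersection of a full-dimensional bounded polytope in $\mathbb{R}^n$ with an $(n-k)$-dimensional affine subspace passing through a relative interior point of that polytope.

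First I would observe that $\cL = L\mathbb{Z}^n$ is a full-rank lattice in $\mathbb{R}^n$, because the matrix~$L = L(\Lambda,\hat\Lambda,\Delta)$ in~\eqref{eq:Llambdadef} has nonzero determinant (its factors $\Lambda$ and $\begin{psmallmatrix} I & \Delta\hat\Lambda\\ 0 & \Delta I\end{psmallmatrix}$ are both invertible). By standard Voronoi theory, the Voronoi cell $\cV(\cL)$ defined by the half-space constraints~\eqref{eq:latticevoronoidef} is a bounded convex polytope of full dimension $n$: although~\eqref{eq:latticevoronoidef} a priori imposes infinitely many constraints, the inequalities $|\langle x, v\rangle| \leq \|x\|^2/2$ are automatically satisfied once $\|x\|$ exceeds twice the covering radius of $\cL$, so only finitely many half-space constraints are binding, and $\cV(\cL)$ is a polytope.

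Next I would analyze the affine constraints~\eqref{eq:voronoicoeff}. Because $\Lambda = \Sigma^{-1/2}$ is invertible, the vectors $\Lambda e_1,\ldots,\Lambda e_k$ are linearly independent, so the linear subspace
\begin{align}
W \coloneqq \{v\in\mathbb{R}^n \mid \langle \Lambda e_j,v\rangle = 0\text{ for } j=1,\ldots,k\}
\end{align}
has dimension exactly $n-k$. By definition, $\cP_\Lambda = \cV(\cL)\cap W$, so $\cP_\Lambda$ is the intersection of a polytope with a linear subspace. Such an intersection is again a polytope of dimension at most $\dim W = n-k$.

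It remains to show that $\dim \cP_\Lambda \geq n-k$. For this, I would argue that $0$ is in the interior of $\cV(\cL)$: for every nonzero $x\in\cL$ one has $|\langle x,0\rangle| = 0 < \|x\|^2/2$, and since the lattice is discrete, these strict inequalities persist in some open Euclidean ball $\cB_r(0)\subset\mathbb{R}^n$ around $0$. Intersecting this ball with $W$ yields a relatively open neighborhood of $0$ in $W$ that is entirely contained in $\cP_\Lambda$, so $\cP_\Lambda$ contains an $(n-k)$-dimensional open subset of $W$. Combined with the upper bound from the previous paragraph, this gives $\dim\cP_\Lambda = n-k$.

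I expect the main obstacle to be the (standard but not entirely trivial) claim that $\cV(\cL)$ is genuinely a polytope—i.e.\ that only finitely many of the uncountably many half-space constraints in~\eqref{eq:latticevoronoidef} are actually active. The estimate above via the covering radius handles this, but writing it out carefully is the bulk of the work; everything else is linear-algebraic bookkeeping.
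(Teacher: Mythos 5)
Your proof is correct and follows essentially the same route as the paper's: identify $\cP_\Lambda$ as the intersection of the full-dimensional polytope $\cV(\cL)$ with the $(n-k)$-dimensional subspace $W$ cut out by the $k$ independent linear constraints~\eqref{eq:voronoicoeff}. The one place you go beyond the paper is worth noting: the paper's proof stops after observing that the $k$ equality constraints are independent, which only bounds the dimension of $\cP_\Lambda$ from above by $n-k$; your additional observation that $0$ lies in the interior of $\cV(\cL)$ (and on $W$), so that $\cB_r(0)\cap W\subset\cP_\Lambda$ is a relatively open $(n-k)$-dimensional set, is precisely what is needed to pin the dimension down to exactly $n-k$, and the paper leaves this implicit.
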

\begin{proof}
The Voronoi cell~$\cV(\cL)$ is a  polytope, i.e., the intersection of a finite number of halfspaces. Furthermore, it is an $n$-dimensional polytope. (For a detailed proof of these and related facts, see e.g.,~\cite[Chapter 4]{coopel2009}.) The definition of the set~$\cP_\Lambda$ additionally includes the linear equality constraints~\eqref{eq:voronoicoeff} (which are all independent since the matrix~$\Lambda$ is invertible), hence the claim follows.
\end{proof}
 By Lemma~\ref{lem:usefulpolytope}, we have~$\cV_\Lambda\subset \cP_\Lambda$, i.e., $\cV_\Lambda$ is contained in the $(n-k)$-dimensional polytope $\cP_\Lambda$. We will use $\cP_\Lambda$ as a proxy for~$\cV_\Lambda$ in the following.

\subsection{Parametrization of~$\cP_\Lambda$ and choice of $\hat{\Lambda}$\label{sec:parametrizationPLambda}}
Lemma~\ref{lem:polytope} shows that  $n-k$~real parameters suffice to parametrize the set~$\cP_\Lambda$.  We now make this parametrization explicit. To do so, it is convenient to make a specific choice of $\hat{\Lambda}\in \mathsf{Mat}_{k\times (n-k)}(\mathbb{R})$ (see Eq.~\eqref{eq:hatlambdadefinition} below). 

It is easy to show (using a proof analogous to that of Lemma~\ref{lem:sigmahattwo} below) that every element $v\in\cP_\Lambda$ is uniquely determined by its projection~$\Pi_B v$.
For our purposes, it is more convenient to derive such a statement not for~$\cP_\Lambda$, but for the transformed set~$\Lambda^{-1}\cP_\Lambda$.

To state this result, we will henceforth identify linear maps $M:\mathbb{R}^n\rightarrow\mathbb{R}^n$ with their associated matrix $M_{i,j}=\langle e_i,M e_j\rangle$ when expressed in the standard basis. We also define block matrices corresponding to the partition~$\mathbb{R}^n\cong \mathbb{R}^k\times\mathbb{R}^{n-k}$, i.e., we write
\begin{align}
M&=\begin{pmatrix}
M_{AA}&M_{AB}\\
M_{BA}&M_{BB}
\end{pmatrix}\ .
\end{align}

\begin{lemma}\label{lem:sigmahattwo}
Let $\hat{\Lambda}:\mathbb{R}^{n-k}\rightarrow\mathbb{R}^k$ be defined as 
\begin{align}
\hat{\Lambda}=- ((\Lambda^2)_{AA})^{-1}(\Lambda^2)_{AB}\ .\label{eq:hatlambdadefinition}
\end{align}
Then 
\begin{align}
(\pA-\hat{\Lambda}\pB)\Lambda^{-1}\cP_\Lambda=\{0\}\ .\label{eq:parametrizationequationtwo}
\end{align}
\end{lemma}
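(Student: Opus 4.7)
The plan is to unpack the linear constraint \eqref{eq:voronoicoeff} defining $\cP_\Lambda$ in block form, and then rewrite it after the change of variables $w := \Lambda^{-1}v$. Since $\Lambda$ is symmetric, the constraint $\langle \Lambda e_j, v\rangle = 0$ for $j=1,\dots,k$ in the definition of $\cP_\Lambda$ is equivalent to $\langle e_j, \Lambda v \rangle = 0$ for $j=1,\dots,k$, which in block-matrix notation is precisely $\Pi_A(\Lambda v) = 0$. Thus, writing $v = \Lambda w$ with $w \in \Lambda^{-1}\cP_\Lambda$, every element of $\cP_\Lambda$ satisfies
\begin{align}
\Pi_A(\Lambda^2 w) = 0\ .
\end{align}

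Next, I would decompose $\Lambda^2$ according to the partition $\mathbb{R}^n \cong \mathbb{R}^k \times \mathbb{R}^{n-k}$, so that the identity above becomes
\begin{align}
(\Lambda^2)_{AA}\, w_A + (\Lambda^2)_{AB}\, w_B = 0\ .
\end{align}
The key observation is that $(\Lambda^2)_{AA}$ is invertible: since $\Lambda$ is positive definite and symmetric, so is $\Lambda^2$, and any principal submatrix of a positive definite matrix is itself positive definite (take the quadratic form and restrict the argument to the subspace spanned by the corresponding basis vectors). Hence one can solve for $w_A$, obtaining
\begin{align}
w_A = -\bigl((\Lambda^2)_{AA}\bigr)^{-1}(\Lambda^2)_{AB}\, w_B = \hat{\Lambda}\, w_B\ ,
\end{align}
with $\hat{\Lambda}$ as defined in \eqref{eq:hatlambdadefinition}.

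Finally, the equation $w_A = \hat{\Lambda} w_B$ is exactly the statement $\Pi_A w = \hat{\Lambda}\Pi_B w$, i.e., $(\Pi_A - \hat{\Lambda}\Pi_B)w = 0$ for every $w \in \Lambda^{-1}\cP_\Lambda$. This yields \eqref{eq:parametrizationequationtwo}. There is essentially no technical obstacle here: the lemma is a direct linear-algebraic consequence of the defining equality constraints of $\cP_\Lambda$ together with the invertibility of the principal block $(\Lambda^2)_{AA}$, which is precisely what motivates the choice~\eqref{eq:hatlambdadefinition} of $\hat{\Lambda}$. The only point worth double-checking is the sign/transpose conventions: because $\Lambda$ is symmetric, $\langle \Lambda e_j, v\rangle$ and $\langle e_j, \Lambda v\rangle$ coincide, so no adjoint appears and the block expression is the straightforward one.
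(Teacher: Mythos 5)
Your proof is correct and follows essentially the same route as the paper's: change variables to $w = \Lambda^{-1}v$, use symmetry of $\Lambda$ to rewrite the equality constraints of $\cP_\Lambda$ as $\Pi_A(\Lambda^2 w)=0$, expand in block form, observe that $(\Lambda^2)_{AA}$ is invertible as a principal submatrix of a positive definite matrix, and solve for $\Pi_A w$. The only cosmetic difference is that the paper writes the argument via inner products $\langle e_j, (\Lambda^2)_{AA}\Pi_A w + (\Lambda^2)_{AB}\Pi_B w\rangle$ before solving, whereas you pass directly to the block-vector identity; the content is identical.
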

\begin{proof}
Let $w\in \Lambda^{-1}\cP_\Lambda$. Then  $\Lambda w\in\cP_\Lambda$ and thus 
\begin{align}
\langle \Lambda e_j,\Lambda w\rangle &=0\qquad\textrm{ for all }j=1,\ldots ,k\label{eq:zerocondatwo}
\end{align}
by definition of $\cP_\Lambda$. Using the assumption that $\Lambda^T=\Lambda$ is symmetric, we have 
\begin{align}
\langle \Lambda e_j,\Lambda w\rangle&= \langle e_j, \Lambda^2 w\rangle\\
&= \langle e_j, \Pi_A \Lambda^2 w\rangle\\
&=\langle e_j, (\Lambda^2)_{AA} \Pi_A w+(\Lambda^2)_{AB}\Pi_B w\rangle\ 
\end{align}
for all $j=1,\ldots,k$. 
Inserted into~\eqref{eq:zerocondatwo}, we conclude that
\begin{align}
(\Lambda^2)_{AA} \Pi_A w&=-(\Lambda^2)_{AB}\Pi_B w\ .
\end{align}
The matrix $(\Lambda^2)_{AA}$ is a principal submatrix of  the positive definite matrix~$\Lambda^2$, hence it is also positive definite and thus invertible. Since~$w\in \Lambda^{-1}\cP_\Lambda$ was arbitrary, this implies that 
\begin{align}
\pA w&= \hat{\Lambda}\pB w\qquad\textrm{ for all }\qquad w\in\Lambda^{-1}\cP_\Lambda\ ,
\end{align}
which is the claim.
\end{proof}

Eq.~\eqref{eq:parametrizationequationtwo} of Lemma~\ref{lem:sigmahattwo} shows that the map
\begin{align}
\begin{matrix}
\varphi: & \Pi_B\Lambda^{-1}\cP_\Lambda & \rightarrow &  \Lambda^{-1}\cP_\Lambda\\
& z & \mapsto & \binom{\hat{\Lambda} z}{z}\ 
\end{matrix}\label{eq:varphimap}
\end{align}
is bijective. Thus we have obtained a bijective parametrization of the transformed polytope~$\Lambda^{-1}\cP_\Lambda$.

\subsection{Almost-unique parametrization of translates\label{sec:translationparametrization}}
The following is well-known (see e.g.,~\cite[Chapter 4]{coopel2009}): If $\cV$ is the Voronoi cell of the lattice $\Lambda \mathbb{Z}^n$, then the lattice translates 
$\Lambda c+ \cV$, $c\in\mathbb{Z}^n$ are  a tiling of $\mathbb{R}^n$, i.e.,
\begin{align}
\mathbb{R}^n &= \bigcup_{c\in \mathbb{Z}^n}  (\Lambda c+ \cV)\ ,\\
\mathsf{int}\left(\Lambda c+ \cV\right)\cap \mathsf{int} \left(\Lambda c'+\cV\right)&=\emptyset \qquad\textrm{ if }c,c'\in\mathbb{Z}^n\textrm{ and }c\neq c'\ ,
\end{align}
where $\mathsf{int}(M)$ denotes the interior of the set~$M$. 
Furthermore, the intersection $\left(\Lambda c+ \cV\right)\cap  \left(\Lambda c'+ \cV\right)$ (if non-empty) 
is a face of both $\Lambda c+ \cV$ and $\Lambda c'+ \cV$. We show a similar statement for the polytope~$\cP_\Lambda$, again considering $\Lambda^{-1}\cP_\Lambda$ for convenience. In particular, we give a partial characterization of the intersection of two translates. 

In more detail, our goal is to give a parametrization of the union of translates $\binom{0}{\Delta b}+ \Lambda^{-1}\cP_\Lambda$ with $b\in\mathbb{Z}^{n-k}$
of the transformed polytope $\Lambda^{-1}\cP_\Lambda$. Since the latter can be uniquely parametrized by the projected set $\Pi_B\Lambda^{-1}\cP_\Lambda$ (see Eq.~\eqref{eq:varphimap}), we would like to show that projecting the translates also leads to disjoint sets. We show that this is essentially the case up to zero-measure sets, see Lemma~\ref{lem:translateintersection} below.

As a preparation, let us first consider translations of the transformed polytope $\Lambda^{-1}\cP_\Lambda$ along certain lattice directions of the lattice~$\Lambda^{-1}\cL$.
Recall that by definition of~$\cL$ (cf.~\eqref{eq:Llambdadef}), the lattice~$\Lambda^{-1}\cL$ consists of points $\binom{a+\hat{\Lambda }\Delta b}{\Delta b}$ with $(a,b)\in\mathbb{Z}^k\times\mathbb{Z}^{n-k}$. 

\begin{lemma}\label{lem:nonstandardtranslate}
For $b\in\mathbb{Z}^{n-k}$ set
\begin{align}
\cU_b\coloneqq  \binom{\hat{\Lambda}\Delta b}{\Delta b}+\Lambda^{-1}\cP_\Lambda\ .
\end{align}
Let $b,b'\in\mathbb{Z}^{n-k}$ with $b\neq b'$. Then
\begin{align}
\Pi_B(\cU_b\cap \cU_{b'})
\end{align}
is contained in a polytope of dimension~$(n-k-1)$.
\end{lemma}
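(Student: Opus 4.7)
The plan is to show that any $z \in \cU_b \cap \cU_{b'}$ with $b \neq b'$ has a representative $v \in \Lambda^{-1}\cP_\Lambda$ that is forced to lie on a specific affine hyperplane transverse to the affine hull of $\cP_\Lambda$, which cuts the dimension down by one. First I would write $z$ in its two equivalent forms $z = \binom{\hat{\Lambda}\Delta b}{\Delta b} + v = \binom{\hat{\Lambda}\Delta b'}{\Delta b'} + v'$ with $v, v' \in \Lambda^{-1}\cP_\Lambda$. A direct computation from the definition~\eqref{eq:Llambdadef} of $L$ then yields $\Lambda(v - v') = L\binom{0}{b - b'}$, which lies in $\cL\setminus\{0\}$ because $b \neq b'$ and $L$ is invertible.

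Setting $x := \Lambda(v - v') \in \cL$, I would next exploit the Voronoi-cell inequalities~\eqref{eq:latticevoronoidef}: because $\Lambda v, \Lambda v' \in \cP_\Lambda \subseteq \cV(\cL)$, both $|\langle x, \Lambda v\rangle|$ and $|\langle x, \Lambda v'\rangle|$ are bounded by $\|x\|^2/2$, while at the same time $\langle x, \Lambda v\rangle - \langle x, \Lambda v'\rangle = \langle x, x\rangle = \|x\|^2$. The only way both inequalities can be compatible with this identity is for them to saturate with opposite signs, which forces $\langle x, \Lambda v\rangle = \|x\|^2/2$. Hence $\Lambda v$ is pinned to the affine hyperplane $H := \{y \in \mathbb{R}^n : \langle x, y\rangle = \|x\|^2/2\}$, and so $v \in \Lambda^{-1}(\cP_\Lambda \cap H)$.

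Next I would bound the dimension of $\cP_\Lambda \cap H$. The origin $0$ belongs to $\cP_\Lambda$ because it trivially satisfies both the Voronoi inequalities and the linear constraints~\eqref{eq:voronoicoeff}; so by Lemma~\ref{lem:polytope} the affine hull $\mathrm{aff}(\cP_\Lambda)$ is in fact a linear subspace of dimension $n - k$. On the other hand, since $x \neq 0$ we have $\|x\|^2/2 > 0$, so $H$ does not contain the origin and therefore cannot contain $\mathrm{aff}(\cP_\Lambda)$. Consequently $H \cap \mathrm{aff}(\cP_\Lambda)$ is a proper affine subspace of $\mathrm{aff}(\cP_\Lambda)$, of dimension at most $n - k - 1$, and $\cP_\Lambda \cap H$ is itself a polytope (being the intersection of two polyhedra) sitting inside this affine subspace.

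Finally, applying the invertible linear map $\Lambda^{-1}$ followed by the linear projection $\Pi_B$ preserves the polytope property and does not increase dimension. Since $\Pi_B z = \Delta b + \Pi_B v$, I conclude that $\Pi_B(\cU_b \cap \cU_{b'}) \subseteq \Delta b + \Pi_B \Lambda^{-1}(\cP_\Lambda \cap H)$, a polytope of dimension at most $n - k - 1$, as required. The only mildly subtle point, rather than a genuine obstacle, is certifying the transversality of $H$ with $\mathrm{aff}(\cP_\Lambda)$; this is handled cleanly by the observation that $0 \in \cP_\Lambda$ while $0 \notin H$.
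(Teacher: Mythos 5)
Your proof is correct, and it reaches the conclusion by a somewhat different path from the paper's. Both arguments share the central step: by writing an element of $\cU_b\cap\cU_{b'}$ in two ways and invoking the Voronoi inequalities~\eqref{eq:latticevoronoidef} for the two representatives, one forces the constraint $\langle x,\Lambda v\rangle=\|x\|^2/2$ with $x=\Lambda(v-v')\in\cL\setminus\{0\}$. (A cosmetic remark: you wrote $\Lambda(v-v')=L\binom{0}{b-b'}$, but the sign works out to $L\binom{0}{b'-b}$; this does not affect anything since only $x\neq 0$ and $\|x\|^2$ enter.) Where the two proofs diverge is in how they certify that this linear constraint genuinely drops the dimension of the relevant set by one. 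The paper applies the parametrization $\varphi$ of $\Lambda^{-1}\cP_\Lambda$, rewrites the constraint as $\langle K^TKb,z\rangle=c/\Delta$ with $K$ the invertible map $b\mapsto\Lambda\binom{\hat\Lambda b}{b}$, and concludes the coefficient vector $K^TKb$ is nonzero because $K^TK$ is positive definite. You instead argue geometrically: since $0\in\cP_\Lambda$, the affine hull $\mathrm{aff}(\cP_\Lambda)$ is a linear subspace of dimension $n-k$; the hyperplane $H=\{y:\langle x,y\rangle=\|x\|^2/2\}$ misses the origin because $\|x\|^2/2>0$; hence $\mathrm{aff}(\cP_\Lambda)\not\subseteq H$ and $\cP_\Lambda\cap H$ sits in an affine subspace of dimension at most $n-k-1$, from which the conclusion follows by applying the linear maps $\Lambda^{-1}$ and $\Pi_B$ and translating by $\Delta b$. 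Your transversality argument is somewhat more elementary: it avoids introducing $K$ and the explicit parametrization $\varphi$ altogether, and it also dispenses with the paper's reduction to $b'=0$. The trade-off is that the paper's computation through $\varphi$ makes the linear constraint in the $z$-coordinates fully explicit, which is natural given that $\varphi$ is used heavily elsewhere in the section; your route buys brevity here at the cost of not exhibiting that explicit form.
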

 \begin{proof}
 By considering the difference of $b$ and $b'$, we can assume that~$b\neq 0$ and $b'=0$ without loss of generality.
 Suppose that $ w\in\cU_b\cap \cU_0$. 
 The fact that $w\in\cU_b$ implies that there is some $w'\in\Lambda^{-1}\cP_\Lambda$ such that
\begin{align}
w&=w'+\Lambda^{-1 }x\qquad\textrm{ where }\qquad x=\Lambda\binom{\hat{\Lambda}\Delta b}{\Delta b}\in\cL\ .\label{eq:xdefinitionh}
\end{align}
Observe that $x\neq 0$ as $b\neq 0$ and the (linear) map that takes $b$ to $x$ is invertible. 
By Eq.~\eqref{eq:latticevoronoidef} and  because $w,w'\in\Lambda^{-1}\cP_\Lambda$ we have 
\begin{align}
    \|x\|^{-2} \cdot \left|\langle x,\Lambda w\rangle \right|&\leq 1/2\ \label{eq:secondinequality}\\
   \|x\|^{-2} \cdot \left|\langle x,\Lambda w'\rangle \right|&\leq 1/2\label{eq:firstequality}\ .
\end{align}
Suppose  that $\|x\|^{-2}\cdot \langle x,\Lambda w'\rangle >-1/2$. Then
\begin{align}
  \|x\|^{-2} \cdot \langle x,\Lambda w\rangle
  &=   \|x\|^{-2} \cdot \langle x,\Lambda w'+x\rangle\\
  &=  \|x\|^{-2} \cdot \langle x,\Lambda w'\rangle+1\label{eq:xm}\\
  &>-1/2+1 =1/2\ ,
\end{align}
contradicting~\eqref{eq:secondinequality}.  It follows that we must have 
\begin{align}
 \| x\|^{-2} \cdot\langle  x,\Lambda w'\rangle&=-1/2\ ,
 \end{align}
 and with Eq.~\eqref{eq:xm} also 
\begin{align}
 \| x\|^{-2} \cdot \langle  x,\Lambda w\rangle&=1/2\ .
 \end{align}
In particular, every $w\in \cU_b\cap \cU_0$ satisfies the linear constraint
 \begin{align}
 \ell(w)=c\qquad\textrm{ where }\qquad \ell(w)\coloneqq \langle x,\Lambda w\rangle\textrm{ and }c\coloneqq \|x\|^{2}/2\ ,\label{eq:linearconstraintellwc}
 \end{align}
 with $x$ defined by~\eqref{eq:xdefinitionh}. 
 
 Let us now parametrize $\cU_b\cap \cU_0\subset \Lambda^{-1}\cP_\Lambda$ by the map~$\varphi$ (cf.~\eqref{eq:varphimap}), i.e., we write $w=\varphi(z)$ for some $z\in\Pi_B\Lambda^{-1}\cP_\Lambda$. Because 
\begin{align}
\ell(w)&=\left\langle \Lambda \binom{\hat{\Lambda}\Delta b}{\Delta b},\Lambda w\right\rangle\ ,
\end{align}
we have
\begin{align}
\ell(\varphi(z))&=\Delta\cdot \left\langle \Lambda \binom{\hat{\Lambda} b}{ b},\Lambda \binom{\hat{\Lambda}z}{z}\right\rangle\\
&=\Delta\cdot  \langle K b, Kz\rangle\ ,
\end{align}
where we introduced the map
\begin{align}
\begin{matrix}
K\colon &\mathbb{R}^{n-k}& \rightarrow\mathbb{R}^n\\
 &b&\mapsto &\Lambda \binom{\hat{\Lambda}b}{b}\ 
\end{matrix}\ .
\end{align}
The map~$K$ is invertible with  inverse $K^{-1}=\Pi_B\circ\Lambda^{-1}$ and thus has full rank.
It follows that $K^T K$ is positive definite. With Eq.~\eqref{eq:linearconstraintellwc}, we have thus obtained a non-trivial  linear constraint of the form
\begin{align}
\langle K^T K b,z\rangle=c/\Delta \qquad\textrm{ for any }\qquad z\in \varphi^{-1}(\cU_b\cap\cU_0)\subset \Pi_B \Lambda^{-1}\cP_{\Lambda}\ .\label{eq:mlinearcons}
\end{align}
In other words, $\varphi^{-1}(\cU_b\cap\cU_0)$ is the intersection of the $n-k-1$-dimensional hyperplane, described by the linear constraint~\eqref{eq:mlinearcons}, with $\Pi_B \Lambda^{-1}\cP_{\Lambda}$. Noting that linear transformations, as well as intersections with hyperplanes, map polytopes to polytopes, we conclude that $\varphi^{-1}(\cU_b\cap\cU_0)$ is contained in a polytope of dimension $n-k-1$. Since $\Pi_B w=\Pi_B \varphi(z)=z\in \varphi^{-1}(\cU_b\cap\cU_0)$ for $w\in \cU_b\cap\cU_0$, the same conclusion holds for $\Pi_B(\cU_b\cap\cU_0)\subset\mathbb{R}^{n-k}$.
 \end{proof}

We now consider  translations with respect to the standard rectangular grid instead of the lattice~$\cL$ used in the definition of~$\cP_\Lambda$.
\begin{lemma}\label{lem:translateintersection}
For $b\in\mathbb{Z}^{n-k}$ 
define 
\begin{align}
\cS_b&\coloneqq\binom{0}{\Delta b} +\Lambda^{-1}\cP_\Lambda\ .\label{eq:sbdefinition}
\end{align}
Let $b,b'\in\mathbb{Z}^{n-k}$ with $b\neq b'$. Then
\begin{align}
\Pi_B(\cS_b)\cap \Pi_B(\cS_{b'})
\end{align}
 is contained in a polytope of dimension~$(n-k-1)$.
 \end{lemma}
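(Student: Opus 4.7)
The plan is to reduce the claim to Lemma~\ref{lem:nonstandardtranslate} by observing that, although $\cS_b$ and $\cU_b$ differ as subsets of $\mathbb{R}^n$ (their translation vectors differ by $\binom{\hat{\Lambda}\Delta b}{0}$), their $B$-projections coincide. The key input is the parametrization $\varphi$ from Lemma~\ref{lem:sigmahattwo}, which identifies every element of $\Lambda^{-1}\cP_\Lambda$ uniquely with its $B$-component via $w=\binom{\hat{\Lambda}z}{z}$ for $z\in\Pi_B\Lambda^{-1}\cP_\Lambda$. Projecting both $\cS_b=\binom{0}{\Delta b}+\Lambda^{-1}\cP_\Lambda$ and $\cU_b=\binom{\hat{\Lambda}\Delta b}{\Delta b}+\Lambda^{-1}\cP_\Lambda$ onto the $B$-block yields the same set $\Delta b+\Pi_B\Lambda^{-1}\cP_\Lambda$, so $\Pi_B(\cS_b)=\Pi_B(\cU_b)$, and therefore
\begin{align}
\Pi_B(\cS_b)\cap\Pi_B(\cS_{b'})=\Pi_B(\cU_b)\cap\Pi_B(\cU_{b'}).
\end{align}

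Next I would upgrade this to the identity $\Pi_B(\cU_b)\cap \Pi_B(\cU_{b'})=\Pi_B(\cU_b\cap \cU_{b'})$. The inclusion $\supseteq$ is trivial. For the reverse, pick $y$ in both projections. Bijectivity of $\varphi$ forces the unique element of $\cU_b$ projecting onto $y$ under $\Pi_B$ to be
\begin{align}
\binom{\hat{\Lambda}\Delta b}{\Delta b}+\varphi(y-\Delta b)=\binom{\hat{\Lambda}y}{y},
\end{align}
and the identical computation with $b'$ in place of $b$ produces the very same vector $\binom{\hat{\Lambda}y}{y}$. This common element therefore lies in $\cU_b\cap \cU_{b'}$, so $y\in\Pi_B(\cU_b\cap \cU_{b'})$.

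Combining both identities, the claim reduces to Lemma~\ref{lem:nonstandardtranslate} applied to $b,b'$, which already gives that $\Pi_B(\cU_b\cap\cU_{b'})$ is contained in a polytope of dimension $n-k-1$. I do not anticipate a real obstacle: the only subtle point is noticing that the $A$-component of the translation drops out under $\Pi_B$ and that the $\varphi$-parametrization happens to produce the same vector $\binom{\hat{\Lambda}y}{y}$ regardless of whether $\cU_b$ or $\cU_{b'}$ is used as the source, so no lattice-level argument beyond what was done for $\cU_b$ is needed.
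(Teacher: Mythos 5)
Your proof is correct and is essentially the paper's argument, just reorganized: the paper takes $z\in\Pi_B\cS_b\cap\Pi_B\cS_0$, shows directly that $v=\binom{\hat\Lambda z}{z}$ lies in $\cU_0\cap\cU_b$, and applies Lemma~\ref{lem:nonstandardtranslate}; you package the same computation as the two identities $\Pi_B(\cS_b)=\Pi_B(\cU_b)$ and $\Pi_B(\cU_b)\cap\Pi_B(\cU_{b'})=\Pi_B(\cU_b\cap\cU_{b'})$ before invoking that lemma. (Only the $\subseteq$ direction of your second identity is actually used, so proving equality is a harmless small surplus.)
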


\begin{proof} 
Without loss of generality (cf.\ the proof of Lemma~\ref{lem:nonstandardtranslate}), assume that $b\neq 0$ and $b'=0$. Suppose
\begin{align}
z\in \Pi_B\cS_b\cap \Pi_B\cS_0\ .
\end{align}
Since $z\in \Pi_B\cS_0$ we have $\varphi(z)\in\cS_0$, i.e.,
\begin{align}
\binom{\hat{\Lambda}z}{z}\in\Lambda^{-1}\cP_\Lambda\ .
\end{align}
Since $z\in\Pi_B\cS_b$ we have 
\begin{align}
\varphi(z-\Delta b)\in\Lambda^{-1}\cP_\Lambda\ ,
\end{align}
that is,
\begin{align}
\binom{\hat{\Lambda}(z-\Delta b)}{z-\Delta b}\in\Lambda^{-1}\cP_\Lambda\ .
\end{align}
With $v\coloneqq \binom{\hat{\Lambda}z}{z}$ it follows that
\begin{align}
v\in\Lambda^{-1}\cP_\Lambda\qquad\textrm{ and }\qquad v-\binom{\hat{\Lambda}\Delta b}{\Delta b}\in \Lambda^{-1}\cP_\Lambda\ ,
\end{align}
or 
\begin{align}
v\in \Lambda^{-1}\cP_\Lambda \cap \left(\binom{\hat{\Lambda}\Delta b}{\Delta b}+\Lambda^{-1}\cP_\Lambda\right)=\cU_0\cap \cU_b\ .
\end{align}
With Lemma~\ref{lem:nonstandardtranslate}, we conclude that
$\Pi_Bv=z$ belongs to a polytope of dimension $(n-k-1)$, as claimed.
\end{proof}

Our goal is to upper bound integrals over the union~$\bigcup_{b\in\mathbb{Z}^{n-k}}\cS_b$ of translates~$\cS_b$ of the set~$\Lambda^{-1}\cP_\Lambda$. To this end, we are going to express~$\bigcup_{b\in\mathbb{Z}^{n-k}}\cS_b$   as the disjoint union of a countable number of measurable sets for which we have a unique parametrization (due to Lemma~\ref{lem:translateintersection}), and some measure-zero set (see Lemma~\ref{lem:zeromeasuredecomposition}). We will need the following lemma, which shows that any translate intersects only with a finite number of other translates.
 \begin{lemma}\label{lem:finiteintersection}
 For $b\in\mathbb{Z}^{n-k}$, let $\cS_b\subset\mathbb{R}^n$ be defined by~\eqref{eq:sbdefinition}. Then the following holds:
For any $b\in\mathbb{Z}^{n-k}$, there are only finitely many $b'\in\mathbb{Z}^{n-k}$ such that~$\cS_b\cap \cS_{b'}\neq \emptyset$.
  \end{lemma}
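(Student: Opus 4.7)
The plan is to prove the claim by a compactness/boundedness argument. The key observation is that $\cP_\Lambda$ is bounded, which reduces the question to a standard lattice counting fact.

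First I would note that $\cP_\Lambda$ is bounded: by Lemma~\ref{lem:polytope} it is a polytope, and polytopes (in the bounded sense, as the intersection of finitely many half-spaces that is bounded) are bounded. More concretely, $\cP_\Lambda \subset \cV(\cL)$, and the Voronoi cell of a full-rank lattice $\cL = L\mathbb{Z}^n$ in $\mathbb{R}^n$ (note $L$ has nonzero determinant, so the lattice is full-rank) is bounded. Since $\Lambda$ is invertible, $\Lambda^{-1}\cP_\Lambda$ is also bounded, and so is its projection $\Pi_B \Lambda^{-1}\cP_\Lambda \subset \mathbb{R}^{n-k}$. Let
\begin{align}
D \coloneqq \sup\bigl\{\|z - z'\| : z, z' \in \Pi_B \Lambda^{-1}\cP_\Lambda\bigr\} < \infty.
\end{align}

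Next I would unpack the intersection condition. If $w \in \cS_b \cap \cS_{b'}$, then there exist $v, v' \in \Lambda^{-1}\cP_\Lambda$ such that $w = v + \binom{0}{\Delta b} = v' + \binom{0}{\Delta b'}$. Applying $\Pi_B$ and rearranging gives
\begin{align}
\Delta(b - b') = \Pi_B v' - \Pi_B v,
\end{align}
so $\Delta \|b - b'\| \leq D$, i.e., $\|b - b'\| \leq D/\Delta$.

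Finally, fixing $b \in \mathbb{Z}^{n-k}$, the set of $b' \in \mathbb{Z}^{n-k}$ with $\|b - b'\| \leq D/\Delta$ is the intersection of a closed Euclidean ball of finite radius with the integer lattice $\mathbb{Z}^{n-k}$, which is finite. This gives the claim.

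I do not expect a real obstacle here, since the argument is essentially a statement that finitely many translates of a bounded set along integer directions can overlap a given one. The only point requiring a moment's thought is confirming that $\cP_\Lambda$ is genuinely bounded (as opposed to merely polyhedral), which follows from the fact that $\cL$ is a full-rank lattice in $\mathbb{R}^n$ and its Voronoi cell is therefore compact.
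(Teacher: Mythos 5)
Your argument is correct and is essentially the same as the paper's: both reduce the claim to showing that $\cS_b\cap\cS_{b'}\neq\emptyset$ forces $\|b-b'\|$ to lie in a fixed bounded set, using the boundedness of $\cP_\Lambda$ (equivalently, of the Voronoi cell $\cV(\cL)$ of the full-rank lattice $\cL$). The only cosmetic difference is that you bound $\Delta\|b-b'\|$ directly by the diameter of $\Pi_B\Lambda^{-1}\cP_\Lambda$ after projecting, whereas the paper bounds $\|\Lambda^{-1}(w-w')\|$ via $\lambda_{\max}(\Lambda^{-1})$ and the radius of $\cP_\Lambda$; these are interchangeable.
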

  \begin{proof}
  Let $b\in\mathbb{Z}^{n-k}$ be given. If $v\in\cS_{b}\cap \cS_{b'}$ for some $b'\in\mathbb{Z}^{n-k}$ then there
  are $w,w'\in\cP_\Lambda$ such that
  \begin{align}
  \binom{0}{\Delta b}+\Lambda^{-1} w&=\binom{0}{\Delta b'}+\Lambda^{-1}w'\ .
  \end{align}
  It follows that
  \begin{align}
  \|b-b'\| & = \|\Delta^{-1} \Lambda^{-1}(w-w')\|\\
  &\leq \lambda_{\max}\left(\Delta^{-1} \Lambda^{-1}\right)\cdot \|w-w'\|\\
  &\leq 2R\Delta^{-1}\lambda_{\max}\left( \Lambda^{-1}\right)
  \end{align}
  where $R:=\sup_{w\in\cP_\Lambda}\|w\|<\infty$ since $\cP_\Lambda$ is bounded. 
  Thus $b'\in\mathbb{Z}^{n-k}$ is contained in a ball of constant radius around $b$. The claim follows from this.
   \end{proof}

\newcommand*{\icS}{\mathring{\cS}}
We are interested in the union~$\bigcup_{b\in\mathbb{Z}^{n-k}}\cS_b$ of the translates~$\cS_b$, $b\in\mathbb{Z}^{n-k}$. To bound integrals over this set, the following statement will be used:
\begin{lemma}\label{lem:zeromeasuredecomposition}
There are measurable subsets
\begin{align}
\icS_b &\subset \cS_b\ ,\\
\cT_b &\subset \cS_b\qquad\textrm{ for }\qquad b\in\mathbb{Z}^{n-k}\ ,
\end{align}
such that
\begin{align}
\left(\bigcup_{b\in\mathbb{Z}^{n-k}}\icS_b\right)\cup \left(\bigcup_{b\in\mathbb{Z}^{n-k}}\cT_b\right)&=\bigcup_{b\in\mathbb{Z}^{n-k}}\cS_b\ ,\\
\icS_b\cap \icS_{b'}&=\emptyset\qquad\textrm{ for }b\neq b'\ ,\\
\icS_b\cap \cT_{b'}&=\emptyset\qquad\textrm{ for all }b,b'\ ,
\end{align}
and each set $\Pi_B\cT_b$ is of measure zero (and thus so is $\Pi_B\left(\bigcup_{b\in\mathbb{Z}^{n-k}}\cT_b\right)$).
\end{lemma}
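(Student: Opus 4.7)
The plan is to take $\cT_b$ to consist of the points of $\cS_b$ that also belong to some other translate, and to let $\icS_b$ be the remainder. Concretely, I would set
\begin{align}
\cT_b &\coloneqq \cS_b \cap \bigcup_{b' \in \mathbb{Z}^{n-k} \setminus\{b\}} \cS_{b'}\ ,\\
\icS_b &\coloneqq \cS_b \setminus \cT_b\ .
\end{align}
Since each $\cS_b$ is a translate of the closed polytope $\Lambda^{-1}\cP_\Lambda$, it is closed, hence Borel measurable; $\cT_b$ is a countable union of closed sets (in fact, by Lemma~\ref{lem:finiteintersection} only finitely many of the summands $\cS_b\cap \cS_{b'}$ are nonempty) and so is measurable, and so is $\icS_b=\cS_b\setminus \cT_b$.

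The set-theoretic claims follow directly from the definitions. For the union identity, observe that every $v\in\bigcup_b \cS_b$ lies in some $\cS_b$; if $v$ lies in no $\cS_{b'}$ with $b'\neq b$ then $v\in\icS_b$, otherwise $v\in\cT_b$, giving $\bigcup_b\cS_b=\bigcup_b \icS_b\cup\bigcup_b\cT_b$. For disjointness, if $b\neq b'$ and $v\in \icS_b\cap\icS_{b'}$, then in particular $v\in\cS_b\cap\cS_{b'}$, placing $v\in\cT_b$ and contradicting $v\in\icS_b$; the same argument shows $\icS_b\cap\cT_{b'}=\emptyset$ when $b\neq b'$ (since $\cT_{b'}\subset\cS_{b'}$), while $\icS_b\cap\cT_b=\emptyset$ holds by the definition of $\icS_b$.

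The heart of the statement is the measure-zero assertion for $\Pi_B\cT_b$. Using the finite index set $F_b\coloneqq \{b'\in\mathbb{Z}^{n-k}\setminus\{b\}\mid \cS_b\cap\cS_{b'}\neq\emptyset\}$ from Lemma~\ref{lem:finiteintersection}, I would write
\begin{align}
\Pi_B\cT_b \;=\; \bigcup_{b'\in F_b}\Pi_B(\cS_b\cap\cS_{b'}) \;\subset\; \bigcup_{b'\in F_b}\bigl(\Pi_B\cS_b\cap \Pi_B\cS_{b'}\bigr)\ .
\end{align}
By Lemma~\ref{lem:translateintersection}, each intersection $\Pi_B\cS_b\cap\Pi_B\cS_{b'}$ with $b'\neq b$ is contained in a polytope of dimension $n-k-1$, hence has Lebesgue measure zero in $\mathbb{R}^{n-k}$. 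A finite union of null sets is null, so $\Pi_B\cT_b$ is null, and since $\Pi_B\bigl(\bigcup_b \cT_b\bigr)=\bigcup_b \Pi_B\cT_b$ is a countable union of null sets, it too has measure zero.

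No serious obstacle is expected: the construction is essentially bookkeeping on top of Lemmas~\ref{lem:translateintersection} and~\ref{lem:finiteintersection}. The only mildly delicate point is ensuring measurability of $\cT_b$, which is handled cleanly by the local finiteness from Lemma~\ref{lem:finiteintersection}; without it one would still obtain an $F_\sigma$ set, but the finite-union description is cleaner and fits the later integration arguments in Section~\ref{sec:displacedthick}.
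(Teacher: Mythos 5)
Your proof is correct and follows essentially the same approach as the paper: both rely on Lemmas~\ref{lem:translateintersection} and~\ref{lem:finiteintersection} to make each $\Pi_B\cT_b$ a finite union of lower-dimensional sets. The only difference is cosmetic: the paper fixes an enumeration of $\mathbb{Z}^{n-k}$ and sets $\cT_J=\bigcup_{M>J}(\cS_J\cap\cS_M)$, whereas you define $\cT_b$ symmetrically as the intersection of $\cS_b$ with all other translates; the resulting sets $\icS_b$ coincide, and your version has the minor advantage of not requiring an arbitrary enumeration.
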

\begin{proof}
The set~$\mathbb{Z}^{n-k}$ is countable. Fix an enumeration of $\mathbb{Z}^{n-k}$. Slightly abusing notation, let us write $\{\cS_N\}_{N\in\mathbb{N}}$ instead of $\{\cS_b\}_{b\in\mathbb{Z}^{n-k}}$. 

Now define
\begin{align}
\cT_1&=\bigcup_{M>1}(\cS_1\cap \cS_M)\qquad\textrm{ and }\qquad \icS_1=\cS_1\backslash \cT_1\ ,\\
\cT_2&=\bigcup_{M>2}(\cS_2\cap \cS_M)\qquad\textrm{ and }\qquad \icS_2=\cS_2\backslash (\cT_1\cup \cT_2)\ ,
\end{align}
and more generally
\begin{align}
\cT_J&=\bigcup_{M>J} (\cS_J\cap \cS_M)\ ,\\
\icS_J&=\cS_J \setminus \left(\bigcup_{K\leq J} \cT_K\right)\ ,
\end{align}
for $J\in\mathbb{N}$. By definition, the sets $\icS_J$ are pairwise disjoint, $\icS_J\cap\cT_{J'}=\emptyset$ for all $J,J'\in\mathbb{N}$, and 
\begin{align}
\left(\bigcup_{N\in\mathbb{N}}\icS_N\right)\cup \left(\bigcup_{N\in\mathbb{N}}\cT_N\right)&=\bigcup_{N\in\mathbb{N}}\cS_N\ .
\end{align}
Furthermore, by Lemma~\ref{lem:finiteintersection}, each~$\cT_J$ is a union of a finite number of sets $\cS_J\cap\cS_M$, and therefore $\bigcup_{N\in\mathbb{N}}\cT_N$ is a countable union of such sets. Together with Lemma~\ref{lem:translateintersection}, this implies that $\Pi_B\cT_J$ ($\Pi_B\bigcup_{N\in\mathbb{N}}\cT_N$) is a finite (countable) union of sets $\Pi_B(S_J\cap S_M)\subset \Pi_B(S_J)\cap\Pi_B(S_M)$ contained in a polytope of dimension~$n-k-1$. Since any set of dimension~$<n-k$ has Lebesgue measure zero in~$\mathbb{R}^{n-k}$, and countable unions of zero-measure sets are also of measure zero, the claim follows. 
\end{proof}

\subsection{Integrating over the union of translated and thickened polytopes\label{sec:displacedthick}}
To obtain a bound on the decoding success probability, we first establish a general upper bound on the probability mass of 
the union of translated and thickened versions of the transformed polytope~$\Lambda^{-1}\cP_\Lambda$. This is expressed in Lemma~\ref{lem:mainintegrationlemmageneral} below. 

Let us first consider integrals over subsets of the set~$\cS_b$.
Recall that $\cB_\delta(0)\subset \mathbb{R}^k$ is the closed $\delta$-ball with respect to the Euclidean norm, centered at the origin. For the remainder of this paper, we denote by  
\begin{align}
\cS(\epsilon)\coloneqq \cS+\cB_\epsilon(0)\times \{0\}^{n-k}\label{eq:epsilonthickeningdef}
\end{align}
the $\epsilon$-thickening of a set~$\cS\subset\mathbb{R}^n$, for $\epsilon>0$. 

\begin{lemma}\label{lem:parametrizedintegral}
Let $b\in\mathbb{Z}^{n-k}$ be arbitrary.
Let $\cQ_b\subset\cS_b=\binom{0}{\Delta b}+\Lambda^{-1}\cP_\Lambda$ be a measurable subset. 
 Let $g\colon \mathbb{R}^n\rightarrow \mathbb{R}$ be integrable. Then
\begin{align}
\int_{\cQ_b(\epsilon)}g(x)d x &= \int_{\xi\in \Pi_B\cQ_b}\int_{h\in\cB_\epsilon(0)}  g\left(
\binom{\hat{\Lambda}(\xi-\Delta b)+h}{\xi}
\right) dhd\xi\ .
\end{align}
\end{lemma}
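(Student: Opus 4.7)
The plan is to set up an explicit bijective change of variables between $\Pi_B\cQ_b\times\cB_\epsilon(0)$ and $\cQ_b(\epsilon)$, compute the Jacobian (which will turn out to be $\pm 1$), and then apply the standard change-of-variables formula.

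First I will unwind the parametrization of $\cS_b$ itself using Lemma~\ref{lem:sigmahattwo}: every $x\in\cS_b=\binom{0}{\Delta b}+\Lambda^{-1}\cP_\Lambda$ can be written as $x=\binom{0}{\Delta b}+\Lambda^{-1}w$ with $w\in\cP_\Lambda$, and by Lemma~\ref{lem:sigmahattwo} we have $\Lambda^{-1}w=\binom{\hat{\Lambda}z}{z}$ where $z\coloneqq \Pi_B\Lambda^{-1}w$. Setting $\xi\coloneqq \Pi_B x=z+\Delta b$ this becomes
\begin{align}
x=\binom{\hat{\Lambda}(\xi-\Delta b)}{\xi},
\end{align}
so $\Pi_B$ restricted to $\cS_b$ is a bijection onto $\Pi_B\cS_b$, with inverse $\xi\mapsto\binom{\hat{\Lambda}(\xi-\Delta b)}{\xi}$. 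The same is then true for the measurable subset $\cQ_b\subset\cS_b$ and its image $\Pi_B\cQ_b$.

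Next I will pass to the thickened set $\cQ_b(\epsilon)=\cQ_b+\cB_\epsilon(0)\times\{0\}^{n-k}$ defined in~\eqref{eq:epsilonthickeningdef}. Writing an arbitrary element $x'\in\cQ_b(\epsilon)$ as $x'=x+\binom{h}{0}$ with $x\in\cQ_b$ and $h\in\cB_\epsilon(0)$, the $B$-block of~$x'$ is $\xi=\Pi_B x\in\Pi_B\cQ_b$, which uniquely recovers $x$ via the above parametrization and hence uniquely recovers $h=\Pi_A x'-\hat{\Lambda}(\xi-\Delta b)$. This shows that the map
\begin{align}
\Phi_b\colon \Pi_B\cQ_b\times\cB_\epsilon(0)\to\cQ_b(\epsilon),\qquad (\xi,h)\mapsto\binom{\hat{\Lambda}(\xi-\Delta b)+h}{\xi}
\end{align}
is a bijection.

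Finally, the map $\Phi_b$ is affine in $(\xi,h)$. Writing vectors in the order $(h,\xi)$, its linear part has the block matrix
\begin{align}
D\Phi_b=\begin{pmatrix}I_k & \hat{\Lambda}\\ 0 & I_{n-k}\end{pmatrix},
\end{align}
which is block upper-triangular with unit diagonal blocks, hence $|\det D\Phi_b|=1$. The standard change-of-variables formula for Lebesgue integrals then yields
\begin{align}
\int_{\cQ_b(\epsilon)}g(x)\,dx=\int_{\Pi_B\cQ_b}\int_{\cB_\epsilon(0)} g\!\left(\binom{\hat{\Lambda}(\xi-\Delta b)+h}{\xi}\right)dh\,d\xi,
\end{align}
as claimed. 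I do not expect any serious obstacle: the only nontrivial input is the unique parametrization of $\Lambda^{-1}\cP_\Lambda$ by its $B$-projection, which is exactly Lemma~\ref{lem:sigmahattwo}; the thickening direction $\binom{h}{0}$ is transverse to the $B$-projection by construction, which is what makes both the bijectivity of $\Phi_b$ and the Jacobian computation straightforward.
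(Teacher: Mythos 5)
Your proof is correct and follows essentially the same route as the paper: you construct the same affine bijection $\Phi_b$ (the paper calls it $\phi_b$, with the arguments in the opposite order), justify its bijectivity via the unique $B$-parametrization of $\Lambda^{-1}\cP_\Lambda$ from Lemma~\ref{lem:sigmahattwo}, compute the same unit-determinant Jacobian, and invoke the change-of-variables formula. The only difference is that you spell out the injectivity argument a bit more explicitly than the paper, which simply asserts it is ``easy to check.''
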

\begin{proof}

Define
\begin{align}
\begin{matrix}
\phi_b\colon &\cB_\epsilon(0) \times \Pi_B\cQ_b& \rightarrow &\cQ_b(\epsilon)\\
& \binom{h}{\xi} & \mapsto &  \binom{\hat{\Lambda}(\xi-\Delta b)+h}{\xi}\ .
\end{matrix}\label{eq:phibmap}
\end{align}
By definition (and the fact that the map~$\varphi$ defined by~\eqref{eq:varphimap} is bijective), this map is surjective: every element $v\in \cQ_b(\epsilon)$ has the form
$v=\binom{\hat{\Lambda} z+h}{z+\Delta b}$  with $h\in\cB_\epsilon(0)$, $z\in \Pi_B\Lambda^{-1}\cP_\Lambda$
 and thus $v=\phi_b\left(\binom{h}{\xi}\right)$
 with $\xi=z+\Delta b\in \Pi_B \cQ_b$. It is also easy to check that it is injective.  Furthermore, the Jacobi-matrix  of~$\phi_b$ is
$D\phi_b=\begin{pmatrix}
I_{k} & \hat{\Lambda}\\
0 & I_{n-k}
\end{pmatrix}$
and satisfies 
\begin{align}
\det D\phi_b&=1\ .\label{eq:determinantDphione}
\end{align}
The claim therefore follows from the change-of-variables formula for Lebesgue integrals.
\end{proof}
 
 Let $f:\mathbb{R}^n\rightarrow \mathbb{R}$ be a probability density function
 of an absolutely continuous probability measure on~$\mathbb{R}^n$. Writing $f=f_{X_AX_B}$, this 
factorizes as 
\begin{align}
f_{X_AX_B}(x_A,x_B)&=f_{X_B}(x_B)f_{X_A|X_B=x_B}(x_A)\qquad\textrm{ for }\qquad(x_A,x_B)\in\mathbb{R}^k\times\mathbb{R}^{n-k}\ ,
\end{align}
where $f_{X_B}$ is the density function associated with the second marginal, and $f_{X_A|X_B=x_B}$ denotes the density associated with the conditional distribution of $X_A$ given that $X_B=x_B$. Our main technical statement is the following:
 
 \begin{lemma}\label{lem:mainintegrationlemmageneral}
 Let $f:\mathbb{R}^n\rightarrow \mathbb{R}$ be a probability density function.
 Then
\begin{equation}
 \int_{\bigcup_{b\in\mathbb{Z}^{n-k}}\cS_b(\epsilon)}f(x)d x \leq \sum_{b\in\mathbb{Z}^{n-k}} \int_{\xi\in\Pi_B\icS_b}f_{X_B}(\xi)I(b,\xi) d\xi\ , \label{eq:fxdxintegr}
 \end{equation}
  where
     \begin{align}
     I(b,\xi)&\coloneqq \int_{h\in\cB_\epsilon(0)} f_{X_A|X_B=\xi}\left(\hat{\Lambda}(\xi-\Delta b)+h\right)dh\qquad\textrm{ for }\qquad(\xi,b)\in\mathbb{R}^{n-k}\times\mathbb{Z}^{n-k}\ .\label{eq:Ibxidef}
              \end{align}
 Furthermore, we have the inequality
 \begin{align}
 \sum_{b\in\mathbb{Z}^{n-k}} \int_{\xi\in\Pi_B\icS_b}f_{X_B}(\xi) d\xi&\leq 1\ .\label{eq:inequalitytoshowlast}
 \end{align}
  \end{lemma}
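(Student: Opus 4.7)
\medskip

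\noindent\textbf{Proof plan.} The strategy is to reduce the integral over the (possibly overlapping) union $\bigcup_b\cS_b(\epsilon)$ to a sum of integrals over the individual thickened translates $\cS_b(\epsilon)$, evaluate each of these via the change-of-variables established in Lemma~\ref{lem:parametrizedintegral}, and finally control the marginal contribution using Lemma~\ref{lem:translateintersection}. The key geometric input is that the $\epsilon$-thickening in~\eqref{eq:epsilonthickeningdef} acts only on the $A$-coordinates, so $\Pi_B\cS(\epsilon)=\Pi_B\cS$ for every $\cS\subset\mathbb{R}^n$.

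\medskip

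First, I would apply the decomposition of Lemma~\ref{lem:zeromeasuredecomposition}, which writes $\bigcup_{b}\cS_b=\bigcup_b\icS_b\cup\bigcup_b\cT_b$ with the $\icS_b$ pairwise disjoint, $\icS_b\cap\cT_{b'}=\emptyset$, and each $\Pi_B\cT_b$ of Lebesgue measure zero in $\mathbb{R}^{n-k}$. Because thickening affects only the first $k$ coordinates, $\cT_b(\epsilon)\subset \mathbb{R}^k\times \Pi_B\cT_b$; since $\Pi_B\cT_b$ has $(n-k)$-dimensional measure zero, the product $\mathbb{R}^k\times\Pi_B\cT_b$ has $n$-dimensional Lebesgue measure zero, and a countable union of such sets also has $n$-dimensional measure zero. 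Therefore
\begin{align}
\int_{\bigcup_{b}\cS_b(\epsilon)}f(x)dx &= \int_{\bigcup_{b}\icS_b(\epsilon)}f(x)dx\leq \sum_{b\in\mathbb{Z}^{n-k}}\int_{\icS_b(\epsilon)}f(x)dx,
\end{align}
where the inequality is subadditivity (the $\icS_b(\epsilon)$ need not be disjoint after thickening).

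\medskip

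Next I would apply Lemma~\ref{lem:parametrizedintegral} with $\cQ_b=\icS_b\subset\cS_b$ to each summand, obtaining
\begin{align}
\int_{\icS_b(\epsilon)}f(x)dx = \int_{\xi\in\Pi_B\icS_b}\int_{h\in\cB_\epsilon(0)} f\!\left(\binom{\hat{\Lambda}(\xi-\Delta b)+h}{\xi}\right)dh\,d\xi,
\end{align}
and then factor $f=f_{X_AX_B}$ as $f_{X_AX_B}(x_A,x_B)=f_{X_B}(x_B)f_{X_A|X_B=x_B}(x_A)$, pulling the $\xi$-dependent factor $f_{X_B}(\xi)$ out of the inner integral. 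The inner integral then equals $I(b,\xi)$ as defined in~\eqref{eq:Ibxidef}, yielding~\eqref{eq:fxdxintegr}.

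\medskip

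Finally, for inequality~\eqref{eq:inequalitytoshowlast}, I would argue that the projected pieces $\Pi_B\icS_b$ are pairwise disjoint up to null sets: since $\icS_b\subset\cS_b$, Lemma~\ref{lem:translateintersection} gives
\begin{align}
\Pi_B\icS_b\cap \Pi_B\icS_{b'}\subset \Pi_B\cS_b\cap \Pi_B\cS_{b'},
\end{align}
which lies in a polytope of dimension at most $n-k-1$ and hence has Lebesgue measure zero in $\mathbb{R}^{n-k}$. Consequently
\begin{align}
\sum_{b\in\mathbb{Z}^{n-k}}\int_{\Pi_B\icS_b}f_{X_B}(\xi)d\xi = \int_{\bigcup_{b}\Pi_B\icS_b} f_{X_B}(\xi)d\xi \leq \int_{\mathbb{R}^{n-k}}f_{X_B}(\xi)d\xi=1,
\end{align}
which is the desired bound.

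\medskip

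\noindent\textbf{Expected difficulty.} The routine steps are the change of variables and the factorization of $f$. The main conceptual point requiring care is the handling of the measure-zero ``overlap set'' $\bigcup_b\cT_b$: one must verify that, even after the $A$-direction thickening, its contribution to the $n$-dimensional integral still vanishes. This follows cleanly from the product structure of the thickening and Fubini, but it is the place where the decomposition machinery of Lemmas~\ref{lem:translateintersection}--\ref{lem:zeromeasuredecomposition} is genuinely used.
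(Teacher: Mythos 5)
Your proposal is correct and follows essentially the same route as the paper's proof: it invokes the decomposition of Lemma~\ref{lem:zeromeasuredecomposition}, applies the change-of-variables of Lemma~\ref{lem:parametrizedintegral} together with the factorization $f = f_{X_B}\cdot f_{X_A|X_B}$ to obtain~\eqref{eq:fxdxintegr}, and derives~\eqref{eq:inequalitytoshowlast} from the pairwise zero-measure intersections of the $\Pi_B\cS_b$ supplied by Lemma~\ref{lem:translateintersection}. The only (valid) variation is in how you dispose of the overlap contribution: you argue directly that $\bigcup_b\cT_b(\epsilon)\subset\bigcup_b\bigl(\mathbb{R}^k\times\Pi_B\cT_b\bigr)$ is $n$-dimensionally null via Fubini, whereas the paper routes the same $\cT_b(\epsilon)$-integrals through Lemma~\ref{lem:parametrizedintegral} and kills them as $(n-k)$-dimensional integrals over null sets; both yield $I_2=0$.
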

  \begin{proof}
  Since 
  \begin{align}
  \bigcup_{b\in\mathbb{Z}^{n-k}}\cS_b(\epsilon)&= \left(\bigcup_{b\in\mathbb{Z}^{n-k}}\icS_b(\epsilon)\right)\cup
  \left(\bigcup_{b\in\mathbb{Z}^{n-k}}\cT_b(\epsilon)\right)
  \end{align}
  by Lemma~\ref{lem:zeromeasuredecomposition} and definition~\eqref{eq:epsilonthickeningdef}, we obtain
  \begin{align}
   \int_{\bigcup_{b\in\mathbb{Z}^{n-k}}\cS_b(\epsilon)}f(x)dx &\leq I_1+I_2\ ,
  \end{align}
  with
  \begin{align}
  I_1\coloneqq \sum_{b\in\mathbb{Z}^{n-k}}\int_{\icS_b(\epsilon)}f(x)dx\qquad\textrm{ and }\qquad 
  I_2\coloneqq \sum_{b\in\mathbb{Z}^{n-k}}\int_{\cT_b(\epsilon)}f(x)dx\ .
    \end{align}
    By Lemma~\ref{lem:parametrizedintegral}, the right hand side of Eq.~\eqref{eq:fxdxintegr} coincides with $I_1$. It thus suffices to show that $I_2=0$.
    But (again by Lemma~\ref{lem:parametrizedintegral}), we have 
    \begin{align}
    \int_{\cT_b(\epsilon)}f(x)dx&=\int_{\xi\in \Pi_B\cT_b}f_{X_B}(\xi)
    I(b,\xi) d\xi\\
    &\leq \int_{\xi\in \Pi_B\cT_b}f_{X_B}(\xi) d\xi\\
    &=0\ ,
    \end{align}
    as $\Pi_B\cT_b$ is a set of measure zero, see Lemma~\ref{lem:zeromeasuredecomposition}. This establishes~\eqref{eq:fxdxintegr}.  
    
        Now consider~\eqref{eq:inequalitytoshowlast}.
   We have
  \begin{align}
1&\geq  \int_{\Pi_B\bigcup_{b\in\mathbb{Z}^{n-k}}\cS_b}f_{X_B}(\xi)d\xi\\
&=\sum_{b\in\mathbb{Z}^{n-k}}\int_{\Pi_B\cS_b}f_{X_B}(\xi)d\xi\ ,
\end{align}
    where we used the fact that the projected sets~$\{\Pi_B\cS_b\}_{b\in\mathbb{Z}^{n-k}}$ have only zero-measure pairwise intersection, see Lemma~\ref{lem:translateintersection}. The claim follows since $\icS_b\subset\cS_b$ for every $b\in\mathbb{Z}^{n-k}$. 
  \end{proof}

\subsection{An upper bound on the decoding success probability}\label{sec:upperboundunwrapp}
We now return to the partial informed unwrapping problem (see Section~\ref{sec:partialinformedunwrapping}) and give upper bounds on the success probability~$p_S(\epsilon)$. Our central result is the following theorem previously summarized in Eq.~\eqref{eq:psepsilonupperboundgeneral}.
It is obtained by specializing Lemma~\ref{lem:mainintegrationlemmageneral} to the normal distribution of interest.
\begin{theorem}\label{thm:mainintegral}
Define
\begin{align}
\Gamma_1&\coloneqq - ((\Sigma^{-1})_{AA})^{-1}(\Sigma^{-1})_{AB}-\Sigma_{AB}(\Sigma_{BB})^{-1}\\
\Gamma_2&\coloneqq ((\Sigma^{-1})_{AA})^{-1}(\Sigma^{-1})_{AB}\ ,
\end{align}
and set
\begin{align}
c(\xi,\Delta b)&\coloneqq \Gamma_1 \xi+\Gamma_2 \Delta b\qquad\textrm{ for }\qquad \xi\in\mathbb{R}^{n-k}\qquad\textrm{ and }\qquad b\in\mathbb{Z}^{n-k}\ .
\end{align}
Let $Y\sim \cN(0,\Sigma_*)$ be a centered normal random variable on $\mathbb{R}^k$ with covariance matrix~$\Sigma_*$
given by the Schur complement
\begin{align}
\Sigma_*&=\Sigma_{AA}-\Sigma_{AB}(\Sigma_{BB})^{-1}\Sigma_{BA}\
\end{align} of $\Sigma$. 
Let~$f_{X_B}$ denote the probability density function of~$X_B\sim \cN(0,\Sigma_{BB})$. Then the following holds: there is a family $\{\cR_b\}_{b\in\mathbb{Z}^{n-k}}$  of measurable subsets of $\mathbb{R}^{n-k}$ such that $\cR_b\subset \Delta b+\Pi_B\Sigma^{1/2}\cP_{\Sigma^{-1/2}}$ for each $b\in\mathbb{Z}^{n-k}$, 
\begin{align}\label{eq:pepsboundgeneral1}
\sum_{b\in\mathbb{Z}^{n-k}}\int_{\cR_{b}} f_{X_B}(\xi)d\xi \leq 1\ ,
\end{align}
and the decoding success probability $p_S(\epsilon)$ is bounded by 
\begin{align}
p_S(\epsilon)&\leq \sum_{b\in\mathbb{Z}^{n-k}} \int_{\cR_b} f_{X_B}(\xi)\cdot\Pr\left[Y\in \cB_\epsilon(c(\xi,\Delta b))\right] d\xi\ .\label{eq:pepsboundgeneral2}
\end{align}
\end{theorem}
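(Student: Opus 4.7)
The plan is to derive this statement by specializing Lemma~\ref{lem:mainintegrationlemmageneral} to the Gaussian density $f=f_X$ of $X\sim\cN(0,\Sigma)$, after first showing that the success event sits inside $\bigcup_{b\in\mathbb{Z}^{n-k}}\cS_b(\epsilon)$.

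First I would translate Theorem~\ref{thm:mainvoronoireformulation} back into the $X$-picture. Since $Z=\Sigma^{-1/2}X=\Lambda X\sim\cN(0,I_n)$, that theorem gives
\begin{align}
p_S(\epsilon)=\Pr\!\left[X\in \Lambda^{-1}\cV_\Lambda+\cB_\epsilon(0)\times\Delta\mathbb{Z}^{n-k}\right].
\end{align}
The inclusion $\cV_\Lambda\subset\cP_\Lambda$ from Lemma~\ref{lem:usefulpolytope}, together with the definition $\cS_b=\binom{0}{\Delta b}+\Lambda^{-1}\cP_\Lambda$ and $\cS_b(\epsilon)=\cS_b+\cB_\epsilon(0)\times\{0\}^{n-k}$, then yields
\begin{align}
p_S(\epsilon)\leq \int_{\bigcup_{b\in\mathbb{Z}^{n-k}}\cS_b(\epsilon)} f_X(x)\,dx.
\end{align}

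Next I would invoke Lemma~\ref{lem:mainintegrationlemmageneral} with $f=f_X$ and define $\cR_b\coloneqq \Pi_B\icS_b$. Since $\icS_b\subset\cS_b=\binom{0}{\Delta b}+\Sigma^{1/2}\cP_{\Sigma^{-1/2}}$ and $\Pi_B\binom{0}{\Delta b}=\Delta b$, we immediately get $\cR_b\subset \Delta b+\Pi_B\Sigma^{1/2}\cP_{\Sigma^{-1/2}}$. Inequality~\eqref{eq:pepsboundgeneral1} is then just~\eqref{eq:inequalitytoshowlast} of Lemma~\ref{lem:mainintegrationlemmageneral}. So what remains is to identify the inner integral
\begin{align}
I(b,\xi)=\int_{h\in\cB_\epsilon(0)} f_{X_A\mid X_B=\xi}\!\left(\hat{\Lambda}(\xi-\Delta b)+h\right)dh
\end{align}
with $\Pr[Y\in\cB_\epsilon(c(\xi,\Delta b))]$ for $Y\sim\cN(0,\Sigma_*)$.

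For this last step I would use the standard Gaussian conditioning formula: if $X\sim\cN(0,\Sigma)$ then $X_A\mid X_B=\xi\sim\cN\!\left(\Sigma_{AB}\Sigma_{BB}^{-1}\xi,\,\Sigma_*\right)$, where $\Sigma_*=\Sigma_{AA}-\Sigma_{AB}\Sigma_{BB}^{-1}\Sigma_{BA}$. Writing $Y=X_A-\Sigma_{AB}\Sigma_{BB}^{-1}\xi$, a change of variables $h\mapsto h+\hat{\Lambda}(\xi-\Delta b)-\Sigma_{AB}\Sigma_{BB}^{-1}\xi$ (which preserves the $\epsilon$-ball only after translating its center) gives
\begin{align}
I(b,\xi)=\Pr\!\left[Y\in\cB_\epsilon\!\left((\hat{\Lambda}-\Sigma_{AB}\Sigma_{BB}^{-1})\xi-\hat{\Lambda}\Delta b\right)\right].
\end{align}
Substituting the definition $\hat{\Lambda}=-((\Sigma^{-1})_{AA})^{-1}(\Sigma^{-1})_{AB}$ from~\eqref{eq:hatlambdadefinition} and comparing with the announced
\begin{align}
\Gamma_1=-((\Sigma^{-1})_{AA})^{-1}(\Sigma^{-1})_{AB}-\Sigma_{AB}\Sigma_{BB}^{-1},\qquad \Gamma_2=((\Sigma^{-1})_{AA})^{-1}(\Sigma^{-1})_{AB},
\end{align}
shows that the center equals $c(\xi,\Delta b)=\Gamma_1\xi+\Gamma_2\Delta b$. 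Plugging this back into Lemma~\ref{lem:mainintegrationlemmageneral} delivers~\eqref{eq:pepsboundgeneral2}.

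There is no real obstacle here: the entire content is in Lemma~\ref{lem:mainintegrationlemmageneral} and Theorem~\ref{thm:mainvoronoireformulation}, and the only bookkeeping required is the Gaussian conditioning identity plus matching two linear expressions for $\Gamma_1$ and $\Gamma_2$. The step most likely to need care is making the identification $(\Lambda^2)_{AA}=(\Sigma^{-1})_{AA}$ and $(\Lambda^2)_{AB}=(\Sigma^{-1})_{AB}$ (which holds because $\Lambda=\Sigma^{-1/2}$ is symmetric positive definite, so $\Lambda^2=\Sigma^{-1}$), and being careful with signs when shifting the ball's center.
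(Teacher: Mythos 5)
Your proposal is correct and mirrors the paper's proof: bound $p_S(\epsilon)$ via Theorem~\ref{thm:mainvoronoireformulation} together with the inclusion $\cV_\Lambda\subset\cP_\Lambda$, apply Lemma~\ref{lem:mainintegrationlemmageneral} with $\cR_b=\Pi_B\icS_b$, and then identify $I(b,\xi)$ via the Gaussian conditioning formula. The bookkeeping you flag---using $\Lambda^2=\Sigma^{-1}$ to rewrite $\hat{\Lambda}$ and translating by the conditional mean $m_*=\Sigma_{AB}\Sigma_{BB}^{-1}\xi$---is exactly how the paper arrives at $\Gamma_1$ and $\Gamma_2$.
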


\begin{proof}
By Theorem~\ref{thm:mainvoronoireformulation} we have 
\begin{align}
p_S(\epsilon)&=\Pr\left[X\in (\cB_\epsilon(0)\times \Delta \mathbb{Z}^{n-k})+\Sigma^{1/2}\cV_{\Sigma^{-1/2}}\right]
\end{align}
where $X\sim\cN(0,\Sigma)$. 
With $\cV_{\Sigma^{-1/2}}\subset \cP_{\Sigma^{-1/2}}$ we obtain
 the upper bound
 \begin{align}
 p_S(\epsilon)&\leq \int_{\bigcup_{b\in\mathbb{Z}^{n-k}} \cS_b(\epsilon)}f_{X}(x)dx\qquad\textrm{ where }\qquad \cS_b=\binom{0}{\Delta b}+\Sigma^{1/2}\cP_{\Sigma^{-1/2}}\textrm{ for } b\in\mathbb{Z}^{n-k}\ .
  \end{align}
  With Lemma~\ref{lem:mainintegrationlemmageneral} we conclude that
  \begin{align}
   p_S(\epsilon)&\leq \sum_{b\in\mathbb{Z}^{n-k}}\int_{\xi\in\Pi_B\icS_b}f_{X_B}(\xi) I(b,\xi) d\xi\ , 
     \label{eq:PSepsilonsum}
     \end{align}
    where (see~\eqref{eq:Ibxidef})
     \begin{align}
     I(b,\xi)&\coloneqq \int_{h\in\cB_\epsilon(0)} f_{X_A|X_B=\xi}\left(\hat{\Lambda}(\xi-\Delta b)+h\right)dh\qquad\textrm{ for }(\xi,b)\in\mathbb{R}^{n-k}\times\mathbb{Z}^{n-k}\ ,
              \end{align}
with $\hat{\Lambda}=- ((\Sigma^{-1})_{AA})^{-1}(\Sigma^{-1})_{AB}$  (see Eq.~\eqref{eq:hatlambdadefinition}). 
According to~\eqref{eq:Ibxidef},
\begin{align}
I(b,\xi)=\Pr\left[\left. X_A\in \cB_\epsilon\left(\hat{\Lambda}(\xi-\Delta b)\right)\right|X_B=\xi\right]\ 
\end{align}
is the probability mass of an $\epsilon$-ball centered at~$\hat{\Lambda}(\xi-\Delta b)$ with respect to the conditional distribution $f_{X_A|X_B=x_B}$. Since we are considering $X\sim\cN(0,\Sigma)$,  the latter is normal with covariance matrix given by the Schur complement~$\Sigma_*$ of $\Sigma$ and centered at
\begin{align}
m_*&=\Sigma_{AB}(\Sigma_{BB})^{-1}\xi\ .
\end{align} 
Translating, i.e., considering $Y\coloneqq X_A-m_*$, we can write this as
\begin{align}
I(b,\xi)=\Pr\left[Y\in \cB_\epsilon\left(\hat{\Lambda}(\xi-\Delta b)-m_*\right)\right]\qquad\textrm{ where }\qquad Y\sim \cN(0,\Sigma_*)\ .\label{eq:ibxicomputed}
\end{align}
Inserting~\eqref{eq:ibxicomputed} into~\eqref{eq:PSepsilonsum} and setting~$\cR_b:=\Pi_B\icS_b$ for $b\in\mathbb{Z}^{n-k}$ gives Eq.~\eqref{eq:pepsboundgeneral2} since for $X=(X_A,X_B)\sim\cN(0,\Sigma)$, we have $X_B\sim \cN(0,\Sigma_{BB})$. Eventually, Eq.~\eqref{eq:pepsboundgeneral1} follows from Lemma~\ref{lem:mainintegrationlemmageneral}.
\end{proof}

\begin{corollary}\label{cor:upperboundpsepsilon}
Let $\lambda_{\min}(M)$ denote the smallest eigenvalue of a positive definite matrix~$M$. Let $\mu_Z$ denote the probability measure of a centered normal Gaussian distribution, $Z\sim\cN(0,I_k)$, on~$\mathbb{R}^k$. 
Then
\begin{align}
p_S(\epsilon) &\leq  \mu_Z\left(\cB_{\epsilon/\sqrt{\lambda_{\min}(\Sigma_*)}}(0)\right)\ .\label{eq:firstclaimpsepsilon}
\end{align}
In particular,
\begin{align}
p_S(\epsilon) &\leq  \mu_Z\left(\cB_{\epsilon/\sqrt{\lambda_{\min}(\Sigma)}}(0)\right)\ .\label{eq:secondclaimpsepsilon}
\end{align}
\end{corollary}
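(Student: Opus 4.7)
The plan is to start from the upper bound in Theorem~\ref{thm:mainintegral}, namely
\[
p_S(\epsilon) \leq \sum_{b\in\mathbb{Z}^{n-k}} \int_{\cR_b} f_{X_B}(\xi)\cdot\Pr\left[Y\in \cB_\epsilon(c(\xi,\Delta b))\right] d\xi,
\]
where $Y\sim\cN(0,\Sigma_*)$. The key observation is that inside the integral, only the centered-ball probability for the Gaussian $Y$ depends nontrivially on $(\xi,b)$, and we can uniformly upper bound it independently of the center. I would then factor out this uniform bound, and use the inequality~\eqref{eq:pepsboundgeneral1} to dispose of the remaining integral by the value $1$.

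For the uniform upper bound on $\Pr[Y\in\cB_\epsilon(c)]$, I would invoke Anderson's inequality~\cite{anderson55}: since $Y\sim\cN(0,\Sigma_*)$ has a symmetric log-concave (hence symmetric unimodal) density, and $\cB_\epsilon(0)\subset\mathbb{R}^k$ is a symmetric convex set, one has
\[
\Pr\left[Y\in \cB_\epsilon(c)\right]=\Pr\left[Y\in c+\cB_\epsilon(0)\right]\leq \Pr\left[Y\in\cB_\epsilon(0)\right]
\]
for every $c\in\mathbb{R}^k$. Combining this with the previous step yields
\[
p_S(\epsilon)\leq \Pr\left[Y\in\cB_\epsilon(0)\right].
\]
To convert this into a standard-normal statement, I would write $Y=\Sigma_*^{1/2}Z$ with $Z\sim\cN(0,I_k)$, and use $\|Y\|^2=\langle Z,\Sigma_* Z\rangle \geq \lambda_{\min}(\Sigma_*)\|Z\|^2$, so that $\|Y\|\leq\epsilon$ forces $\|Z\|\leq \epsilon/\sqrt{\lambda_{\min}(\Sigma_*)}$. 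This gives the first bound~\eqref{eq:firstclaimpsepsilon}.

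For the second inequality~\eqref{eq:secondclaimpsepsilon}, it suffices to prove that $\lambda_{\min}(\Sigma_*)\geq \lambda_{\min}(\Sigma)$, since then $\cB_{\epsilon/\sqrt{\lambda_{\min}(\Sigma_*)}}(0)\subset \cB_{\epsilon/\sqrt{\lambda_{\min}(\Sigma)}}(0)$ and monotonicity of $\mu_Z$ in the radius concludes the argument. Here I would use the well-known Schur-complement identity $\Sigma_*^{-1}=(\Sigma^{-1})_{AA}$ together with Cauchy's interlacing theorem applied to the positive definite matrix $\Sigma^{-1}$: its principal $k\times k$ submatrix $(\Sigma^{-1})_{AA}$ has maximal eigenvalue bounded above by $\lambda_{\max}(\Sigma^{-1})=1/\lambda_{\min}(\Sigma)$. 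Inverting yields $\lambda_{\min}(\Sigma_*)\geq \lambda_{\min}(\Sigma)$, as required.

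I expect no real obstacle in this proof; the only substantive ingredients are Anderson's inequality (which is tailor-made for exactly this centering step) and a short Schur-complement / interlacing observation. The rest is algebra and a change-of-variables. The bound is admittedly quite loose (as already noted in Section~\ref{sec:upperboundsoverview}, it discards all dependence on $\Delta$), but that looseness is built into the strategy of replacing the centered ball probability by its supremum over centers.
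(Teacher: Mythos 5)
Your proof of the first bound~\eqref{eq:firstclaimpsepsilon} is correct and follows essentially the same path as the paper: the only difference is a harmless reordering, since you invoke Anderson's inequality first (centering $\cN(0,\Sigma_*)$ directly) and then change variables and use the eigenvalue estimate, whereas the paper first passes to the standard normal $Z=\Sigma_*^{-1/2}Y$, absorbs the covariance into the ball's radius via $\Sigma_*^{-1/2}\cB_\epsilon(0)\subset\cB_{\epsilon/\sqrt{\lambda_{\min}(\Sigma_*)}}(0)$, and only then applies Anderson's inequality to $\mu_Z$. The final step, factoring out the uniform bound and using $\sum_b\int_{\cR_b}f_{X_B}\leq 1$ from Theorem~\ref{thm:mainintegral}, is identical.

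Where your argument genuinely differs is in the proof of $\lambda_{\min}(\Sigma_*)\geq\lambda_{\min}(\Sigma)$ for the second bound~\eqref{eq:secondclaimpsepsilon}. The paper (Appendix~\ref{app:proofschurcomplement}) writes $\Sigma = C + D$ with $C=\mathrm{diag}(\Sigma_*,0)$ and $D$ positive semidefinite of rank at most $r$, and applies a Weyl-type rank perturbation inequality to conclude $\lambda_{\min}(\Sigma)\leq\lambda_{r+1}(C)=\lambda_{\min}(\Sigma_*)$. You instead invoke the block-inverse identity $\Sigma_*^{-1}=(\Sigma^{-1})_{AA}$ and apply Cauchy interlacing to the principal submatrix $(\Sigma^{-1})_{AA}$ of the positive definite matrix $\Sigma^{-1}$, which yields $\lambda_{\max}(\Sigma_*^{-1})\leq\lambda_{\max}(\Sigma^{-1})$ and hence the claim. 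Both are textbook facts; your route is arguably shorter and avoids introducing the explicit rank-$r$ decomposition, at the mild cost of needing the inverse-of-a-block formula. Either way the corollary is established.
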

\begin{proof}
Let $Y\sim \cN(0,\Sigma_*)$ and let $c\in\mathbb{R}^{k}$ be arbitrary. Then
\begin{align}
\Pr\left[Y\in\cB_\epsilon(c)\right]&=\Pr\left[Z\in \Sigma^{-1/2}_*\cB_\epsilon(c)\right]\ .
\end{align}
With
$\Sigma^{-1/2}_*\cB_\epsilon(c)=\Sigma_*^{-1/2}\left(c+\cB_\epsilon(0)\right)=\Sigma_*^{-1/2}c+\Sigma_*^{-1/2}\cB_\epsilon(0)$
and $\Sigma_*^{-1/2}\cB_\epsilon(0)\subset \cB_{\epsilon\cdot \lambda_{\max}(\Sigma_*^{-1/2})}(0)=  \cB_{\epsilon/\sqrt{\lambda_{\min}(\Sigma_*)}}(0)$ we obtain
\begin{align}
\Pr\left[Y\in\cB_\epsilon(c)\right]&\leq \mu_Z\left(\cB_{\epsilon_*}(c_*)\right)\ ,
\end{align}
with $\epsilon_*=\epsilon/\sqrt{\lambda_{\min}(\Sigma_*)}$ and $c_*=\Sigma_*^{-1/2}c$.  We have thus related $\Pr\left[Y\in\cB_\epsilon(c)\right]$ to the  probability mass of a $k$-dimensional $\epsilon_*$-ball centered at~$c_*$ under the canonical Gaussian measure~$\mu_Z$. The latter is maximal when the ball is centered at the origin, i.e.,
\begin{align}
\mu_Z\left(\cB_{\epsilon_*}(c_*)\right)&\leq \mu_Z\left(\cB_{\epsilon_*}(0)\right)\qquad\textrm{ for all }c_*\in\mathbb{R}^{k}\ ,
\end{align}
according to Anderson's inequality~\cite{anderson55}. (More generally, the latter implies that for a convex set~$C\in\mathbb{R}^k$ which is symmetric about the origin $\mu_{Z}(C)\geq \mu_{Z}(C+a)$ for any $a\in\mathbb{R}^k$.) We have thus shown that
\begin{align}
\Pr\left[Y\in\cB_\epsilon(c)\right]& \leq \mu_Z\left(\cB_{\epsilon/\sqrt{\lambda_{\min}(\Sigma_*)}}(0)\right)\qquad\textrm{ for alll }c\in\mathbb{R}^k\ .
\end{align}
Inserting this into Eq.~\eqref{eq:pepsboundgeneral2} of Theorem~\ref{thm:mainintegral}, and subsequently applying Eq.~\eqref{eq:pepsboundgeneral1} of the same theorem, we obtain~\eqref{eq:firstclaimpsepsilon}.

The claim~\eqref{eq:secondclaimpsepsilon} follows immediately from~\eqref{eq:firstclaimpsepsilon}. This is because 
 the  eigenvalues of the Schur complement $\Sigma_*$ can be related to that of the positive definite matrix~$\Sigma$  using the inequality
$\lambda_{\min}(\Sigma_*)\geq \lambda_{\min}(\Sigma)$, see~\cite[Theorem 5]{smith92}. For completeness, 
we give a proof of this inequality in Appendix~\ref{app:proofschurcomplement}.
\end{proof}

\section*{Acknowledgements}
RK acknowledges support by  the DFG cluster of excellence 2111 (Munich Center for Quantum Science
and Technology). 
RK and LH acknowledge support by the German Federal Ministry of Education through the  program
Photonics Research Germany, contract no. 13N14776 (QCDA-QuantERA). 
The research of LH was partially supported by the Swiss National Science Foundation (Grant No. P2EZP2-188093).

\appendix
\section{Quadratic constraints for the degenerate Voronoi cell\label{app:quadraticform}}
\begin{lemma}\label{lem:quadraticformb}
Consider the set
\begin{align}
\cE_{b}\coloneqq \left\{z\in\mathbb{R}^{n}\ \left|\ \left\langle\Lambda\binom{h}{\Delta b},z\right\rangle \leq \frac{1}{2} \left\|\Lambda \binom{h}{\Delta b}\right\|^2\textrm{ for all }h\in\mathbb{R}^k\right.\right\}\qquad\textrm{ with } b\in\mathbb{Z}^{n-k}\backslash \{0\}\ .
\end{align}
Define $w_b(z)\coloneqq (\Lambda^2)_{AB}\Delta b-\Pi_A\Lambda z$, $\gamma_b(z)\coloneqq \|\Lambda\binom{0}{\Delta b}\|^2-2\langle z,\Lambda\binom{0}{\Delta b}\rangle$, and $\Omega:=(\Lambda_{AA})^2+\Lambda_{AB}\Lambda_{BA}$. Let $q_b:\mathbb{R}^{n}\rightarrow\mathbb{R}$ be the quadratic form
\begin{align}
q_b(z)\coloneqq \gamma_b(z)-\langle\Omega^{-1}w_b(z),w_b(z)\rangle\ .
\end{align}
Then 
\begin{align}
\cE_b=\left\{z\in\mathbb{R}^n\ |\ q_b(z)\geq 0 \right\}
\end{align}
\end{lemma}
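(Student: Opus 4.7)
The plan is to rewrite the defining inequality of $\cE_b$ as a quadratic inequality in the free parameter $h \in \mathbb{R}^k$, and then observe that a quadratic form in $h$ with positive-definite leading term is nonnegative for all $h$ if and only if its minimum value is nonnegative. The quantity $q_b(z)$ will turn out to be exactly this minimum.

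First, using that $\Lambda$ is symmetric (so $\Lambda_{AB}^T = \Lambda_{BA}$, $\Lambda_{AA}^T = \Lambda_{AA}$, $\Lambda_{BB}^T = \Lambda_{BB}$), I would expand
\begin{align}
\left\|\Lambda \binom{h}{\Delta b}\right\|^2 = h^T \Omega\, h + 2 h^T (\Lambda^2)_{AB} \Delta b + \left\|\Lambda \binom{0}{\Delta b}\right\|^2,
\end{align}
where the coefficient of $h^T(\cdot)h$ is computed as $\Lambda_{AA}^2 + \Lambda_{BA}^T \Lambda_{BA} = \Lambda_{AA}^2 + \Lambda_{AB}\Lambda_{BA} = \Omega$, matching $(\Lambda^2)_{AA}$. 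Similarly, the inner product simplifies using symmetry of $\Lambda$ as
\begin{align}
\left\langle \Lambda \binom{h}{\Delta b}, z\right\rangle = h^T \pA \Lambda z + \left\langle z, \Lambda \binom{0}{\Delta b}\right\rangle.
\end{align}

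Substituting these into the defining inequality of $\cE_b$, rearranging, and using the definitions of $w_b(z)$ and $\gamma_b(z)$, the condition $z \in \cE_b$ is equivalent to
\begin{align}
Q_z(h) := h^T \Omega\, h + 2 h^T w_b(z) + \gamma_b(z) \geq 0 \qquad \text{for all } h \in \mathbb{R}^k.
\end{align}
Since $\Lambda$ is positive definite, so is $\Lambda^2$, and therefore its principal submatrix $\Omega = (\Lambda^2)_{AA}$ is also positive definite and in particular invertible. Consequently, $Q_z$ is a strictly convex quadratic in $h$ with unique minimizer $h^\ast = -\Omega^{-1} w_b(z)$, at which the minimum value is
\begin{align}
Q_z(h^\ast) = \gamma_b(z) - \langle \Omega^{-1} w_b(z), w_b(z)\rangle = q_b(z).
\end{align}
Thus $Q_z(h) \geq 0$ for all $h \in \mathbb{R}^k$ if and only if $q_b(z) \geq 0$, proving the claim.

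The proof is essentially a completion-of-squares argument, so the only real obstacle is bookkeeping: correctly identifying the block $(\Lambda^2)_{AB}$ from $\Lambda_{AA}\Lambda_{AB} + \Lambda_{AB}\Lambda_{BB}$, and checking that symmetry of $\Lambda$ produces precisely the coefficients $\Omega$ and $(\Lambda^2)_{AB}\Delta b$ in the expansion. Once this bookkeeping is done, the argument is immediate from positive-definiteness of $\Omega$. Note that the assumption $b \neq 0$ plays no role in the argument itself — it simply excludes the degenerate case $\cE_0$ already treated separately as a hyperplane in Lemma~\ref{lem:usefulpolytope}.
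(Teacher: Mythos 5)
Your proof is correct and follows essentially the same completion-of-squares strategy as the paper: rewrite the defining constraint as a quadratic in the free variable $h$ with leading matrix $\Omega=(\Lambda^2)_{AA}$, use positive definiteness of $\Omega$ to locate the unique minimizer $h^\ast=-\Omega^{-1}w_b(z)$, and observe that the minimum value is $q_b(z)$. The only cosmetic difference is that the paper organizes the terms via the decomposition $\binom{h}{\Delta b}=\binom{h}{0}+\binom{0}{\Delta b}$ (naming the pieces $\alpha,\beta_b,\gamma_b$) whereas you expand the block products directly; both yield the identical quadratic form.
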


\begin{proof}
We have $z\in\cE_b$ if and only if 
\begin{align}
\alpha(h)+\beta_{b}(h,z)+\gamma_{b}(z)\geq 0\textrm{ for every }h\in\mathbb{R}^k
\end{align}
where 
\begin{align}
\alpha(h)&\coloneqq \left\|\Lambda\binom{h}{0}\right\|^2\ ,\\
\beta_{b}(h,z)&\coloneqq 2\left\langle \Lambda\binom{h}{0},\Lambda\binom{0}{\Delta b}-z\right\rangle\ ,\\
\gamma_{b}(z)&\coloneqq \left\|\Lambda\binom{0}{\Delta b}\right\|^2-2\left\langle z,\Lambda\binom{0}{\Delta b}\right\rangle\ .
\end{align}
Observe that
\begin{align}
\begin{matrix}
\alpha(h)&=&\langle h,\Omega h\rangle\qquad&\textrm{ with }\qquad &\Omega&\coloneqq &(\Lambda_{AA})^2+\Lambda_{AB}\Lambda_{BA}\\
\beta_{b}(h,z)&=&2\langle h,w_b(z)\rangle\qquad&\textrm{ with }\qquad &w_b(z)&\coloneqq &(\Lambda^2)_{AB}\Delta b-\Pi_A\Lambda z\ .
\end{matrix}
\end{align}
Since $\Lambda^2$ is positive definite, the quadratic form $Q(h):=\alpha(h)+\beta_{b}(h,z)+\gamma_{b}(z)$ is minimal
when $\nabla_h Q(h)=0$. This is the case for $h_*:=-\Omega^{-1}w_b(z)$. The corresponding value is
\begin{align}
Q(h_*)&=\gamma_{b}(z)-\langle \Omega^{-1}w_b(z),w_b(z)\rangle\ .
\end{align}
Since $z\in\cE_b$ if and only if $Q(h_*)\geq 0$, the claim follows.
\end{proof}

\section{A lower bound on the minimal eigenvalue of the Schur complement\label{app:proofschurcomplement}}

In this appendix, we give a proof of the following statement for completeness. 
\begin{lemma}\label{lem:mineigenvaluesschurcompl}
Let $\Sigma=\begin{pmatrix}
\Sigma_{AA} & \Sigma_{AB}\\
\Sigma_{BA} & \Sigma_{BB}
\end{pmatrix}$
be positive definite,
 with $\Sigma_{AA}\in\mathsf{Mat}_{(n-r)\times (n-r)}(\mathbb{R})$,
$\Sigma_{AB}=\Sigma_{BA}^T\in\mathsf{Mat}_{(n-r)\times r}(\mathbb{R})$, and $\Sigma_{BB}\in \mathsf{Mat}_{r\times r}(\mathbb{R})$. Let
\begin{align}
\Sigma_*&=\Sigma_{AA}-\Sigma_{AB}(\Sigma_{BB})^{-1}\Sigma_{BA}\ .
\end{align}
denote the Schur complement. Then
\begin{align}
\lambda_{\min}(\Sigma_*)\geq \lambda_{\min}(\Sigma)\ .
\end{align}
\end{lemma}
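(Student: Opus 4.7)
The plan is to use the variational characterization of $\lambda_{\min}$ together with the fact that the Schur complement arises from partial minimization of a quadratic form. Concretely, for any $x\in\mathbb{R}^{n-r}$ I will show that
\begin{align}
x^T \Sigma_* x \;=\; \min_{y\in\mathbb{R}^r} \binom{x}{y}^T \Sigma \binom{x}{y}.
\end{align}
Expanding the right-hand side gives $x^T\Sigma_{AA}x + 2x^T\Sigma_{AB}y + y^T\Sigma_{BB}y$, which is a strictly convex quadratic in $y$ (since $\Sigma_{BB}\succ 0$ as a principal submatrix of a positive definite matrix). Setting the gradient in $y$ to zero yields the unique minimizer $y^* = -(\Sigma_{BB})^{-1}\Sigma_{BA}x$, and substituting back gives exactly $x^T\Sigma_* x$ after the standard block algebra.

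With this identity in hand, the bound is almost immediate: writing $v^* \coloneqq \binom{x}{y^*}\in\mathbb{R}^n$, I have
\begin{align}
x^T\Sigma_* x \;=\; (v^*)^T \Sigma\, v^* \;\geq\; \lambda_{\min}(\Sigma)\,\|v^*\|^2 \;\geq\; \lambda_{\min}(\Sigma)\,\|x\|^2,
\end{align}
where the first inequality is the Rayleigh bound for $\Sigma$, and the second uses that $\|v^*\|^2 = \|x\|^2 + \|y^*\|^2 \geq \|x\|^2$. Taking the infimum over unit vectors $x$ and applying the variational characterization $\lambda_{\min}(\Sigma_*) = \min_{\|x\|=1} x^T\Sigma_* x$ (valid since $\Sigma_*$ is symmetric; positive definiteness need not even be invoked) yields the desired inequality $\lambda_{\min}(\Sigma_*)\geq \lambda_{\min}(\Sigma)$.

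There is no serious obstacle here; the only step requiring a small check is the Schur-complement identity, which amounts to completing the square in the block quadratic form and verifying that $\Sigma_{BB}$ is invertible. The geometric intuition is clean: $x^T\Sigma_* x$ is the squared $\Sigma$-norm of the minimum-norm extension of $x$ from the $A$-block to all of $\mathbb{R}^n$, so it can only shrink the Rayleigh quotient by at most a factor bounded by $\lambda_{\min}(\Sigma)$.
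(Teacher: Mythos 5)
Your proof is correct, and it takes a genuinely different route from the paper's. The paper decomposes $\Sigma = C + D$ with $C = \begin{pmatrix}\Sigma_* & 0\\ 0 & 0\end{pmatrix}$ and $D = \begin{pmatrix}YZ^{-1}Y^T & Y\\ Y^T & Z\end{pmatrix}$ (a positive semidefinite matrix of rank at most $r$), and then invokes a Weyl-type eigenvalue inequality $\lambda_1(C+D)\le\lambda_{1+r}(C)$ for Hermitian $C$, $D$ with $\rank D\le r$ (from Horn--Johnson, Corollary 4.3.5), noting $\lambda_{r+1}(C)=\lambda_{\min}(\Sigma_*)$. You instead use the standard characterization of the Schur complement as a partial minimization, $x^T\Sigma_*x = \min_y \binom{x}{y}^T\Sigma\binom{x}{y}$, followed by the Rayleigh quotient bound and the trivial inequality $\|v^*\|^2\ge\|x\|^2$. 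Your approach is more elementary and self-contained --- it avoids the Weyl-with-low-rank-perturbation machinery entirely and needs only completion of the square and the variational principle for $\lambda_{\min}$ --- whereas the paper's route connects directly to the stronger interlacing statement of Smith (that all eigenvalues of $\Sigma_*$ interlace those of $\Sigma$), of which the claimed inequality is just the $k=1$ case. Both proofs are valid; yours is arguably the cleaner and more instructive one for this isolated bound.
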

We refer to~\cite[Theorem 5]{smith92} for a more general statement showing that the eigenvalues of the Schur complement of a semidefinite Hermitian matrix interlace the eigenvalues of the matrix itself. The proof relies on the Weyl inequalities. Here we only need  the following statement. 
Let
\begin{align}
\lambda_{\min}(A)=\lambda_1(A)\leq \lambda_2(A)\leq \cdots\leq \lambda_{n-1}(A)\leq \lambda_n(A)=\lambda_{\max}(A)
\end{align}
denote the ordered eigenvalues of a Hermitian $n\times n$-matrix~$A$.  Then the following holds:
\begin{lemma} (see~\cite[Corollary 4.3.5]{hornjohnson2012})
Let $C,D$ be Hermitian $n\times n$-matrices. Assume that $D$ has rank at most~$r$. Then
\begin{align}
\lambda_k(C+D) \leq \lambda_{k+r}(C)\qquad\textrm{ for all }k=1,2,\ldots,n-2r\ .\label{eq:weyllike}
\end{align}
\end{lemma}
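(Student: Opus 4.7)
The plan is to prove the Weyl-type inequality $\lambda_k(C+D)\leq \lambda_{k+r}(C)$ via the Courant--Fischer min-max characterization combined with a dimension-count argument exploiting the low rank of $D$. Recall that for a Hermitian $n\times n$ matrix $A$ with eigenvalues ordered as $\lambda_1(A)\leq \cdots\leq \lambda_n(A)$, Courant--Fischer gives
\begin{align}
\lambda_j(A)=\min_{\substack{W\subset \mathbb{R}^n\\ \dim W=j}}\ \max_{\substack{x\in W\\ x\neq 0}}\ \frac{\langle x,Ax\rangle}{\langle x,x\rangle}\ .
\end{align}
The strategy is to produce, starting from a near-optimal $(k+r)$-dimensional subspace for $C$, a $k$-dimensional subspace on which $C+D$ acts identically to $C$, using the fact that many vectors are annihilated by $D$.

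The key steps, in order, are as follows. First, invoke the characterization above for $A=C$ to pick (or approach by infimum/closed-set argument) a subspace $V\subset\mathbb{R}^n$ with $\dim V=k+r$ such that $\max_{x\in V,\ x\neq 0}\langle x,Cx\rangle/\langle x,x\rangle=\lambda_{k+r}(C)$. This requires only that $k+r\leq n$, which is implied by the hypothesis $k\leq n-2r$. Second, observe that since $\mathrm{rank}(D)\leq r$, the kernel $\ker(D)$ has dimension at least $n-r$. By the standard dimension formula for subspace intersections,
\begin{align}
\dim\bigl(V\cap \ker(D)\bigr)\geq \dim V+\dim\ker(D)-n\geq (k+r)+(n-r)-n=k\ .
\end{align}
Third, choose any $k$-dimensional subspace $W\subset V\cap\ker(D)$. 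For every nonzero $x\in W$ we have $Dx=0$ (since $x\in \ker D$) and simultaneously $x\in V$, so
\begin{align}
\frac{\langle x,(C+D)x\rangle}{\langle x,x\rangle}=\frac{\langle x,Cx\rangle}{\langle x,x\rangle}\leq \lambda_{k+r}(C)\ .
\end{align}
Fourth, apply the Courant--Fischer characterization to $A=C+D$: the existence of the specific $k$-dimensional subspace $W$ above bounds the min over all such subspaces, yielding
\begin{align}
\lambda_k(C+D)\leq \max_{x\in W,\ x\neq 0}\frac{\langle x,(C+D)x\rangle}{\langle x,x\rangle}\leq \lambda_{k+r}(C)\ ,
\end{align}
which is the desired inequality~\eqref{eq:weyllike}.

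There is no real obstacle here; the proof is essentially a one-line application of min-max plus a dimension count. The only point requiring minor care is the existence of a subspace attaining the min in Courant--Fischer (which follows from compactness of the Grassmannian and continuity of the Rayleigh quotient, or can be bypassed by working with any subspace spanned by the first $k+r$ eigenvectors of $C$ in the ordering $\lambda_1(C)\leq\cdots\leq\lambda_n(C)$, on which the Rayleigh quotient is exactly bounded by $\lambda_{k+r}(C)$). The hypothesis $k\leq n-2r$ ensures both that $\lambda_{k+r}(C)$ is indexed within range and that the symmetric counterpart $\lambda_{k+r}(C+D)\geq \lambda_k(C)$ (not needed here) also holds.
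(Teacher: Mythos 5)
Your proof is correct. Note that the paper does not prove this lemma at all: it is quoted verbatim from Horn and Johnson (Corollary 4.3.5) as a known fact, so there is no in-paper argument to compare against. Your argument is the standard one behind that corollary: Courant--Fischer min-max for the increasing eigenvalue ordering, the observation that $\dim\ker(D)\geq n-r$, and the dimension count $\dim(V\cap\ker D)\geq k$ for a $(k+r)$-dimensional subspace $V$ on which the Rayleigh quotient of $C$ is at most $\lambda_{k+r}(C)$ (most cleanly, the span of the eigenvectors for the $k+r$ smallest eigenvalues, as you note, which avoids any compactness argument). The restriction to nonzero $x\in W\subset V\cap\ker D$ then gives $\lambda_k(C+D)\leq\lambda_{k+r}(C)$ directly. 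In fact your argument establishes the inequality for the wider range $k\leq n-r$; the range $k\leq n-2r$ in the cited corollary only reflects the two-sided statement (which also bounds $\lambda_{k+r}(C)\leq\lambda_{k+2r}(C+D)$, as you correctly flag as not needed here). One cosmetic point: the lemma is stated for Hermitian matrices, so the subspaces in Courant--Fischer should be taken in $\mathbb{C}^n$ rather than $\mathbb{R}^n$; the argument is unchanged, and in the paper's application the matrices are real symmetric anyway.
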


\begin{proof}[Proof of Lemma~\ref{lem:mineigenvaluesschurcompl}]
For brevity, let us write $X:=\Sigma_{AA}$, $Y=\Sigma_{AB}$ and $Z:=\Sigma_{BB}$. Note that both $X$ and $Z$ are positive definite as principal submatrices  of the positive definite matrix~$\Sigma=\begin{pmatrix}
X & Y\\
Y^T & Z
\end{pmatrix}$, and $\Sigma_*=X-YZ^{-1}Y^T$. We have
\begin{align}
\begin{pmatrix}
X & Y\\
Y^T & Z
\end{pmatrix}&=\begin{pmatrix}
X-YZ^{-1}Y^T & 0 \\
0 & 0
\end{pmatrix}+\begin{pmatrix}
YZ^{-1}Y^T & Y\\
Y^T & Z
\end{pmatrix}\ .
\end{align}
That is,
\begin{align}
\Sigma &=C+D\label{eq:inequalitySigmaschurcomplement}
\end{align}
where $C:=\begin{pmatrix}
\Sigma_* & 0\\
0 & 0
\end{pmatrix}$ and where the matrix
\begin{align}
D:=\begin{pmatrix}
YZ^{-1}Y^T & Y\\
Y^T & Z
\end{pmatrix}=\begin{pmatrix}
YZ^{-1/2}\\
Z^{1/2}
\end{pmatrix} \left( Z^{-1/2}Y^T\quad Z^{1/2}\right)
\end{align}
is positive semidefinite and of rank at most~$r$. 

Applying the inequality~\eqref{eq:weyllike} with $k=1$ to $(C,D)$ and inserting~\eqref{eq:inequalitySigmaschurcomplement} thus gives
\begin{align}
\lambda_{\min}(\Sigma)&\leq \lambda_{r+1}(C)\ .
\end{align}
This implies the claim since $\lambda_{r+1}(C)=\lambda_{\min}(\Sigma_*)$.
\end{proof}

\bibliographystyle{plain}
\bibliography{q}

\end{document}